\theoremstyle{plain}
\newtheorem{theorem}{Theorem}[section]
\newtheorem{proposition}[theorem]{Proposition}
\newtheorem{corollary}[theorem]{Corollary}
\newtheorem{fact}[theorem]{Fact}
\theoremstyle{definition}
\newtheorem{definition}[theorem]{Definition}
\newtheorem{example}[theorem]{Example}
\theoremstyle{remark}
\newtheorem{remark}[theorem]{Remark}
\newtheorem*{remark*}{Remark}
\newlength{\edgelength}
\newcommand{\trans}[4]{%
  \begin{tikzpicture}[auto, shorten >=1pt, >=latex, baseline=(l.base), inner sep=0pt, outer xsep=0.3333em]
    \node (l) {\ensuremath{#1}};%
    \setlength{\edgelength}{\widthof{\scriptsize\ensuremath{#2/#3}}+0.5cm}%
    \node[base right=\edgelength of l] (r) {\ensuremath{#4}};%
    \path[->] (l.mid east) edge node[inner sep=0pt] {\scriptsize\ensuremath{#2/#3}} (r.mid west);%
  \end{tikzpicture}%
}
\newcommand{\transa}[3]{%
  \begin{tikzpicture}[auto, shorten >=1pt, >=latex, baseline=(l.base), inner sep=0pt, outer xsep=0.3333em]
    \node (l) {\ensuremath{#1}};%
    \setlength{\edgelength}{\widthof{\scriptsize\ensuremath{#2}}+0.5cm}%
    \node[base right=\edgelength of l] (r) {\ensuremath{#3}};%
    \path[->] (l.mid east) edge node[inner xsep=0pt, inner ysep=0.2em] {\scriptsize\ensuremath{#2}} (r.mid west);%
  \end{tikzpicture}%
}
\newcommand{\ltrans}[4]{%
  \begin{tikzpicture}[auto, shorten >=1pt, >=latex, baseline=(l.base), inner sep=0pt, outer xsep=0.3333em]
    \node (l) {\ensuremath{#1}};%
    \setlength{\edgelength}{\widthof{\scriptsize\ensuremath{#2/#3}}+0.5cm}%
    \node[base right=\edgelength of l] (r) {\ensuremath{#4}};%
    \path[<-] (l.mid east) edge node[inner sep=0pt] {\scriptsize\ensuremath{#2/#3}} (r.mid west);%
  \end{tikzpicture}%
}
\newcommand{\ltransa}[3]{%
  \begin{tikzpicture}[auto, shorten >=1pt, >=latex, baseline=(l.base), inner sep=0pt, outer xsep=0.3333em]
    \node (l) {\ensuremath{#1}};%
    \setlength{\edgelength}{\widthof{\scriptsize\ensuremath{#2}}+0.5cm}%
    \node[base right=\edgelength of l] (r) {\ensuremath{#3}};%
    \path[<-] (l.mid east) edge node[inner xsep=0pt, inner ysep=0.2em] {\scriptsize\ensuremath{#2}} (r.mid west);%
  \end{tikzpicture}%
}
\tikzset{path/.style = {decorate, decoration={snake, pre length=1mm, post length=2mm, amplitude=0.3mm, segment length=1mm}}}
\newcommand{\run}[4]{%
  \begin{tikzpicture}[auto, shorten >=1pt, >=latex, baseline=(l.base), inner sep=0pt, outer xsep=0.3333em]
    \node (l) {\ensuremath{#1}};%
    \setlength{\edgelength}{\widthof{\scriptsize\ensuremath{#2/#3}}+0.5cm}%
    \node[base right=\edgelength of l] (r) {\ensuremath{#4}};%
    \path[->] ([yshift=-0.25mm] l.mid east) edge[path] node[above, inner sep=0pt] {\scriptsize\ensuremath{#2/#3}} ([yshift=-0.25mm] r.mid west);%
  \end{tikzpicture}%
}
\newcommand{\lrun}[4]{%
  \begin{tikzpicture}[auto, shorten >=1pt, >=latex, baseline=(l.base), inner sep=0pt, outer xsep=0.3333em]
    \node (l) {\ensuremath{#1}};%
    \setlength{\edgelength}{\widthof{\scriptsize\ensuremath{#2/#3}}+0.5cm}%
    \node[base right=\edgelength of l] (r) {\ensuremath{#4}};%
    \path[->] ([yshift=-0.25mm] r.mid west) edge[path] node[above, inner sep=0pt] {\scriptsize\ensuremath{#2/#3}} ([yshift=-0.25mm] l.mid east);%
  \end{tikzpicture}%
}
\newcommand{\lruna}[3]{%
  \begin{tikzpicture}[auto, shorten >=1pt, >=latex, baseline=(l.base), inner sep=0pt, outer xsep=0.3333em]
    \node (l) {\ensuremath{#1}};%
    \setlength{\edgelength}{\widthof{\scriptsize\ensuremath{#2}}+0.5cm}%
    \node[base right=\edgelength of l] (r) {\ensuremath{#3}};%
    \path[->] ([yshift=-0.25mm] r.mid west) edge[path] node[above, inner sep=0pt] {\scriptsize\ensuremath{#2}} ([yshift=-0.25mm] l.mid east);%
  \end{tikzpicture}%
}
\newcommand*{\SAut}{$\mathscr{S}$\kern-0.4ex-automaton\xspace}
\newcommand*{\SAuta}{$\mathscr{S}$\kern-0.4ex-automata\xspace}
\newcommand*{\SIAut}{$\inverse{\mathscr{S}}$\kern-0.4ex-automaton\xspace}
\newcommand*{\SIAuta}{$\inverse{\mathscr{S}}$\kern-0.4ex-automata\xspace}
\newcommand*{\GAut}{$\mathscr{G}$\kern-0.2ex-automaton\xspace}
\newcommand*{\GAuta}{$\mathscr{G}$\kern-0.2ex-automata\xspace}
\newcommand{\problem}[3][]{%
  \par\vspace{0.125cm plus 0.1cm minus 0.05cm}\begin{tabularx}{\textwidth-2\parindent}{lX}%
    \if\relax\detokenize{#1}\relax%
    \else%
      \textnormal{\textbf{Constant:}}&#1\\%
    \fi%
    \textnormal{\textbf{Input:}}&#2\\%
    \textnormal{\textbf{Question:}}&#3\\%
  \end{tabularx}\vspace{0.125cm plus 0.1cm minus 0.05cm}\par%
  }
\newcommand*{\ComplexityClass}[1]{\textsc{#1}}
\author{Daniele D'Angeli\thanks{The first two authors are members of the INdAM-GNSAGA group.}}
\affil{Università degli Studi Niccolò Cusano\\
  Via Don Carlo Gnocchi, 3\\
  00166 Roma, Italy}
\author{Emanuele Rodaro\textsuperscript{$*$}}
\affil{Department of Mathematics\\
  Politecnico di Milano\\
  Piazza Leonardo da Vinci, 32\\
  20133 Milano, Italy}
\author{Jan Philipp Wächter\thanks{The third author was funded by the Deutsche Forschungsgemeinschaft (DFG, German Research Foundation) – 492814705 – while visiting the Department of Mathematics at Politecnico di Milano. Some part of this article was written while he was affiliated with Universität des Saarlandes and partly funded by ERC grant 101097307. The listed affiliation is his current one where his research is supported by EPSRC project EP/Y008626/1 'Special Inverse Monoids: Geometry, Structure \& Algorithms'.}}
\affil{Department of Mathematics\\
  University of Manchester\\
  Oxford Road\\
  Manchester M13 9PL, UK}
\title{The Finiteness Problem for Automaton Semigroups of Extended Bounded Activity}
\begin{document}
  \maketitle
  \begin{abstract}
    We extend the notion of activity for automaton semigroups and monoids introduced by Bartholdi, Godin, Klimann and Picantin to a more general setting. Their activity notion was already a generalization of Sidki's activity hierarchy for automaton groups.
    We show that the language of $\omega$-words with infinite orbits is effectively a deterministic Büchi language for automata with bounded extended activity,
    which yields decidability of the finiteness problem for complete automaton semigroups and monoids of bounded activity (solving an open problem by Bartholdi, Godin, Klimann and Picantin).
    In fact, we obtain a stronger result also covering finitely generated subsemigroups.
    
    \noindent\footnotesize\emph{Keywords: automaton semigroup; automaton monoid; bounded automaton; finiteness problem; orbital graph.}\\
    \emph{Mathematics Subject Classification 2010: 68Q70, 20M35, 20E08}
  \end{abstract}
  
  \begin{section}{Introduction}
    The class of automaton groups is a rich source for groups with exotic properties. Probably the most famous one of these groups is Grigorchuk's group (and we refer the reader e.\,g.\ to \cite{bartholdi2021groups,nekrashevych2005self} for many more examples). Grigorchuk's group was the first example of a group with intermediate growth (see e.\,g.\ \cite{grigorchuk2008groups} for more information) and is also a Burnside group (in the sense that it is infinite while every element has finite order) as well as amenable but not elementary amenable (see e.\,g.\ \cite{juschenko2022amenability}). Interestingly, it is not finitely presented \cite{grigorchuk1999system} but, as an automaton group, it has a finite description using an automaton. This demonstrates that using automata to generate groups (and also semigroups) is algorithmically interesting as we may give the generating automaton as a finite input to an algorithm, which allows us to consider algorithms for groups that do not have finite presentations (in the sense of a pair of finitely many generators and finitely many relations over them).
    Semigroups arise naturally in this field (e.\,g.\ via the dual automaton) and have received quite a lot of attention structurally (e.\,g.\ \cite{cain2009automaton, brough2015automaton, klimann2016automaton, brough2017automaton, picantin2019automatic, bartholdi2020hierarchy, structurePart, orbitsPart, decidabilityPart, brough2023automaton}) and algorithmically (e.\,g.\ \cite{klimann2012implementing, gillibert2014finiteness, bartholdi2020hierarchy, expandabilityPart, dangeli2024freeness}).
    
    An \emph{automaton} in this context is a deterministic and complete finite-state, letter-to-letter transducer (without initial or final states), i.\,e.\ a finite graph whose edges are labeled by pairs $(a, b)$ of an input letter $a$ and an output letter $b$. The idea is that we may start reading an input word $u$ in some state $p$ (i.\,e.\ some node) and obtain a corresponding output word $v$. This way, we may associate to $p$ a function that maps $u$ to $v$ and consider the closure under composition of these functions for all states as the semigroup generated by the automaton.
    For invertible automata, these functions are bijections and we naturally obtain a group.
    Groups and semigroups arising in this way are called \emph{automaton groups} and \emph{automaton semigroups}.
    
    More structure was brought into this by Sidki's \emph{activity} notion \cite{sidki2000automorphisms} for automaton groups. His idea was to look at the structure of the cycles in the automaton (see \cite{sidki2000automorphisms} or e.\,g.\ \cite{waechter2024word} for more details). A word $u$ here is said to be \emph{active} for a state $p$ if, after reading $u$ starting in $p$, we do not end in a/the state acting as the identity on all words. If the number of active words of any length is bounded uniformly by a constant, we say that $p$ has \emph{bounded} activity. If the number of words grows linearly with their length, we speak of \emph{linear} activity. If it grows quadratically (cubically, etc.), also the activity is \emph{quadratic} (\emph{cubic}, etc.). It turns out, that the activity is either polynomial (with a natural number as the degree) or exponential.
    
    The activity of a single state naturally generalizes to the activity of a group element (which is typically given as a word over the states as generators). If two group elements have, say, bounded activity, also their product has, which allows us to speak, for example, of \emph{bounded automaton groups} (in which all elements have bounded activity). This yields an infinite hierarchy within the class of automaton groups, which consists of a polynomial and an exponential part. Interestingly, many famous and interesting examples of automaton groups are already contained in the second-lowest level of bounded activity (the lowest level belonging to \emph{finitary} automata consists precisely of the class of finite groups in the sense that every finite automaton generates a finite group and every finite group is generated by a finitary automaton; compare to \autoref{ex:finiteMonoid}). This applies, in particular, to Grigorchuk's group (and further examples, see e.\,g.\ in \cite{waechter2024word}).
    
    Maybe surprisingly, bounded automaton groups (despite their already quite complicated nature) seem to be \enquote{finite enough} such that many of their decision problems remain decidable (or have lower complexity compared to general automaton groups). For example, the order problem is decidable in such groups \cite{bondarenko2013conjugacy} while it is undecidable for general automaton groups \cite{gillibert2018automaton,bartholdi2017word}. Their finiteness problem is also decidable \cite{bondarenko21orbits} while the problem is widely suspected to be undecidable in the general case but sill open \cite[7.2.(b)]{grigorchuk2000automata} (it is known to be undecidable for automaton semigroups \cite{gillibert2014finiteness}). Furthermore, bounded automaton groups are contracting \cite{bondarenko2003postCritically} (see also e.\,g.\ \cite{nekrashevych2005self,waechter2024word}) and their word problem is therefore solvable in logarithmic space (by a deterministic Turing machine) while, for general automaton groups, there is one with $\ComplexityClass{PSpace}$-complete word problem \cite{waechter2023automaton} (see \cite{waechter2024word} for more background information; see also \cite{kotowsky2023word} for the word problem on the lowest level of the hierarchy).
    
    Likely driven by such algorithmic results (as they show that the order/torsion problem remains decidable), Bartholdi, Godin, Klimann and Picantin \cite{bartholdi2020hierarchy} generalized the notion of activity to automaton monoids.\footnote{Although they do not clearly distinguish between automaton monoids and semigroups.} In order to keep certain desirable properties of the activity notion in the group case, their generalization rather uses the output instead of the input for defining active words, which makes the geometric characterization based on the cyclic structure of the generating automaton less transparent. However, they still count those words that do not end in the/an identity state, which obviously is only a useful concept if such an identity state exists; in this case, however, the generated semigroup is necessarily a monoid. As a potential extension (that is also interesting for semigroups), they propose to replace the identity state by what is called the maximal NoCyWEx subautomaton (whose definition is a bit technical but which necessarily generates a finite subsemigroup).
    
    However, there does not seem to be any reason to only consider this particular subautomaton. Instead, we introduce a different notion of extended activity where we may consider any subset $S$ of states (which is not a restriction as any finite set of semigroup elements may always be considered to appear as single states) that is closed in the sense that we may not leave $S$ by reading any word in the generators. A(n output) word is then \emph{$S$-active} if we do not end in a state in $S$ after reading it. This strictly generalizes all previous activity notions: if we let $S$ contain only the identity state, we obtain the activity notion for groups and monoids and, if we take $S$ as the NoCyWEx part of the automaton, we obtain the above extended activity.
    
    We then consider automata which have bounded $S$-activity. We show that, if we are given the promise that $S$ generates a finite subsemigroup (as the finiteness problem for automaton semigroups is undecidable \cite{gillibert2014finiteness}), then the finiteness problem for automaton semigroups of bounded $S$-activity is decidable. This works, in particular, for all the other activity notions and generalizes the corresponding group result \cite{bondarenko21orbits} to semigroups and monoids (solving an open problem stated by Bartholdi, Godin, Klimann and Picantin \cite{bartholdi2020hierarchy}). In fact, we obtain this algorithmic result by showing that the language of $\omega$-words with an infinite orbit is recognized (by an effectively constructible) deterministic Büchi acceptor (which also is a generalization from the group case in \cite{bondarenko21orbits}). The construction of this Büchi acceptor is based on the notion of \emph{expandability} introduced by the authors in \cite{expandabilityPart} and the final connection to the finiteness problem is due to the fact that an automaton semigroup is infinite if and only if it admits an (infinite) word with an infinite orbit. As a by-product of our characterization of words with infinite orbits as Büchi languages, we obtain that an automaton semigroup of bounded $S$-activity is infinite if and only if it admits an ultimately periodic word with an infinite orbit.
    
    In fact, we obtain these results not only for the full orbits but also for the case where we consider the sub-orbits given by a regular, suffix-closed language of the state set. This shows, in particular, that the finiteness problem for finitely generated subsemigroups of automaton semigroups with bounded $S$-activity is decidable (which the order problem is a special case of; compare to \cite{bartholdi2020hierarchy}).
    
    We are confident that the techniques we use to obtain our results can also be refined to cover further algorithmic (and non-algorithmic) problems.
  \end{section}
  
  \begin{section}{Preliminaries}
    \paragraph{Sets, Semigroups, Words and (Regular) Languages.}
    We write $\mathcal{P}(A)$ for the power set of a set $A$ and $A \uplus B$ for the disjoint union of two sets $A$ and $B$. A set $S$ with an associative operation is a \emph{semigroup} and, if $S$ additionally has a neutral element (i.\,e.\ an element $e \in S$ with $es = s = se$ for all $s \in S$), it is a \emph{monoid}. We assume the reader to be familiar with the basics of semigroup theory but we will not need any advanced concepts.
    
    An \emph{alphabet} is a non-empty, finite set $\Sigma$. By $\Sigma^*$, we denote the set of \emph{finite words} over $\Sigma$ including the empty word $\varepsilon$. If we want to exclude it, we write $\Sigma^+$.
    For the set of finite words of length exactly $n \geq 0$, we write $\Sigma^n$.
    By $\Sigma^\omega$, we denote the set of \emph{$\omega$-words} (i.\,e.\ \emph{right infinite words}) over $\Sigma$. We use the terms \emph{word} for both, finite words and $\omega$-words and a \emph{language} can either be a subset of $\Sigma^*$ or a subset of $\Sigma^\omega$.
    
    \paragraph*{Regular Languages.}
    For a language $L \subseteq \Sigma^*$ (of finite words), we define the (Myhill-Nerode) relation $u \mathrel{L} v \iff (\forall x \in \Sigma^*: ux \in L \iff vx \in L)$. The classes of this relation are called the (Myhill-Nerode) \emph{classes} of $L$ and the class of $u \in \Sigma^*$ is denoted by $\mathscr{C}_L(u)$ or simply $\mathscr{C}(u)$ if the language is clear from the context.\footnote{The classes of $L$ can be identified with the states of the minimal deterministic acceptor of $L$. The initial state is $\mathscr{C}(\varepsilon)$ and the final states are the classes $C$ with $C \subseteq L$. For every class $\mathscr{C}(u)$ and every $a \in \Sigma$, we have an $a$-labeled transition from $\mathscr{C}(u)$ to $\mathscr{C}(ua)$.} A language of finite words is \emph{regular} if it has finitely many classes.
    
    \paragraph*{Büchi Acceptors and $\omega$-Regular Languages.}
    An (edge-accepting) \emph{Büchi acceptor} (BA) is a tuple $\mathcal{B} = (Z, \Gamma, \tau, z_0, \mathcal{F})$ where $Z$ is a finite set of \emph{states}, $\Gamma$ is an alphabet, $\tau \subseteq Z \times \Gamma \times Z$ is a set of \emph{transitions}, $z_0 \in Z$ is the \emph{initial} state and $\mathcal{F} \subseteq \tau$ is the set of \emph{accepting} transitions.
    In the context of transitions, we also write $\transa{y}{a}{z}$ for the element $(y, a, z) \in Z \times \Gamma \times Z$.
    
    A Büchi acceptor $\mathcal{B} = (Z, \Gamma, \tau, z_0, \mathcal{F})$ is \emph{deterministic} if we have
    \[
      d_{y, a} = \left| \{ \transa{y}{a}{z} \in \tau \mid z \in Z \} \right| = 1
    \]
    for all $y \in Z$ and $a \in \Gamma$.
    
    A \emph{run} of the Büchi acceptor $\mathcal{B}$ is an infinite sequence
    \begin{center}
      \begin{tikzpicture}[auto, shorten >=1pt, >=latex, baseline=(p0.base), inner sep=0pt, outer xsep=0.3333em]
        \setlength{\edgelength}{\widthof{\scriptsize\ensuremath{a_n}}+0.5cm}%
        
        \node (p0) {$y_0$};%
        \node[base right=\edgelength of p0] (p1) {$y_1$};
        \node[base right=\edgelength of p1] (dots) {$\dots$};
        
        \path[->]
          (p0.mid east) edge node[inner xsep=0pt, inner ysep=0.2em] {\scriptsize$a_1$} (p1.mid west)
          (p1.mid east) edge node[inner xsep=0pt, inner ysep=0.2em] {\scriptsize$a_2$} (dots.mid west)
        ;
      \end{tikzpicture}
    \end{center}
    where $\transa{y_{i - 1}}{a_i}{y_i} \in \tau$ for all $i \geq 1$.
    The input of this run is $a_1 a_2 \dots \in \Gamma^\omega$.
    Such a run is \emph{initial} if $y_0 = z_0$ and it is \emph{accepting} if $\transa{y_{i - 1}}{a_i}{y_i} \in \mathcal{F}$ for infinitely many $i \geq 1$.
    The Büchi acceptor \emph{accepts} an $\omega$-word $\alpha \in \Gamma^\omega$ if it admits an accepting initial run whose input is $\alpha$.
    The language (of $\omega$-words) of accepted words is the language \emph{recognized} (or \emph{accepted}) by the Büchi acceptor.
    A language is \emph{$\omega$-regular} if it is accepted by some Büchi acceptor and, if the Büchi acceptor is deterministic, the language is \emph{deterministic} $\omega$-regular.
    \begin{remark}
      Instead of defining the acceptance using accepting transitions we could have used the more common acceptance criterion based on accepting states (where an $\omega$-word $\alpha$ is accepted if there is an initial run of the above form with input $\alpha$ such that infinitely many of the visited states $y_i$ are accepting).
      It is not difficult to see, though, that the two notions are equivalent: A state-accepting BA is tuned into an edge-accepting one by marking all transitions going into an accepting state as accepting. In the other direction, we duplicate every state and have an accepting and a non-accepting version. Now, we let accepting transitions end in the accepting one and non-accepting transitions end in the corresponding non-accepting state.
    \end{remark}
    \begin{remark}
      It is well-known that the class of deterministic $\omega$-regular languages is a proper subclass of all $\omega$-regular languages. See, for example, \cite{perrin2004infinite} for more information on $\omega$-regular languages.
    \end{remark}
    
    \paragraph*{Semigroup Presentations and Free Products.}
    A \emph{semigroup presentation} is a pair $\langle Q \mid \mathcal{R} \rangle_\mathscr{S}$ of a set of \emph{generators} $Q$ and a (possibly infinite) set of \emph{relations} $\mathcal{R} \subseteq Q^+ \times Q^+$. Typically, we will only consider finitely generated semigroups (i.\,e.\ we assume $Q$ to be a finite set and, usually, also non-empty).
    If we denote by $\mathcal{C}$ the smallest congruence $\mathcal{C} \subseteq Q^+ \times Q^+$ with $\mathcal{R} \subseteq \mathcal{C}$, the semigroup \emph{presented} by such a presentation is $S = Q^+ / \mathcal{C}$ formed by the congruence classes $[\cdot]_{\mathcal{C}}$ of $\mathcal{C}$ with the (well-defined!) operation $[ u ]_{\mathcal{C}} \cdot [v]_{\mathcal{C}} = [uv]_{\mathcal{C}}$. Every semigroup generated by a finite, non-empty set $Q$ is presented by some semigroup presentation of this form.
    
    The free product of the semigroups $S = \langle Q \mid \mathcal{S} \rangle_\mathscr{S}$ and $T = \langle P \mid \mathcal{R} \rangle_\mathscr{S}$ is the semigroup $S \star T = \langle Q \uplus P \mid \mathcal{S} \cup \mathcal{R} \rangle_\mathscr{S}$. For example, we have $\{ p, q \}^+ = p^+ \star q^+$. Any element of $S \star T$ can be written as a non-empty sequence of blocks $\bm{p}_0 \bm{q}_1 \bm{p}_1 \ldots \bm{q}_k \bm{p}_k$ where $\bm{p}_0, \bm{p}_k \in P^*$ may be empty but the other blocks $\bm{q}_i \in Q^+$ (for $1 \leq i \leq k$) and $\bm{p}_i \in P^+$ (for $1 \leq i < k$) may not.
    \begin{remark*}
      Of course, there is also the free product of monoids (and monoid presentations). However, in this paper, we only consider free products of semigroups (in particular, we have $\{ p, q \}^* \not\simeq p^* \star q^*$).
    \end{remark*}
    
    \paragraph{\SAuta.}
    In the setting of the current paper, an \emph{automaton}\footnote{In more general automaton-theoretic terms, this would rather be called a finite-sate, letter-to-letter transducer.} is a tuple $\mathcal{T} = (Q, \Sigma, \delta)$ where $Q$ is the finite, non-empty set of \emph{states}, $\Sigma$ is an alphabet and $\delta \in Q \times \Sigma \times \Sigma \times Q$ is a set of \emph{transitions}. In the context of transitions, we also use the graphical notations $\trans{p}{a}{b}{q}$ and $\ltrans{q}{a}{b}{p}$ for the tuple $(p, a, b, q) \in Q \times \Sigma \times \Sigma \times Q$. The intuitive idea is that, if this element is contained in $\delta$, we may read the \emph{input} letter $a$ \emph{starting} in $p$ and obtain the \emph{output} letter $b$ and \emph{end} in $q$.
    
    A \emph{run} in an automaton $\mathcal{T} = (Q, \Sigma, \delta)$ is a sequence
    \begin{center}
      \begin{tikzpicture}[auto, shorten >=1pt, >=latex, baseline=(l.base), inner sep=0pt, outer xsep=0.3333em]
        \node (p0) {\ensuremath{p_0}};%
        \setlength{\edgelength}{\widthof{\scriptsize\ensuremath{a_\ell/b_\ell}}+0.5cm}%
        \node[base right=\edgelength of p0] (d) {\ensuremath{\dots}};%
        \path[->] (p0.mid east) edge node[inner sep=0pt] {\scriptsize\ensuremath{a_1/b_1}} (d.mid west);%
        \node[base right=\edgelength of d] (pl) {\ensuremath{p_\ell}};%
        \path[->] (d.mid east) edge node[inner sep=0pt] {\scriptsize\ensuremath{a_\ell/b_\ell}} (pl.mid west);%
      \end{tikzpicture}
    \end{center}
    with $\trans{p_{i - 1}}{a_i}{b_i}{p_i} \in \delta$ for all $0 < i \leq \ell$ (where $\ell \geq 0$).
    It \emph{starts} in $p_0$, \emph{ends} in $p_\ell$, its input is $a_1 \dots a_\ell$ and its output is $b_1 \dots b_\ell$.
    We will abbreviate such a run also as $\run{p_0}{a_1 \dots a_\ell}{b_1 \dots b_\ell}{p_\ell}$.
    
    An automaton $\mathcal{T} = (Q, \Sigma, \delta)$ is a \emph{complete \SAut} if we have
    \[
      d_{p, a} = \left| \{ \trans{p}{a}{b}{q} \in \delta \mid b \in \Sigma, q \in Q \} \right| = 1
    \]
    for all $p \in Q$ and $a \in \Sigma$.\footnote{i.\,e.\ if it is deterministic and complete} It is easy to see that, in such an automaton, there is a unique run starting in $p$ with input $u$ for every $p \in Q$ and $u \in \Sigma^*$. This allows us to uniquely define $p \circ u$ as the output and $p \cdot u$ as the end of this run (which is often called the \emph{section} or \emph{restriction} of $p$ at $u$ in the literature). This can naturally be extended into a left action of $Q^*$ on $\Sigma^*$ by letting $\varepsilon \circ u = u$ and, inductively, $\bm{q} p \circ u = \bm{q} \circ (p \circ u)$ for $p \in Q$ and $\bm{q} \in Q^*$. Clearly, we have $q_\ell \dots q_1 \circ u = q_\ell \circ \dots \circ q_1 \circ u$.
    
    Similarly, we may also extend the notation $p \cdot u$ into a right action of $\Sigma^*$ on $Q^*$, which is called the \emph{dual action}, by letting $\varepsilon \cdot u = \varepsilon$ and
    \[
      \bm{q} p \cdot u = \left[ \bm{q} \cdot (p \circ u) \right] \, (p \cdot u) \text{.}
    \]
    
    For any complete \SAut $\mathcal{T} = (Q, \Sigma, \delta)$, we define the relation ${=_\mathcal{T}} \subseteq Q^* \times Q^*$ by 
    \[
      \bm{p} =_\mathcal{T} \bm{q} \iff \forall w \in \Sigma^*: \bm{p} \circ w = \bm{q} \circ w \textbf{.}
    \]
    It is easy to see that ${=_\mathcal{T}}$ is a congruence. Thus, $Q^+/{{=_\mathcal{T}}}$ is a semigroup $\mathscr{S}(\mathcal{T})$ and we say that it is the \emph{semigroup generated by $\mathcal{T}$}. Any semigroup arising in this way is called an \emph{automaton semigroup}.
    
    In addition to ${=_\mathcal{T}}$, we will use similar notation. In particular, we will write $\bm{q} \in_\mathcal{T} P$ for $\bm{q} \in Q^*$ and $P \subseteq Q^*$ if there is some $\bm{p} \in P$ with $\bm{q} =_\mathcal{T} \bm{p}$. For $P \subseteq Q^+$ and $\bm{q} \in Q^+$, writing $\bm{q} \in_\mathcal{T} P^+$ means that the image of $\bm{q}$ in the semigroup $\mathscr{S}(\mathcal{T})$ is contained in the subsemigroup generated by $P$.
    
    \begin{example}[The Adding Machine]\label{ex:addingMachine}
      The classical example of a complete \SAut is the \emph{adding machine}:
      \begin{center}
        \begin{tikzpicture}[auto, shorten >=1pt, >=latex]
          \node[state] (q) {$q$};
          \node[state, right=of q] (id) {$e$};
          
          \draw[->] (q) edge[loop left] node {$1/0$} (q)
                        edge node {$0/1$} (id)
                    (id) edge[loop right] node[align=center] {$0/0$\\$1/1$} (id);
        \end{tikzpicture}
      \end{center}
      Clearly, we have $e \circ u = u$ for all $u \in \{ 0, 1 \}^*$. In order to understand the action of $q$, it is best to look at an example. We have:
      \begin{align*}
        q \circ 000 &= 100
        & q^3 \circ 000 &= q \circ 010 = 110 \\
        q^2 \circ 000 &= q \circ 100 = 010
        & q^4 \circ 000 &= q \circ 110 = 001
      \end{align*}
      More generally, if we denote the binary representation of $n \in \mathbb{N}$ of length $\ell$ in reverse/with the least significant bit on the left as $\operatorname{bin}_\ell n$, we have $q \circ \operatorname{bin}_\ell n = \operatorname{bin}_\ell (n + 1)$ for all $n \in \mathbb{N}$ (with sufficiently large $\ell$). This shows ${q}^i \neq_{\mathcal{T}} q^j$ for all $i \neq j$ (with $i, j > 0$). By identifying $q^0$ with $e$, we obtain that the semigroup generated by the adding machine is (isomorphic to) the free monogenic monoid $q^*$.
    \end{example}
    
    \begin{example}[Finite Monoids]\label{ex:finiteMonoid}
      Consider the finite monoid $M = \{ e, p, q \}$ with $e^2 = e$, $ep = p = pe$, $eq = q = qe$, $pq = p = pp$ and $qp = q = qq$ in $M$.
      In particular, we have $ps = p$ and $qs = q$ for all $s \in M$.
      Consider the complete \SAut\footnote{We use this notation to mean that all transitions starting in $p$ have output $p$ and all transitions starting in $q$ have output $q$.} $\mathcal{T} = (M, M, \delta)$
      \begin{center}
        \begin{tikzpicture}[auto, shorten >=1pt, >=latex]
          \node[state] (p) {$p$};
          \node[state, below=0.5cm of p] (q) {$q$};
          \node[state, anchor=base] (e) at ($(p.base)!0.5!(q.base)+(3cm, 0pt)$) {$e$};
          
          \draw[->] (p) edge[sloped] node[align=center] {$e, p, q / p$} (e)
                    (q) edge[sloped] node[swap, align=center] {$e, p, q / q$} (e)
                    (e) edge[loop right] node[align=center] {$e/e$\\$p/p$\\$q/q$} (e)
          ;
        \end{tikzpicture}
      \end{center}
      where $M$ is both the state set and the alphabet.
      For a word $w \in M^* \cup M^\omega$ and a letter $a \in M$, we have
      \[
        e \circ aw = aw, \quad p \circ aw = pw \quad \text{and} \quad q \circ aw = qw \text{.}
      \]
      This shows that we have all the relations of $M$ also in $\mathscr{S}(\mathcal{T})$: $e$ is the neutral element and, for any $s \in M$, we have $ps \circ aw = p \circ (s \circ a)w = pw = p \circ aw$ and, thus, $ps =_\mathcal{T} p$ (and an analogous statement for $q$).
      
      This construction generalizes to any finite monoid $M$ (and, thus, also for any finite group):
      It is generated by the complete \SAut $\mathcal{T} = (M, M, \delta)$ with
      \[
        \delta = \{ \trans{m}{n}{mn}{e} \mid m, n \in M \} \text{.}
      \]
    \end{example}
    
    \begin{example}[Finite Semigroups]\label{ex:finiteSemigroup}
      The previous example heavily used the neutral element of the monoid.
      However, any finite semigroup $S$ (even if it is not a monoid) is generated by a complete \SAut (this is due to \cite[Proposition~4.6]{cain2009automaton}).
      The construction for this first adjoins a new element $e$ to $S$ with $e^2 = e$ and $es = s = se$ for all $s \in S$ to obtain the monoid $S^e$.
      It is easy to see that $S$ acts faithfully on $S^e$ by left translation (see e.\,g.\ \cite[Theorem 1.1.2]{howie1995fundamentals}), which basically shows that $(S, S^e, \delta)$ with
      \begin{align*}
        \delta &= \{ \trans{s}{t}{st}{s} \mid s \in S, t \in S^e \}
      \end{align*}
      is a complete \SAut generating $S$ (see \cite[Proposition~4.6]{cain2009automaton} for a full proof).
    \end{example}
    
    \begin{example}[Grigorchuk's Group]\label{ex:grigorchuk}
      A very famous automaton is the complete \SAut
      \begin{center}
        \begin{tikzpicture}[auto, shorten >=1pt, >=latex, baseline=(id.base)]
          \node[state] (b) {$b$};
          \node[state, above right=of b] (a) {$a$};
          \node[state, below right=of b] (d) {$d$};
          \node[state, below right=of a] (c) {$c$};
          \node[state, right=of c] (id) {$\textnormal{id}$};
          
          \draw[->] (a) edge[bend left] node[align=center] {$0/1$\\$1/0$} (id)
                    (b) edge node {$0/0$} (a)
                    (b) edge node[swap] {$1/1$} (c)
                    (c) edge node {$0/0$} (a)
                    (c) edge node {$1/1$} (d)
                    (d) edge[bend right] node[swap] {$0/0$} (id)
                    (d) edge node {$1/1$} (b)
                    (id) edge[loop right] node[align=center] {$0/0$\\$1/1$} (id)
                    ;
        \end{tikzpicture},
      \end{center}
      whose generated semigroup is Grigorchuk's group (which coincides also with the generated monoid and group).
      We will not elaborate on this further but refer the reader, for example to
      \cite{grigorchuk2008groups, nekrashevych2005self} for details.
    \end{example}
    
    \begin{example}[Running Example]\label{ex:runningAut}
      We will use the following complete \SAut as a running example for demonstrating our constructions:
      \begin{center}
        \begin{tikzpicture}[auto, shorten >=1pt, >=latex]
          \node[state] (p) {$p$};
          \node[state, right=4cm of p] (q) {$q$};
          \node[state, below=of p] (e) {$e$};
          \node[state, below=of q] (r) {$r$};
          
          \draw[->] (p) edge[bend left] node {$1/0$} (q)
                        edge node[swap] {$0/1$} (r)
                        edge node[swap] {$2/1$} (e)
                    (q) edge node[above] {$0/2$, $1/2$} (p)
                        edge node {$2/2$} (r)
                    (r) edge node[below] {$0/1$, $1/1$, $2/1$} (e)
                    (e) edge[loop left] node[align=center] {$0/0$\\$1/1$\\$2/2$} (e)
          ;
        \end{tikzpicture}
      \end{center}
    \end{example}
    
    \paragraph*{Union and Power Automata.}
    The \emph{union} of two automata $\mathcal{T}_1 = (Q_1, \Sigma_1, \delta_1)$ and $\mathcal{T}_2 = (Q_2, \Sigma_2, \delta_2)$ is the automaton $\mathcal{T}_1 \cup \mathcal{T}_2 = (Q_1 \cup Q_2, \Sigma_1 \cup \Sigma_2, \delta_1 \cup \delta_2)$. The most common case of a union automaton is when the automata use the same alphabet but their state sets are disjoint. In this case, the union of two complete \SAuta is a complete \SAut.
    
    The \emph{composition} of two automata $\mathcal{T}_2 = (Q_2, \Sigma, \delta_2)$ and $\mathcal{T}_1 = (Q_1, \Sigma, \delta_1)$ over the same alphabet is $\mathcal{T}_2 \circ \mathcal{T}_1 = (Q_2 Q_1, \Sigma, \delta_2 \circ \delta_1)$ where $Q_2 Q_1 = \{ q_2 q_1 \mid q_1 \in Q_1, q_2 \in Q_2 \}$ is the Cartesian product and
    \[
      \delta_2 \circ \delta_1 = \{ \trans{p_2 p_1}{a}{c}{q_2 q_1} \mid \exists b \in \Sigma: \trans{p_1}{a}{b}{q_1} \in \delta_1, \trans{p_2}{b}{c}{q_2} \in \delta_2 \}
      \text{.}
    \]
    The composition of two complete \SAuta over the same alphabet is a complete \SAut.
    
    The \emph{$k$-th power} of an automaton $\mathcal{T} = (Q, \Sigma, \delta)$ (with $k \geq 1$) is the $k$-fold composition of $\mathcal{T}$ with itself. Any power of a complete \SAut is a complete \SAut and the action of $\bm{p} \in Q^+$ as a state sequence over $Q$ (with respect $\mathcal{T}$) is the same as the action of $\bm{p} \in Q^{|\bm{p}|}$ as a state of $\mathcal{T}^{|\bm{p}|}$, which makes the notation $\bm{p} \circ u$ unambiguous. The same holds for $\bm{p} \cdot u$ by the construction of the power automaton.
    The semigroup generated by the union of a complete \SAut with some of its powers is the same as the semigroup generated by the original automaton.
    This allows us to use the power automaton construction to make sure that any fixed state sequence acts in the same way as a single state of the automaton. For our results later on, it will be important to note here that power and union automata are computable.
    
    For the remainder of this paper, we fix an arbitrary complete \SAut $\mathcal{T} = (Q, \Sigma, \delta)$.
    \paragraph{Orbits.}
    For a language $R \subseteq Q^*$, the \emph{$R$-orbit} of a word $w \in \Sigma^* \cup \Sigma^\omega$ is 
    \[
      R \circ w = \{ \bm{r} \circ w \mid \bm{r} \in R \} \text{,}
    \]
    which is a subset of $\Sigma^{|w|}$ for $w \in \Sigma^*$ and a subset of $\Sigma^\omega$ for $w \in \Sigma^\omega$.
    The $Q^*$-orbit of $w$ is also simply called the \emph{orbit} of $w$. The \emph{orbital transducer}\footnote{The orbital transducer is basically the part reachable from $w$ in the $|w|$-th power of the dual of $\mathcal{T}$ (up to mirroring the words).} $\mathcal{T} \circ w$ of $w \in \Sigma^*$ is the complete \SAut with state set $Q^* \circ w$ and alphabet $Q$ whose transitions are given by
    \[
      \{ \trans{u}{p}{p \cdot u}{p \circ u} \mid u \in Q^* \circ w, p \in Q \} \text{.}
    \]
    We designate $w$ as the \emph{root} of $\mathcal{T} \circ w$. Since $\bm{p} \circ u$ is a left action, it is natural to write the runs of $\mathcal{T} \circ w$ from right to left. Thus, we write
    \begin{center}
      \begin{tikzpicture}[auto]
        \node (vl) {$u_\ell$};
        \node[right=of vl] (dots) {$\dots$};
        \node[right=of dots] (v1) {$u_1$};
        \node[right=of v1] (v0) {$u_0$};
        
        \path[<-] (vl) edge node {\scriptsize$p_\ell/q_\ell$} (dots)
                  (dots) edge node {\scriptsize$p_2/q_2$} (v1)
                  (v1) edge node {\scriptsize$p_1/q_1$} (v0)
        ;
      \end{tikzpicture}
    \end{center}
    to indicate that $\mathcal{T} \circ w$ contains a run from $u_0$ to $u_\ell$ with input $p_1 \dots p_\ell$ and output $q_1 \dots q_\ell$ (where $p_1, \dots, p_\ell, q_1, \dots, q_\ell \in Q$). Notice that, for $p_\ell \dots p_1 = \bm{p}$ and $q_\ell \dots q_1 = \bm{q}$, such a run means that we have $u_\ell = \bm{p} \circ u_0 = p_\ell \dots p_1 \circ u_0$ and $q_\ell \dots q_1 = \bm{q} = \bm{p} \cdot u_0 = p_\ell \dots p_1 \cdot u_0$, which justifies this somewhat reverse notation. As a short-hand notation, we also write $\lrun{u_\ell}{\bm{p}}{\bm{q}}{u_0}$ for the same run. Often, we will not be interested in the input of the run and simply write $\lrun{v}{}{\bm{q}}{u}$ to indicate that there is a run $\lrun{v}{\bm{p}}{\bm{q}}{u}$ for some $\bm{p} \in Q^+$.
    
    \begin{figure}
      \begin{subfigure}{\linewidth}\centering
        \begin{tikzpicture}[auto, shorten >=1pt, >=latex, node distance=2.5cm]
          \node[state] (0) {$0$};
          \node[state, right=of 0] (1) {$1$};
          \node[state, right=of 1] (2) {$2$};
          
          \draw[->] (0) edge[bend left] node[align=center] {$p/r$\\$r/e$} (1)
                        edge[bend right] node[sloped, below, pos=0.1] {$p/q$} (2)
                    (1) edge[loop above] node {$r/e$} (1)
                        edge node {$p/q$} (0)
                        edge[bend left] node {$q/p$} (2)
                    (2) edge node[below] {$p/e$, $r/e$} (1)
                        edge[loop above] node {$q/r$} (2)
          ;
        \end{tikzpicture}
        \caption{The orbital transducer $\mathcal{T} \circ 0 = \mathcal{T} \circ 1 = \mathcal{T} \circ 2$ (which is equal to the dual automaton); the $e/e$-self-loops at every state are not drawn.}
      \end{subfigure}\\%
      \begin{subfigure}{\linewidth}\centering
        \begin{tikzpicture}[auto, shorten >=1pt, >=latex, node distance=2cm]
          \tikzstyle{active} = [fill=gray]
          \tikzstyle{eedge} = []
          \tikzstyle{aedge} = [very thick]
          
          \node[state, active] (02) {$02$};
          \node[state, right=of 02] (11) {$11$};
          \node[state, right=of 11, gray] (00) {$00$};
          \node[state, right=of 00] (10) {$10$};
          \node[state, above=of 10, active] (21) {$21$};
          \node[state, below=of 00, gray] (01) {$01$};
          \node[state, below=of 10, active] (20) {$20$};
          \node[state, above=of 02] (12) {$12$};
          \node[state, anchor=base] at ($(12.base)!0.5!(21.base)+(0pt, 2cm)$) (22) {$22$};
          
          \draw[->]
            (00) edge[eedge, gray] node[sloped, above, pos=0.25] {$p/e$} (11)
                 edge[aedge, gray] node[sloped, above, pos=0.25] {$q/r$} (21)
                 edge[eedge, gray] node[above, pos=0.75] {$r/e$} (10)
            (01) edge[eedge, gray] node[sloped, below, pos=0.25] {$p/e$, $r/e$} (11)
                 edge[aedge, gray] node[sloped, below] {$q/q$} (20)
            (02) edge[eedge, bend left] node[sloped, above, pos=0.75] {$p/e$} (11)
                 edge[eedge, out=30, in=-170] node[sloped, above, pos=0.25] {$q/e$} (21)
                 edge[eedge, bend right] node[sloped, below] {$r/e$} (12)
            (10) edge[aedge, out=-150, in=-30] node[sloped, above, pos=0.15] {$p/p$} (02)
                 edge[aedge] node[sloped, above] {$q/r$} (21)
                 edge[eedge, loop right] node[above, pos=0.25] {$r/e$} (10)
            (11) edge[aedge] node[below] {$p/p$} (02)
                 edge[aedge] node[sloped, above, pos=0.25] {$q/q$} (20)
                 edge[eedge, out=45, in=90, looseness=5] node[sloped] {$r/e$} (11)
            (12) edge[aedge, bend right] node[sloped] {$p/r$} (02)
                 edge[eedge] node[pos=0.25, above] {$q/e$} (21)
                 edge[eedge, loop above] node {$r/e$} (12)
            (20) edge[eedge] node[sloped, above] {$p/e$, $r/e$} (10)
                 edge[eedge, bend right=50] node[sloped, below, pos=0.1] {$q/e$} (21)
            (21) edge[eedge] node[sloped, above] {$p/e$, $r/e$} (11)
                 edge[eedge, loop above] node {$q/e$} (21)
            (22) edge[eedge] node[sloped, above] {$p/e$, $r/e$} (12)
                 edge[eedge] node[sloped, above] {$q/e$} (21) 
          ;
        \end{tikzpicture}
        \caption{The orbital transducer $\mathcal{T} \circ 22$;
          the $e/e$-self-loops at every state are not drawn.
          Active words are marked in gray.
          Edges with output different to $e$ are drawn in bold.
          The states $00$ and $01$ are not reachable from $22$ and, thus, not part of the orbital transducers.
          They are drawn here nevertheless to also illustrate $\mathcal{T} \circ 00$ and $\mathcal{T} \circ 01$ (as well as all other orbital transducers for words of length $2$).
        }
      \end{subfigure}
      \caption{Orbital transducers for the automaton from the running example.}\label{fig:orbTransRunning}
    \end{figure}
    \begin{example}[Running Example]
      Recall $\mathcal{T}$ from \autoref{ex:runningAut}.
      The orbital transducers $\mathcal{T} \circ 0$ (which is equal to $\mathcal{T} \circ 1$, $\mathcal{T} \circ 2$ and the dual automaton) and the orbital transducer $\mathcal{T} \circ 22$ are shown in \autoref{fig:orbTransRunning}.
      Observe that, reading $22$ from state $p$ yields output $12$ and we end in state $e$; this yields the edge $\trans{22}{p}{e}{12}$ in $\mathcal{T} \circ 22$.
      If we read $10$ in state $p$, we obtain the output $02$ and return to state $p$ in the end; this yields the edge $\trans{10}{p}{p}{02}$ in $\mathcal{T} \circ 22$.
      All other edges arise in the same manner.
    \end{example}
    
    \begin{figure}
      \begin{subfigure}{0.2\linewidth}\centering
        \begin{tikzpicture}[auto, shorten >=1pt, >=latex]
          \node[state] (0) {$0$};
          \node[state, below=of 0] (1) {$1$};
          
          \draw[->] (0) edge[loop right] node {$e/e$} (0)
                        edge node {$q/e$} (1)
                    (1) edge[loop right] node {$e/e$} (1)
                        edge[bend left] node {$q/q$} (0)
          ;
        \end{tikzpicture}
        \caption{The orbital transducer $\mathcal{T} \circ 0$ (equal to the dual automaton) for the adding machine.}
      \end{subfigure}\hspace*{\fill}%
      \begin{subfigure}{0.3\linewidth}\centering
        \begin{tikzpicture}[auto, shorten >=1pt, >=latex]
          \node[state] (00) {$00$};
          \node[state, below=of 00] (01) {$01$};
          \node[state, below=of 01] (10) {$10$};
          \node[state, below=of 10] (11) {$11$};
          
          \draw[->] (00) edge[loop right] node {$e/e$} (00)
                         edge node {$q/e$} (01)
                    (01) edge[loop right] node {$e/e$} (01)
                         edge node {$q/e$} (10)
                    (10) edge[loop right] node {$e/e$} (10)
                         edge node {$q/e$} (11)
                    (11) edge[loop right] node {$e/e$} (11)
                         edge[bend left] node {$q/q$} (00)
          ;
        \end{tikzpicture}
        \caption{The orbital transducer $\mathcal{T} \circ 00$ for the adding machine.}
      \end{subfigure}\hspace*{\fill}%
      \begin{subfigure}{\dimexpr0.45\linewidth-2ex}\centering
        \begin{tikzpicture}[auto, shorten >=1pt, >=latex]
          \node[state, ellipse] (0) {$\operatorname{bin}_\ell 0$};
          \node[state, below=of 0, ellipse] (1) {$\operatorname{bin}_\ell 1$};
          \node[below=of 1] (dots) {$\vdots$};
          \node[state, below=of dots, ellipse] (11) {$\operatorname{bin}_\ell 2^\ell - 1$};
          
          \draw[->] (0) edge[loop right, looseness=5] node {$e/e$} (0)
                        edge node {$q/e$} (1)
                    (1) edge[loop right, looseness=5] node {$e/e$} (1)
                        edge node {$q/e$} (dots)
                    (dots) edge node {$q/e$} (11)
                    (11) edge[loop right, looseness=5] node {$e/e$} (11)
                         edge[bend left=50] node {$q/q$} (0)
          ;
        \end{tikzpicture}
        \caption{The orbital transducer $\mathcal{T} \circ 0^\ell$ for the adding machine.}
      \end{subfigure}%
      \caption{Orbital transducers for the adding machine.}\label{fig:orbTransAddingMachine}
    \end{figure}
    \begin{example}[Orbital Transducers of the Adding Machine]
      Recall the adding machine $\mathcal{T}$ from \autoref{ex:addingMachine}.
      Also recall the notation $\operatorname{bin}_\ell n$ from there.
      The orbital transducers $\mathcal{T} \circ 0$, $\mathcal{T} \circ 00$ and a general depiction of $\mathcal{T} \circ 0^\ell$ may be found in \autoref{fig:orbTransAddingMachine}.
      The orbital transducer $\mathcal{T} \circ 0$ coincides with what is commonly referred to as the \emph{dual automaton} of $\mathcal{T}$.
      Observe that, if we start reading $1^\ell$ in $q$, we output $0^\ell$ while remaining in $q$ the entire time.
      This yields the $q/q$-edge from $1^\ell$ to $0^\ell$.
      If we read any other word (i.\,e.\ one that contains at least one $0$), we end in the state $e$.
      This is the reason why any other $q$-edge has output $e$.
    \end{example}
    
    \begin{figure}\centering
      \begin{tikzpicture}[auto, shorten >=1pt, >=latex, node distance=2cm]
        \node[state] (10) {$10$};
        \node[state, right=of 10] (00) {$00$};
        \node[state, right=of 00] (01) {$01$};
        \node[state, right=of 01] (11) {$11$};
        
        \draw[->] (10) edge[loop above] node[align=center] {$b, d/a$\\$c/\operatorname{id}$} (10)
                       edge[bend left] node {$a/\operatorname{id}$} (00)
                  (00) edge[loop above] node {$d/\operatorname{id}$} (00)
                       edge[bend left] node {$a/\operatorname{id}$} (10)
                       edge[bend left] node {$b, c/\operatorname{id}$} (01)
                  (01) edge[loop above] node {$d/\operatorname{id}$} (01)
                       edge[bend left] node {$a/\operatorname{id}$} (11)
                       edge[bend left] node {$b, c/\operatorname{id}$} (00)
                  (11) edge[loop above] node[align=center] {$b/d$\\$c/b$\\$d/c$} (11)
                        edge[bend left] node {$a/\operatorname{id}$} (01)
        ;
      \end{tikzpicture}
      \caption{The orbital transducer $\mathcal{T} \circ 00$ for Grigorchuk's group.}\label{fig:orbTransGrig}
    \end{figure}
    \begin{example}[Orbital Transducers for Grigorchuk's Group]
      Recall $\mathcal{T}$ generating Grigorchuk's group from \autoref{ex:grigorchuk}.
      The orbital transducer $\mathcal{T} \circ 00$ is drawn in \autoref{fig:orbTransGrig}
    \end{example}
    
    \paragraph{The Product $\mathcal{T} \circ w \times R$.}
    For some language $R \subseteq Q^*$, the \emph{product} $\mathcal{T} \circ w \times R$ is a possibly infinite \SAut over the alphabet $Q$. Its states are of the form $(u, C)$ where $u \in Q^* \circ w$ and $C$ is a class of $R$ and we have a transition $\ltrans{(v, \mathscr{C}(\bm{r}p))}{p}{q}{(u, \mathscr{C}(\bm{r}))}$ if we have a transition $\ltrans{v}{p}{q}{u}$ in $\mathcal{T} \circ w$. We restrict $\mathcal{T} \circ w \times R$ to the part reachable from its \emph{root} $(w, \mathscr{C}(\varepsilon))$. We use the same conventions to write transitions and runs in $\mathcal{T} \circ w \times R$ as in $\mathcal{T} \circ w$. In particular, we omit the input label if we are not interested in it.
    
    Note that we have a run $\lrun{(u, C)}{\bm{p}}{\bm{q}}{(w, \mathscr{C}(\varepsilon))}$ in $\mathcal{T} \circ w \times R$ if and only if we have $u = \bm{p} \circ w$, $\bm{q} = \bm{p} \cdot w$ and $C = \mathscr{C}(\bm{p})$. Furthermore, $\mathcal{T} \circ w \times R$ is finite if $R$ is regular.
    
    \paragraph{Expandability.}
    A finite word $w \in \Sigma^*$ is \emph{$R$-expandable} for $R \subseteq Q^*$ if there is some $x \in \Sigma^*$ with $|R \circ w| < |R \circ wx|$. In this case, $x$ is said to \emph{$R$-expand} $w$.
    \begin{remark*}
      The notion of expandability was introduced in \cite{expandabilityPart} where it was shown that it is decidable whether a given word is expandable with respect to the action of a given \SAut and that there is a bound on the length of the expanding suffix.
    \end{remark*}
  \end{section}
  \begin{section}{Generalized Activity}
    We say that a subset $S \subseteq Q$ of states of $\mathcal{T}$ is \emph{closed} (under the dual action) if we have
    \[
      S \cdot \Sigma^* = \{ s \cdot u \mid s \in S, u \in \Sigma^* \} \subseteq S \text{.}
    \]
    For such a closed subset $S$, we may consider the restriction $\delta|_S$ of $\delta$ into a relation $\delta|_S \subseteq S \times \Sigma \times \Sigma \times S$, which yields a complete \SAut $\mathcal{T}|_S = (S, \Sigma, \delta|_S)$. Notice that the action $\bm{s} \circ u$ of some $\bm{s} \in S^+$ is the same when we consider it with respect to $\mathcal{T}$ or with respect to $\mathcal{T}|_S$. This justifies that we define the semigroup generated by $\mathcal{T}|_S$ as $S^+/{=_{\mathcal{T}}}$ (which is the subsemigroup generated by $S$ in the semigroup generated by $\mathcal{T}$).
    
    We are mostly interested in the case that this generated semigroup is finite. Therefore, we fix an arbitrary closed subset $S \subseteq Q$ such that the generated semigroup $S^+/{=_{\mathcal{T}}}$ is finite.
    
    \begin{remark}\label{rem:SIsSubsetOfStates}
      Only considering subsets of states here is not a real restriction: If we have that $S \subseteq Q^+$ generates a finite subsemigroup in $\mathscr{S}(\mathcal{T})$, we may assume that $S$ is finite (by choosing one representative for each of the finitely many elements in the subsemigroup). As briefly mentioned above, we may then pass to a union of suitable power automata of $\mathcal{T}$, which allows us to finally assume $S \subseteq Q$.
    \end{remark}
    
    For a state sequence $\bm{p} \in Q^+$, we define its set of \emph{$S$-active} words of \emph{length} $n \in \mathbb{N}$ as
    \[
      A_{\bm{p}}(n) = \{ v \mid \exists u \in \Sigma^n: \bm{p} \circ u = v, \bm{p} \cdot u \not\in_\mathcal{T} S^+ \}
    \]
    and the set of \emph{$S$-active} words as $A_{\bm{p}} = \bigcup_{n \geq 0} A_{\bm{p}}(n)$. The \emph{$S$-activity} of $\bm{p}$ is the growth of the function $n \mapsto \alpha_{\bm{p}}(n) = |A_{\bm{p}}(n)|$.
    
    Please note that, instead of considering $\bm{p}$ as a state sequence over $Q$, we may also consider it as a single state of $\mathcal{T}^{|\bm{p}|}$ and obtain the same sets of active words.
    
    With regard to orbital transducers, a word $v \in \Sigma^n$ is $S$-active (i.\,e.\ in $A_{\bm{p}}(n)$) if and only if there is some $u \in \Sigma^n$ with a run $\lrun{v}{\bm{p}}{\bm{q}}{u}$ in $\mathcal{T} \circ u$ such that $\bm{q} \not\in_{\mathcal{T}} S^+$. In other words, we collect all words with an in-going run in any orbital transducer $\mathcal{T} \circ w$ whose output is not labeled by an element contained in the subsemigroup generated by $S$.
    
    \begin{remark*}
      Our notion of $S$-activity generalizes the activity notion for automaton monoids introduced in \cite{bartholdi2020hierarchy}. An automaton monoid is an automaton semigroup generated by a complete \SAut $\mathcal{T} = (Q, \Sigma, \delta)$ containing an identity state $e \in Q$ (i.\,e.\ we have $\trans{e}{a}{a}{e} \in \delta$ for all $a \in \Sigma$).\footnote{Please note that \cite{bartholdi2020hierarchy} uses the slightly misleading term \enquote{automaton semigroup} for what we call an automaton monoid here.} For such an automaton, the set $\{ e \}$ is clearly closed and its generated subsemigroup is the trivial monoid (and, thus, in particular finite). The activity defined in \cite{bartholdi2020hierarchy} is then precisely the $\{ e \}$-activity in our terminology. In fact, \cite{bartholdi2020hierarchy} already contains an extension of the activity notion where the set $S$ is given by the NoCyWEx part of the generating automaton (which always generates a finite subsemigroup).
      
      The authors of \cite{bartholdi2020hierarchy} use NoCyWEx as an acronym for \enquote{no cycles with exit}.
      We do not strictly need the NoCyWEx notion for (sub)automata for our results but we will compare our generalization to it.
      Therefore, it seems worthwhile to give a definition here. We invite the interested reader to refer to \cite{bartholdi2020hierarchy} for more details and references.
      An automaton (or subautomaton over the same alphabet) is called \emph{NoCyWEx} if, for every state $q \in Q$ on a cycle, all transitions starting in $q$ lead to the same state, i.\,e.\ $\{ q \cdot a \mid a \in \Sigma \}$ contains precisely one element for all $q \in Q$ that lie on a cycle (which are those $q \in Q$ with $q \cdot w = q$ for some $w \in \Sigma^+$).
      
      The activity notion for automaton monoids from \cite{bartholdi2020hierarchy} is, in turn, a generalization of the original activity notion for automaton groups (which are generated by invertible automata) introduced by Sidki \cite{sidki2000automorphisms}. The main difference is that, in the group case, we may use the input words to define the set of active words, i.\,e.\ we could let
      \[
        A_{\bm{p}}(n) = \{ u \in \Sigma^n \mid \bm{p} \circ u \neq_\mathcal{T} e \}
      \]
      (where $e$ is the identity state). The straight-forward generalization of this definition to automaton monoids is not well-behaved (and we refer the reader to \cite{bartholdi2020hierarchy} for more details on this).
    \end{remark*}
    
    \begin{example}[Running Example]
      Recall the automaton $\mathcal{T}$ from \autoref{ex:runningAut}.
      We let $S = \{ e \}$ and observe that $S^+ / {=_\mathcal{T}}$ is the trivial monoid (and, thus, in particular, finite).
      The word $02$ is $S$-active (of length $2$) for the state $p$:
      We have $p \circ 10 = 02$ and, after reading $10$ from $p$, we end in state $p$ again.
      Since we have $p \not\in_\mathcal{T} e^+$ (as $p \neq_\mathcal{T} e$), this shows that $02$ is indeed $\{ e \}$-active.
      
      We may also observe that $02$ is $\{ e \}$-active for $p$ from the orbital transducer $\mathcal{T} \circ 22$ (see \autoref{fig:orbTransRunning}):
      $02$ has an incoming $p/p$-edge (from $10$ and also from $11$).
      
      On the other hand, $11$ is \textbf{not} $\{ e \}$-active for $p$:
      While there are input words (e.\,g.\ $21$ or $00$) yielding $11$ as the output when starting in $p$ (e.\,g.\ $p \circ 21 = 11$ and $p \circ 00 = 11$), we always end in state $e$ after reading any of them from $p$ (e.\,g.\ $p \cdot 21, p \cdot 00 \in_\mathcal{T} e^+$).
      
      In fact, $11$ is not $\{ e \}$-active for any state of the automaton, which we may see by observing that the orbital transducer $\mathcal{T} \circ 22$ (from \autoref{fig:orbTransRunning}, which contains all words of length $2$) does not have any incoming transitions at $11$ with output different to $e$.
      Similarly, we may also observe that $00$, $01$, $10$, $12$ and $22$ are also not $\{ e \}$-active for any state of the automaton.
    \end{example}

    Our generalization of the activity notion maintains many desirable properties of the original notion(s). First, it remains subadditive (compare to \cite[Lemma~5]{bartholdi2020hierarchy}).
    \begin{fact}\label{fct:subadditive}
      For $\bm{p}, \bm{q} \in Q^+$, we have $\alpha_{\bm{q} \bm{p}}(n) \leq \alpha_{\bm{q}}(n) + \alpha_{\bm{p}}(n)$ for all $n \geq 0$.
    \end{fact}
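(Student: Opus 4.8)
The plan is to prove a sharper set-theoretic inclusion and then pass to cardinalities. Concretely, I would show
\[
  A_{\bm{q}\bm{p}}(n) \subseteq A_{\bm{q}}(n) \cup \bigl( \bm{q} \circ A_{\bm{p}}(n) \bigr) \text{,}
\]
where $\bm{q} \circ A_{\bm{p}}(n) = \{ \bm{q} \circ w \mid w \in A_{\bm{p}}(n) \}$. Since $w \mapsto \bm{q} \circ w$ is a function, we have $| \bm{q} \circ A_{\bm{p}}(n) | \leq |A_{\bm{p}}(n)| = \alpha_{\bm{p}}(n)$, so the inclusion immediately yields $\alpha_{\bm{q}\bm{p}}(n) = |A_{\bm{q}\bm{p}}(n)| \leq |A_{\bm{q}}(n)| + |\bm{q} \circ A_{\bm{p}}(n)| \leq \alpha_{\bm{q}}(n) + \alpha_{\bm{p}}(n)$, which is the claim.

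To establish the inclusion, I would start from the decomposition of the dual action. Iterating the defining identity for $\cdot$ (equivalently, reading an input through $\bm{p}$ first and feeding its output through $\bm{q}$ in the power/composition automaton) gives $\bm{q}\bm{p} \cdot u = \bigl[ \bm{q} \cdot (\bm{p} \circ u) \bigr] (\bm{p} \cdot u)$. Now take $v \in A_{\bm{q}\bm{p}}(n)$ with a witnessing input $u \in \Sigma^n$, so that $\bm{q}\bm{p} \circ u = v$ and $\bm{q}\bm{p} \cdot u \not\in_{\mathcal{T}} S^+$. Setting $w = \bm{p} \circ u \in \Sigma^n$, we have $v = \bm{q} \circ w$ and $\bm{q}\bm{p} \cdot u = [\bm{q} \cdot w](\bm{p} \cdot u)$. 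The key algebraic fact I would invoke is that the image of $S^+$ in $\mathscr{S}(\mathcal{T})$ is a subsemigroup, hence closed under products: if both $\bm{q} \cdot w \in_{\mathcal{T}} S^+$ and $\bm{p} \cdot u \in_{\mathcal{T}} S^+$ held, then their concatenation $\bm{q}\bm{p} \cdot u$ would lie in $S^+$ as well, contradicting the choice of $u$. Therefore at least one factor is not in $S^+$. If $\bm{q} \cdot w \not\in_{\mathcal{T}} S^+$, then $w \in \Sigma^n$ witnesses $v = \bm{q} \circ w \in A_{\bm{q}}(n)$; if instead $\bm{p} \cdot u \not\in_{\mathcal{T}} S^+$, then $w \in A_{\bm{p}}(n)$ and hence $v = \bm{q} \circ w \in \bm{q} \circ A_{\bm{p}}(n)$. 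Either way $v$ lies in the right-hand side, proving the inclusion.

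The main subtlety (rather than a genuine obstacle) is that $S$-active words are \emph{outputs}, so in the second case $v$ is not itself an element of $A_{\bm{p}}(n)$ but only its image $\bm{q} \circ w$ under the output map of $\bm{q}$. This is precisely where the argument could go wrong if one tried to map $v$ directly into $A_{\bm{p}}(n)$; the point is that applying $\bm{q} \circ (\cdot)$ can only collapse, never expand, the set, which is what keeps the estimate additive rather than multiplicative. The only two facts that must be checked with a little care are the product decomposition of the dual action and that $\in_{\mathcal{T}} S^+$ respects concatenation, both of which are routine consequences of $S^+/{=_\mathcal{T}}$ being a subsemigroup of $\mathscr{S}(\mathcal{T})$.
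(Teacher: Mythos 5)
Your proposal is correct and follows essentially the same route as the paper's own proof: the same inclusion $A_{\bm{q}\bm{p}}(n) \subseteq A_{\bm{q}}(n) \cup \bigl(\bm{q} \circ A_{\bm{p}}(n)\bigr)$, the same decomposition $\bm{q}\bm{p} \cdot u = \bigl[\bm{q} \cdot (\bm{p} \circ u)\bigr](\bm{p} \cdot u)$, and the same case analysis based on the fact that ${\in_{\mathcal{T}}}\, S^+$ is preserved under concatenation. The only differences are notational (your $v$ and $w$ play the roles of the paper's $w$ and $v$, respectively).
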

    \begin{proof}
      We show $A_{\bm{qp}}(n) \subseteq A_{\bm{q}}(n) \cup \bm{q} \circ A_{\bm{p}}(n)$ where $\bm{q} \circ A_{\bm{p}}(n) = \{ \bm{q} \circ v \mid v \in A_{\bm{p}}(n) \}$ and, thus, $| \bm{q} \circ A_{\bm{p}}(n) | \leq | A_{\bm{p}}(n) | = \alpha_{\bm{p}}(n)$.
      
      Suppose we have $w \in A_{\bm{qp}}(n)$. By definition, there is some $u \in \Sigma^n$ with $\bm{qp} \circ u = w$ and $\bm{qp} \not\in_{\mathcal{T}} S^+$. We cannot have $\bm{p} \cdot u \in_{\mathcal{T}} S^+$ and $\bm{q} \cdot v \in_{\mathcal{T}} S^+$ for $v = \bm{p} \circ u$ since this would imply $\bm{qp} \cdot u = (\bm{q} \cdot v) (\bm{p} \cdot u) \in_{\mathcal{T}} S^+$. Thus, we have $\bm{p} \cdot u \not\in_{\mathcal{T}} S^+$ or $\bm{q} \cdot v \not\in_{\mathcal{T}} S^+$. In the first case, we have $v = \bm{p} \circ u \in A_{\bm{p}}(n)$ and, thus, $w = \bm{q} \circ v \in \bm{q} \circ A_{\bm{p}}(n)$. In the second case, we have $w \in A_{\bm{q}}(n)$ directly by definition.
    \end{proof}
    
    Next, the set of $S$-active words is regular.
    \begin{fact}
      The set of $S$-active words $A_{\bm{p}}$ is regular for all state sequences $\bm{p} \in Q^+$.
    \end{fact}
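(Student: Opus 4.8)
The plan is to exhibit a nondeterministic finite acceptor $\mathcal{A}$ over $\Sigma$ recognizing $A_{\bm{p}}$ by reading a candidate word $v$ as an \emph{output} while nondeterministically guessing a compatible input $u$ with $\bm{p} \circ u = v$. The single piece of information $\mathcal{A}$ needs to maintain along the way is the state sequence $\bm{p} \cdot u'$ reached after the prefix $u'$ of $u$ read so far. The key point --- and the reason this can be done with finitely many states --- is that the dual action is length-preserving, so $\bm{p} \cdot u'$ always lies in the finite set $Q^{|\bm{p}|}$, no matter how long $u'$ becomes. Whether the final value lands outside $S^+$ is then merely a property of this finite set, which $\mathcal{A}$ can check by its choice of accepting states.

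Concretely, writing $k = |\bm{p}|$, I would take $Q^k$ as the state set of $\mathcal{A}$, the sequence $\bm{p}$ as the initial state, and $B = \{ \bm{q} \in Q^k \mid \bm{q} \not\in_\mathcal{T} S^+ \}$ as the set of accepting states; this $B$ is a well-defined (if not obviously computable) subset of $Q^k$. For the transitions, whenever $\bm{q} \circ a = b$ for some $\bm{q} \in Q^k$ and $a \in \Sigma$, I would add a $b$-labeled transition from $\bm{q}$ to $\bm{q} \cdot a$. Equivalently, after replacing $\mathcal{T}$ by its power $\mathcal{T}^k$ (which generates the same semigroup and yields the same active words, as already noted), one may assume $\bm{p}$ is a single state so that each $\bm{p} \cdot u$ is again a single state, and $\mathcal{A}$ is simply $\mathcal{T}^k$ read in \enquote{output mode} with inputs forgotten.

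For correctness I would observe that an initial run of $\mathcal{A}$ on $v = b_1 \cdots b_n$ visits states $\bm{p} = \bm{q}_0, \dots, \bm{q}_n$ for which there exist $a_1, \dots, a_n \in \Sigma$ with $\bm{q}_{i-1} \circ a_i = b_i$ and $\bm{q}_i = \bm{q}_{i-1} \cdot a_i$; by the definition of the two actions this is exactly a run of $\mathcal{T}$ (viewing $\bm{p}$ as a state of $\mathcal{T}^k$) with input $u = a_1 \cdots a_n$, output $v$ and end $\bm{q}_n = \bm{p} \cdot u$, and conversely every input $u$ with $\bm{p} \circ u = v$ induces such a run of $\mathcal{A}$. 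Hence $\mathcal{A}$ accepts $v$ precisely when some $u \in \Sigma^{|v|}$ satisfies $\bm{p} \circ u = v$ and $\bm{p} \cdot u \in B$, i.e.\ $\bm{p} \cdot u \not\in_\mathcal{T} S^+$, which is the defining condition for $v \in A_{\bm{p}}$. Converting $\mathcal{A}$ to a deterministic acceptor (or directly counting Myhill--Nerode classes) then shows $A_{\bm{p}}$ regular. I do not expect any genuine obstacle here: the whole argument rests on the single observation that $\bm{p} \cdot u$ ranges over the finite set $Q^{|\bm{p}|}$ and that $S$-inactivity is a property of that value; notably, the finiteness of the subsemigroup generated by $S$ is not even needed for this statement, only for later effectivity. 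The one point requiring care is simply to verify the exact correspondence between output-reading runs of $\mathcal{A}$ and input/output runs of $\mathcal{T}$, including the empty-word case $v = \varepsilon$, which is accepted iff $\bm{p} \not\in_\mathcal{T} S^+$.
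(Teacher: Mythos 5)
Your proof is correct and takes essentially the same approach as the paper: both pass to the power automaton $\mathcal{T}^{|\bm{p}|}$ so that $\bm{p}$ becomes a single state, mark the states $\bm{q}$ with $\bm{q} \not\in_\mathcal{T} S^+$ as accepting with $\bm{p}$ initial, and erase the input labels so the resulting NFA reads output words, accepting exactly $A_{\bm{p}}$. Your added remarks (the explicit run correspondence, the $\varepsilon$ case, and the observation that finiteness of $S^+/{=_\mathcal{T}}$ is not needed here) are fine but do not change the argument.
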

    \begin{proof}
      We will describe how to obtain (in fact, compute) a non-deterministic finite acceptor (NFA) recognizing $A_{\bm{p}}$. We refer the reader to \cite{bartholdi2020hierarchy}, \cite{waechter2024word} or a standard textbook on formal language theory (such as \cite{hopcroft1979introduction}) for proper definitions of NFAs and how they relate to our definition of regular languages.
      
      Recall that we may consider $\bm{p}$ as a single state of the power automaton $\mathcal{T}^{|\bm{p}|}$ (and still obtain the same set of active words). By replacing $\mathcal{T}$ with $\mathcal{T}^{|\bm{p}|}$, we may, thus, without loss of generality assume $\bm{p} = p \in Q$. We mark all states $q$ in $\mathcal{T}$ with $q \not\in_{\mathcal{T}} S^+$ as accepting and $p$ as initial. Afterwards, we drop the inputs from all transitions and obtain an NFA. By construction, we obtain that the words accepted by this are precisely the $S$-active ones.
    \end{proof}
    
    The \emph{growth} of a language $L \subseteq \Sigma^*$ is the growth of the function $n \mapsto \gamma_L(n) = |L \cap \Sigma^n|$. Thus, by definition, $\alpha_{\bm{p}}$ is the growth function of the language of $S$-active words. It is well-known that a regular language either grows polynomially of exponentially (see e.\,g.\ \cite[Subsection~6.5.1]{waechter2024word}). Here, \emph{exponential growth} means that there is some $r > 1$ such that $\gamma_L(n) \geq r^n$ for infinitely many $n$ (while, of course, $\gamma_L(n) \leq |\Sigma|^n$ for all $n$) and \emph{polynomial growth of degree exactly} $d \geq 0$ means that there is some $d$ and polynomials $p(n)$ and $q(n)$ of degree $d$ with $\gamma_L(n) \geq p(n)$ for infinitely many $n$ and $\gamma_L(n) \leq q(n)$ for all $n$. The special case of \emph{polynomial growth of degree $-\infty$} occurs if $\gamma_L$ eventually becomes the constant zero function (which happens if and only if $L$ is a finite language). We say that a language has polynomial growth of degree \emph{at most} $d$ if it has polynomial growth of degree exactly $d'$ for some $d' \leq d$.
    
    Now, a language has polynomial growth if it has polynomial growth of degree $d$ for some $d \in \{ -\infty \} \cup \mathbb{N}$. Since the $S$-activity of the state sequence $\bm{p}$ is the growth of its language of $S$-active words, this immediately induces definitions of \emph{polynomial} or \emph{exponential} $S$-activity and we obtain:
    \begin{fact}
      The $S$-activity of a state sequence is either polynomial or exponential.
    \end{fact}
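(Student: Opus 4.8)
The plan is to recognize that this statement is an immediate corollary of the two facts just established, so the entire proof amounts to chaining them together. By definition, the $S$-activity of a state sequence $\bm{p} \in Q^+$ is the growth of the function $\alpha_{\bm{p}}(n) = |A_{\bm{p}}(n)|$, and this is precisely the growth function $\gamma_{A_{\bm{p}}}$ of the language $A_{\bm{p}}$ of $S$-active words, since $A_{\bm{p}}(n) = A_{\bm{p}} \cap \Sigma^n$. Hence I would first invoke the preceding fact, which guarantees that $A_{\bm{p}}$ is a regular language (via the NFA obtained by marking the states $q$ with $q \not\in_{\mathcal{T}} S^+$ as accepting).

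Second, I would appeal to the general growth dichotomy for regular languages recalled above (see \cite[Subsection~6.5.1]{waechter2024word}): the growth of any regular language $L$ is either polynomial of some degree $d \in \{ -\infty \} \cup \mathbb{N}$, or exponential. Applying this dichotomy to $L = A_{\bm{p}}$ directly transfers to $\alpha_{\bm{p}} = \gamma_{A_{\bm{p}}}$, yielding that the $S$-activity of $\bm{p}$ is polynomial or exponential, which is exactly the claim.

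I expect no genuine obstacle here, as all the substance has already been isolated into the two background results: the regularity of $A_{\bm{p}}$ and the structural growth dichotomy for regular languages (which itself rests on a standard analysis of the strongly connected components of the minimal acceptor, or equivalently on the spectral behaviour of its transition matrix). The only point requiring care is the bookkeeping identification $\alpha_{\bm{p}}(n) = \gamma_{A_{\bm{p}}}(n)$, but this holds verbatim from the definitions of the two growth functions, so the argument reduces to a one-line combination.
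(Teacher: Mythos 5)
Your proposal is correct and matches the paper exactly: the paper states this fact without a separate proof because it follows immediately from the identification $\alpha_{\bm{p}} = \gamma_{A_{\bm{p}}}$, the regularity of $A_{\bm{p}}$ established in the preceding fact, and the cited growth dichotomy for regular languages — precisely the chain you describe. Your side remark that $A_{\bm{p}}(n) = A_{\bm{p}} \cap \Sigma^n$ (valid since the automaton is letter-to-letter, so $S$-active words of length $n$ arise only from inputs of length $n$) is the right bookkeeping check and is implicit in the paper's definition.
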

    
    One important special case of polynomial $S$-activity occurs when there is a constant bounding $\alpha_{\bm{p}}(n)$ for all $n$. In this case, the $S$-activity of $\bm{p}$ is called \emph{bounded} and this is what we will mostly be concerned with in this paper. If $\alpha_{\bm{p}}(n)$ eventually becomes the constant zero function (i.\,e.\ if the language of $S$-active words is finite), we say that $\bm{p}$ is \emph{$S$-finitary}.

    \paragraph*{Extension to Automata.}
    The subadditivity stated in \autoref{fct:subadditive} ensure that, if we take two state sequences $\bm{q}$ and $\bm{p}$ of polynomial activity (or bounded activity), then their concatenation $\bm{q} \bm{p}$ remains to have polynomial (bounded) activity. This justifies defining the activity of the entire \SAut $\mathcal{T} = (Q, \Sigma, \delta)$ as the activity of the states. More precisely, we define the set of \emph{$S$-active words} of length $n$ for the whole automaton as
    \[
      A(n) = \bigcup_{q \in Q} A_q(n)
    \]
    and the \emph{$S$-activity} of $\mathcal{T}$ as the growth of the function $n \mapsto \alpha(n) = |A(n)|$. Clearly, we have $\alpha_p(n) \leq \alpha(n) \leq \sum_{q \in Q} \alpha_q(n)$ for all $n \in \mathbb{N}$ and $p \in Q$. Thus, and by \autoref{fct:subadditive}, we also have that the $S$-activity of any state sequence $\bm{p} \in Q^+$ is polynomial with degree at most $d$ if the $S$-activity of $\mathcal{T}$ is.
    
    As with state sequences, we say that $\mathcal{T}$ is of \emph{bounded} $S$-activity if there is some constant bounding the $S$-activity of $\mathcal{T}$ and we say that $\mathcal{T}$ is \emph{$S$-finitary} if all states of $\mathcal{T}$ are.
    
    \begin{example}[Running Example]
      Again, let us consider the automaton from \autoref{ex:runningAut}.
      We can fully describe all the $\{ e \}$-active words for all states of the automaton.
      Using standard notation from formal language theory, we have:
      \begin{align*}
        A_p(n) &=
          \begin{cases}
            \{ \left( 02 \right)^{\frac{n}{2}} \} & n \text{ even}\\
            \left( 02 \right)^{\frac{n}{2} - 1} \{ 0, 1 \} & n \text{ odd}
          \end{cases}\\
        A_q(n) &=
          \begin{cases}
            \{ (20)^\frac{n}{2}, (20)^{\frac{n}{2} - 1}21 \} & n \text{ even}\\
            \{ (20)^\frac{n - 1}{2}2 \} & n \text{ odd}
          \end{cases}\\
        A_r(n) &= \emptyset
      \intertext{Accordingly, we have for the entire automaton}
        A(n) &=
          \begin{cases}
            \{ (02)^\frac{n}{2}, (20)^\frac{n}{2}, (20)^{\frac{n}{2} - 1}21 \} & n \text{ even}\\
            (02)^\frac{n - 1}{2}\{ 0, 1 \} \cup \{ (20)^\frac{n - 1}{2} 2 \} & n \text{ odd}
          \end{cases}
      \end{align*}
      and $|A(n)| \leq 3$ for all $n \geq 0$.
      This shows that the automaton is of bounded $\{ e \}$-activity.
    \end{example}
    
    \begin{example}
      The adding machine from \autoref{ex:addingMachine} is well known to have bounded $\{ e \}$-activity; the only $\{ e \}$-active words are those in $0^*$.
      No state in the automaton from \autoref{ex:finiteMonoid} has any $\{ e \}$-active words; it, therefore, is $\{ e \}$-finitary.
      
      The construction in \autoref{ex:finiteSemigroup} to generate a finite semigroup $S$ does not contain an identity state if $S$ is not a monoid and, thus, has trivially exponential activity in the sense of \cite{bartholdi2020hierarchy} in that case.
      The generating automaton is NoCyWEx in the sense of \cite{bartholdi2020hierarchy} and, thus, also finitary in this sense.
      
      However, there are also non-NoCyWEx automata generating finite semigroups. For example, the complete \SAut
      \begin{center}
        \begin{tikzpicture}[auto, shorten >=1pt, >=latex]
          \node[state] (e) {$e$};
          \node[state, right=of e] (z) {$z$};
          
          \path[->] (e) edge[loop above] node {$a/a$} (e)
                        edge node {$\bot/\bot$} (z)
                    (z) edge[loop right, align=center] node {$a/\bot$\\$\bot/\bot$} (z)
          ;
                       
        \end{tikzpicture}
      \end{center}
      with alphabet $\{ a, \bot \}$ generates the monoid $U_1 = \{ e, z \}$ with $e^2 = e$ and $z^2 = ez = ze = z$. As $z$ is not the neutral element, this automaton has exponential activity in the sense of \cite{bartholdi2020hierarchy} and it has bounded activity (but is not finitary!) in the NoCyWEx-sense of \cite{bartholdi2020hierarchy}.
      
      We may even go further. As an example, consider the complete \SAut $\mathcal{T} = (\{ q, e, z \}, \{ 0, 1, a, \bot \}, \delta)$
      \begin{center}
        \begin{tikzpicture}[auto, shorten >=1pt, >=latex]
          \node[state] (e) {$e$};
          \node[state, right=of e] (z) {$z$};
          \node[state, left=of e] (q) {$q$};
          
          \path[->] (e) edge[loop above] node[align=center] {$a/a$\\$0/0$\\$1/1$} (e)
                        edge node {$\bot/\bot$} (z)
                    (z) edge[loop right, align=center] node {$a/\bot$\\$\bot/\bot$\\$0/0$\\$1/1$,} (z)
                    (q) edge[loop above] node {$1/1$} (q)
                        edge node[align=center] {$0/1$\\$a/a$} (e)
                        edge[bend right] node[swap] {$\bot/\bot$} (z)
          ;
          
        \end{tikzpicture}
      \end{center}
      which combines the above example (with additional $0/0$- and $1/1$-self-loops at all states that do not change the generated semigroup) with the adding machine from \autoref{ex:addingMachine}.
      Using an induction on $|u|$ for $u \in \{ 0, 1, a, \bot \}^*$ and an exhaustive calculation, one may easily see that all states commute in the generated semigroup and that we have $qe =_{\mathcal{T}} q$, which turns $e$ into a neutral element.
      Next, we have
      \begin{align*}
        q^i \circ 0^\ell = q^i e \circ 0^\ell = q^i z \circ 0^\ell = \operatorname{bin}_\ell i
      \end{align*}
      (for $\ell$ sufficiently large), which shows $q^i s \neq_{\mathcal{T}} q^j t$ for all $i \neq j$ and $s, t \in \{ e, z \}$. Finally, we clearly have $q^i e \neq_{\mathcal{T}} q^i z$ for all $i \geq 0$ and, thus, that the semigroup generated by $\mathcal{T}$ is (isomorphic to) $q^* \times U_1$.
      
      The interesting part about this last example $\mathcal{T}$ is that the automaton has exponential activity both with respect to the identity function (and even its neutral element $e$) as well as to its NoCyWEx part (which only contains of $z$). Nevertheless, it has bounded $\{ e, z \}$-activity (where the subsemigroup generated by $e$ and $z$ is isomorphic to $U_1$).
    \end{example}
    
    \begin{figure}\centering
      \begin{tikzpicture}[auto, shorten >=1pt, >=latex]
        \node[state] (p) {$p$};
        \node[state, right=4cm of p] (q) {$q$};
        \node[state, below=of p] (e) {$e$};
        \node[state, below=of q] (r) {$r$};
        
        \draw[->] (p) edge[bend left] node {$1/0$} (q)
                      edge node[swap] {$0/1$} (r)
                      edge node[swap] {$2/1$} (e)
                  (q) edge node[above] {$0/0$, $1/2$, $2/0$} (p)
                  (r) edge node[below] {$0/1$, $1/1$, $2/0$} (e)
                  (e) edge[loop left] node[align=center] {$0/0$\\$1/1$\\$2/2$} (e)
        ;
      \end{tikzpicture}
      \caption{A complete \SAut with exponential activity.}\label{fig:modifiedAut}
    \end{figure}
    \begin{example}[Exponential Activity]
      By slightly modifying the automaton from \autoref{ex:runningAut}, we may obtain the automaton depicted in \autoref{fig:modifiedAut}.
      Here, the $\{ e \}$-active words for the states are:
      \begin{align*}
        A_p(n) &=
        \begin{cases}
          \left( 0 \{ 0, 2 \} \right)^\frac{n}{2} & n \text{ even}\\
          \left( 0 \{ 0, 2 \} \right)^\frac{n - 1}{2} \{ 0, 1 \} & n \text{ odd}
        \end{cases}\\
        A_q(n) &=
        \begin{cases}
          \left( \{ 0, 2 \} 0 \right)^\frac{n}{2} \cup \left( \{ 0, 2 \} 0 \right)^{\frac{n}{2} - 1} \{ 0, 2 \} 1 & n \text{ even}\\
          \left( \{ 0, 2 \} 0 \right)^{\frac{n - 1}{2}} \{ 0, 2 \} & n \text{ odd}
        \end{cases}\\
        A_r(n) &= \emptyset
      \end{align*}
      In particular, the automaton has exponential $\{ e \}$-activity.
    \end{example}
    
    \begin{remark}
      The isomorphism problem for automaton groups and, thus, for automaton semigroups is undecidable (this follows from \cite{sunic2012conjugacy} but, in the semigroup case, also from the direct construction in \cite{dangeli2024freeness}). However, the (uniform) word problem is decidable (in fact, it is $\ComplexityClass{PSpace}$-complete \cite{dangeli2017complexity,waechter2023automaton}, see also \cite{waechter2024word}). Thus, for any fixed finite semigroup $T$, we may test whether there is a (closed) subset $P$ of states whose generated subsemigroup is isomorphic to $T$:
      We enumerate all elements of the subsemigroup (for example, using breath-first search in the Cayley graph) until we have found all elements or more than $|T|$-many elements. In the second case, the subsemigroup cannot be isormorphic to $T$ and, in the first case, we may test the isomorphism between finite semigroups.
      This allows us to consider our notion of $S$-activity in a more abstract sense where, instead of fixing a subset of states, we fix some finite semigroup $T$ and then try to find a closed subset $S$ of states generating $T$. If there is such an $S$, we may use it to base the activity on; if there is no such $S$, we set $S = \emptyset$ and obtain exponential activity.
    \end{remark}
  \end{section}

  \begin{section}{Expansion Relation}
    Recall that we fixed some complete \SAut $\mathcal{T} = (Q, \Sigma, \delta)$ and some closed subset $S \subseteq Q$ such that $S^+/{=_{\mathcal{T}}}$ is finite.
    Without loss of generality, we may assume that for every $\bm{t} \in S^+$, there is exactly one $s \in S$ with $\bm{t} =_{\mathcal{T}} s$: since there are only finitely many equivalence classes under $=_{\mathcal{T}}$, we may replace $\mathcal{T}$ by a finite union of suitable powers of $\mathcal{T}$ and then minimize the automaton to avoid multiple states with the same action.\footnote{%
      For our algorithmic results, it is important to note that these powers (and the minimization) can be computed from $\mathcal{T}$ and the subset $S$ if the semigroup $S^+/{=_\mathcal{T}}$ is finite.
      For more information on how to minimize an automaton, see e.\,g.\ \cite[Subsection 2.3]{klimann2012implementing} or \cite[Appendix~A.5]{linz2016introduction}.%
    }
    
    We write $P = Q \setminus S$ for the states outside the closed subset and consider the free product $(S^+/{=_{\mathcal{T}}}) \star P^+ = Q^+/{\approx}$ of semigroups. In other words, two state sequences $\bm{p}, \bm{q} \in Q^+$ are equivalent under $\approx$ if they are the same up to reductions within the $S^+$ blocks.
    To simplify our notation, we extend $\approx$ into a relation on $Q^*$ by letting $\varepsilon \approx \varepsilon$ (and $\varepsilon \not\approx \bm{p}$ for all $\bm{p} \in Q^+$).
    In this free product, we may define the \emph{normal form} of some word $\bm{p}_0 \bm{t}_1 \bm{p}_1 \ldots \bm{t}_n \bm{p}_n$ with $\bm{p}_0, \bm{p}_n \in P^*$, $\bm{p}_1, \dots, \bm{p}_{n - 1} \in P^+$ and $\bm{t}_1, \dots, \bm{t}_n \in S^+$ as $\bm{p}_0 s_1 \bm{p}_1 \dots s_n \bm{p}_n$ where $s_i$ is the unique element $s_i \in S$ with $s_i =_{\mathcal{T}} \bm{t}_i$ for $1 \leq i \leq n$.
    \begin{example}
      If our automaton $\mathcal{T}$ contains an identity state $e$ and we choose $S = \{ e \}$, then taking the normal form of some $\bm{p} \in Q^+$ is to reduce all blocks of the form $e^n$ to a single $e$. Accordingly, $\bm{p} \approx \bm{q}$ holds if and only if we end up with the same word when doing this for $\bm{p}$ and for $\bm{q}$.
    \end{example}
    
    It is not difficult to see that $Q^+/{=_{\mathcal{T}}}$, the semigroup generated by $\mathcal{T}$, is a quotient of $Q^+/{\approx}$:
    \begin{fact}\label{fct:approxImpliesEquiv}
      We have $\bm{p} \approx \bm{q} \implies \bm{p} =_{\mathcal{T}} \bm{q}$ for all $\bm{p}, \bm{q} \in Q^+$.
    \end{fact}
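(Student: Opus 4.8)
The plan is to exploit the characterization of $\approx$ as the smallest congruence containing a particular set of relations, together with the already-noted fact that $=_{\mathcal{T}}$ is itself a congruence. Recall that $Q^+/{\approx}$ is by definition the free product $(S^+/{=_{\mathcal{T}}}) \star P^+$. Unwinding the semigroup-presentation definition of the free product, this semigroup is presented by $\langle Q \mid \mathcal{R} \rangle_\mathscr{S}$, where $\mathcal{R}$ consists of all pairs $(\bm{u}, \bm{v})$ with $\bm{u}, \bm{v} \in S^+$ and $\bm{u} =_{\mathcal{T}} \bm{v}$: these are precisely the defining relations of the factor $S^+/{=_{\mathcal{T}}}$, whereas the free factor $P^+$ contributes no relations. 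Hence, by the definition of the presented semigroup, $\approx$ is the smallest congruence on $Q^+$ with $\mathcal{R} \subseteq {\approx}$.

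The key observation is that each relation in $\mathcal{R}$ already holds under $=_{\mathcal{T}}$: every such pair is of the form $(\bm{u}, \bm{v})$ with $\bm{u} =_{\mathcal{T}} \bm{v}$ by the very definition of $\mathcal{R}$, so $\mathcal{R} \subseteq {=_{\mathcal{T}}}$. Since $=_{\mathcal{T}}$ is a congruence containing $\mathcal{R}$, while $\approx$ is by construction the smallest congruence containing $\mathcal{R}$, minimality immediately gives $\approx \subseteq {=_{\mathcal{T}}}$. In particular, $\bm{p} \approx \bm{q}$ implies $\bm{p} =_{\mathcal{T}} \bm{q}$, which is exactly the claim.

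I expect essentially no obstacle here: all the content lies in translating the free-product definition into a presentation and recognizing its generating relations as instances of $=_{\mathcal{T}}$. The only point demanding a little care is that the free factor $P^+$ truly imposes no relations, so that $\mathcal{R}$ consists solely of reductions inside $S^+$-blocks; and that the normalization fixed at the start of the section (choosing, for each $\bm{t} \in S^+$, the unique $s \in S$ with $s =_{\mathcal{T}} \bm{t}$) only selects representatives and does not enlarge $=_{\mathcal{T}}$. Neither point affects the inclusion $\mathcal{R} \subseteq {=_{\mathcal{T}}}$, so the argument goes through verbatim.
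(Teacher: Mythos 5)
Your proof is correct: the paper offers no proof of this fact at all (it is prefaced only by \enquote{it is not difficult to see that $Q^+/{=_{\mathcal{T}}}$ \dots\ is a quotient of $Q^+/{\approx}$}), and your congruence-minimality argument --- the defining relations of the free product $(S^+/{=_{\mathcal{T}}}) \star P^+$ are exactly the reductions inside $S^+$-blocks, these already hold under the congruence $=_{\mathcal{T}}$, hence the smallest congruence $\approx$ containing them is contained in $=_{\mathcal{T}}$ --- is precisely the intended justification. Your two points of care (that $P^+$ contributes no relations, and that the normal-form convention merely selects representatives) are also correctly dispatched, so there is nothing to add.
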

    
    Similarly, the action $\bm{q} \cdot u$ is compatible with the classes of $\approx$ since it is compatible with the classes of $=_{\mathcal{T}}$ (i.\,e.\ the semigroup generate by $\mathcal{T}$):
    \begin{fact}\label{fct:approxCdot}
      We have $\bm{p} \approx \bm{q} \implies \bm{p} \cdot u \approx \bm{q} \cdot u$ for all $\bm{p}, \bm{q} \in Q^+$ and $u \in \Sigma^*$.
    \end{fact}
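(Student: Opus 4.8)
The plan is to reduce the statement to a single elementary rewriting step and then exploit the way the dual action factors across concatenation. By construction, $\approx$ is the congruence on $Q^+$ generated by the pairs $(\bm{s}, \bm{s}')$ with $\bm{s}, \bm{s}' \in S^+$ and $\bm{s} =_{\mathcal{T}} \bm{s}'$ (this is exactly what it means for the $S^+$-factor of the free product to be $S^+/{=_{\mathcal{T}}}$). Consequently, any two $\approx$-equivalent words are joined by a finite chain of rewrites, each replacing a single factor $\bm{s} \in S^+$ by some $\bm{s}' \in S^+$ with $\bm{s} =_{\mathcal{T}} \bm{s}'$. Since $\approx$ is transitive, it suffices to treat one such step, i.\,e.\ to prove the claim for $\bm{p} = \bm{x}\bm{s}\bm{y}$ and $\bm{q} = \bm{x}\bm{s}'\bm{y}$ with $\bm{x}, \bm{y} \in Q^*$, $\bm{s}, \bm{s}' \in S^+$ and $\bm{s} =_{\mathcal{T}} \bm{s}'$.

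First I would record the product rule for the dual action, namely $(\bm{q}_1 \bm{q}_2) \cdot u = [\bm{q}_1 \cdot (\bm{q}_2 \circ u)]\,(\bm{q}_2 \cdot u)$, which follows from the defining recursion $\bm{q}p \cdot u = [\bm{q} \cdot (p \circ u)]\,(p \cdot u)$ by a routine induction on $|\bm{q}_2|$ (using that $\circ$ is a left action). Applying it to the factorization $\bm{x} \mid \bm{s} \mid \bm{y}$ twice yields
\[
  (\bm{x}\bm{s}\bm{y}) \cdot u = \bigl[\bm{x} \cdot ((\bm{s}\bm{y}) \circ u)\bigr]\bigl[\bm{s} \cdot (\bm{y} \circ u)\bigr]\bigl[\bm{y} \cdot u\bigr]
\]
and the analogous identity for $\bm{x}\bm{s}'\bm{y}$. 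I would then compare the three blocks. Because $\bm{s} =_{\mathcal{T}} \bm{s}'$ and $=_{\mathcal{T}}$ is a congruence, we have $\bm{s}\bm{y} =_{\mathcal{T}} \bm{s}'\bm{y}$, hence $(\bm{s}\bm{y}) \circ u = (\bm{s}'\bm{y}) \circ u$; so the leftmost blocks $\bm{x} \cdot ((\bm{s}\bm{y}) \circ u)$ agree \emph{literally}, and the rightmost blocks $\bm{y} \cdot u$ are trivially identical.

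The only blocks that can differ are the middle ones, $\bm{s} \cdot w$ and $\bm{s}' \cdot w$ with $w = \bm{y} \circ u$, and here two hypotheses conspire. Since $S$ is closed under the dual action and $\bm{s}, \bm{s}' \in S^+$, both $\bm{s} \cdot w$ and $\bm{s}' \cdot w$ again lie in $S^+$. Moreover, the dual action is compatible with $=_{\mathcal{T}}$ (a standard property: from the fundamental identity $\bm{r} \circ (uv) = (\bm{r} \circ u)\,((\bm{r} \cdot u) \circ v)$ one cancels the equal prefixes $\bm{s} \circ u = \bm{s}' \circ u$ to obtain $(\bm{s} \cdot u) \circ v = (\bm{s}' \cdot u) \circ v$ for all $v$), so $\bm{s} \cdot w =_{\mathcal{T}} \bm{s}' \cdot w$. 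Two words that both lie in $S^+$ and are $=_{\mathcal{T}}$-equivalent form precisely a generating pair of $\approx$, whence $\bm{s} \cdot w \approx \bm{s}' \cdot w$. As $\approx$ is a congruence, replacing this middle block while keeping the two identical outer blocks fixed gives $\bm{p} \cdot u \approx \bm{q} \cdot u$, which closes the elementary step and, by transitivity along the rewriting chain, the general case.

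The main obstacle is not any individual computation but the bookkeeping that lets the closedness of $S$ do its job: the argument hinges on the middle block remaining inside $S^+$, since only there can a $=_{\mathcal{T}}$-equivalence be upgraded to an $\approx$-equivalence (for arbitrary factors, $=_{\mathcal{T}}$ emphatically does \emph{not} imply $\approx$). The factorization of the dual action is exactly what isolates this block, and the closedness hypothesis on $S$ is precisely what guarantees it cannot escape $S^+$ under the action of the suffix $w = \bm{y} \circ u$.
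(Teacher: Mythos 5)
Your proof is correct. The paper states this fact without a proof, justifying it only by the preceding remark that the dual action is compatible with the classes of $=_{\mathcal{T}}$; your argument is a careful elaboration of exactly that idea (reduction to an elementary rewriting step of the free-product congruence, the factorization of the dual action, compatibility of $\cdot$ with $=_{\mathcal{T}}$, and the closedness of $S$ keeping the rewritten block inside $S^+$), so it follows essentially the same approach while correctly supplying the details the paper leaves implicit.
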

    
    To further simplify our notation from here on, we fix some language $R \subseteq Q^*$ and simply say that a finite word is \emph{expandable} if it is $R$-expandable. Similarly, we say that $x \in \Sigma^+$ \emph{expands} some $w \in \Sigma^*$ if it $R$-expands it.
    
    We now use the above free product to define a relation that will allow us to detect whether a word $w \in \Sigma^*$ is expandable or not. The idea is basically that two state sequences $\bm{p}$ and $\bm{q}$ are equivalent if there are runs $\lrun{v}{}{\bm{p}}{w}$ and $\lrun{v}{}{\bm{q}}{w}$ to the same word in the orbital transducer of $w$. However, we will do this only \enquote{up to equality in the $S^+$ blocks} and this is where the free product comes into play. Additionally, we also have to keep track of the classes of the inputs (with respect to $R$).
    \begin{definition}
      For $w \in \Sigma^*$, we define the \emph{expansion relation} $\mathcal{E}_{w} \subseteq Q^+ \times Q^+$ by
      \begin{align*}
        \bm{p}_1 \mathrel{\mathcal{E}_{w}} \bm{p}_2 \iff \exists \bm{p}_1', \bm{p}_2' \in Q^+ \text{ with }& \bm{p}_1 \approx \bm{p}_1', \bm{p}_2 \approx \bm{p}_2' \text{ and}\\
        &\text{runs } \lrun{v}{\bm{r}'_1}{\bm{p}_1'}{w} \text{ and } \lrun{v}{\bm{r}'_2}{\bm{p}_2'}{w} \text{ in } \mathcal{T} \circ w\\
        &\text{for some } v \in Q^* \circ w \text{ and } \bm{r}'_1, \bm{r}'_2 \in R \text{.}
      \end{align*}
    \end{definition}
    \noindent{}Note that $\mathcal{E}_w$ implicitly depends on $R$.
    
    \begin{example}[Expansion Relation of the Adding Machine]\label{ex:expansionAddingMachine}
      Let $R = Q^*$ and consider the adding machine from \autoref{ex:addingMachine} (for $S = \{ e \}$).
      We have, for example, $qe \mathrel{\mathcal{E}_{00}} qeqe$ since we have the runs
      \begin{center}
        \begin{tikzpicture}[auto, swap, baseline=(00.base)]
          \node (00) {$00$};
          \node[left=of 00] (11) {$11$};
          \node[left=of 11] (002) {$00$};
          
          \path[->]
            (00) edge[path] node {$\scriptsize/e^3$} (11)
            (11) edge node {$\scriptsize/q$} (002)
          ;
        \end{tikzpicture}
      \end{center}
      and
      \begin{center}
          \begin{tikzpicture}[auto, swap, baseline=(00.base)]
            \node (00) {$00$};
            \node[left=of 00] (11) {$11$};
            \node[left=of 11] (002) {$00$};
            \node[left=of 002] (112) {$11$};
            \node[left=of 112] (003) {$00$};
            
            \path[->]
              (00) edge[path] node {$\scriptsize/e^3$} (11)
              (11) edge node {$\scriptsize/q$} (002)
              (002) edge[path] node {$\scriptsize/e^3$} (112)
              (112) edge node {$\scriptsize/q$} (003)
              
            ;
          \end{tikzpicture}
      \end{center}
      in $\mathcal{T} \circ 00$ (see \autoref{fig:orbTransAddingMachine}).
      In fact, this generalizes and we have $qe \mathrel{\mathcal{E}_{0^\ell}} qeqe$ for all $\ell > 0$.
      It turns out that we have $\bm{p}_1 \mathrel{\mathcal{E}_{0^\ell}} \bm{p}_2$ whenever $\bm{p}_1, \bm{p}_2 \in \left( e^* q e^+ \right)^*$ (see, later, in \autoref{ex:NFRAaddingMachine}).
    \end{example}
    
    \begin{example}[Expansion Relation for the Running Example]
      Let us also set $R = Q^*$ for our running example from \autoref{ex:runningAut} (with $S = \{ e \}$).
      From \autoref{fig:orbTransRunning}, we obtain the runs
      \begin{center}
        \begin{tikzpicture}[auto, swap, baseline=(00.base)]
          \node (22) {$22$};
          \node[left=of 22] (12) {$12$};
          \node[left=of 12] (02) {$02$};
          
          \path[->]
            (22) edge node {$\scriptsize p/e$} (12)
            (12) edge node {$\scriptsize p/r$} (02)
          ;
        \end{tikzpicture}
      \end{center}
      and
      \begin{center}
        \begin{tikzpicture}[auto, swap, baseline=(00.base)]
          \node (22) {$22$};
          \node[left=of 22] (21) {$21$};
          \node[left=of 21] (11) {$11$};
          \node[left=of 11] (02) {$02$};
          
          \path[->]
            (22) edge node {$\scriptsize q/e$} (21)
            (21) edge node {$\scriptsize p/e$} (11)
            (11) edge node {$\scriptsize p/p$} (02)
          ;
        \end{tikzpicture}
      \end{center}
      in $\mathcal{T} \circ 22$.
      This shows $re \mathrel{\mathcal{E}_{22}} pe \approx pee$.
      Similarly, one can also, for example, observe $ere \mathrel{\mathcal{E}_{22}} e$.
    \end{example}
    
    We will show that $\mathcal{E}_w$ can be used to test whether $w$ is expandable:
    \begin{proposition}\label{prop:EwAndExpandability}
      A word $x \in \Sigma^*$ expands a word $w \in \Sigma^*$ if and only if there are some $\bm{p}_1, \bm{p}_2 \in Q^+$ with $\bm{p}_1 \mathrel{\mathcal{E}_w} \bm{p}_2$ and $\bm{p}_1 \circ x \neq \bm{p}_2 \circ x$.
    \end{proposition}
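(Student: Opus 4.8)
The plan is to translate expansion into a purely combinatorial statement about the map that forgets the $x$-part of an output, and then match this with the two conditions defining $\mathcal{E}_w$. First I would record the elementary decomposition of the action on a concatenation: for every $\bm{r} \in Q^*$,
\[
  \bm{r} \circ (wx) = (\bm{r} \circ w)\,\bigl( (\bm{r} \cdot w) \circ x \bigr) \text{,}
\]
so that the length-$|w|$ prefix of $\bm{r} \circ (wx)$ is always $\bm{r} \circ w$. Hence truncation to the first $|w|$ letters is a well-defined \emph{surjection} $\pi \colon R \circ wx \to R \circ w$, which shows $|R \circ wx| \geq |R \circ w|$ and, more importantly, that $x$ expands $w$ if and only if $\pi$ is not injective (both sets being finite). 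By the decomposition, non-injectivity of $\pi$ is equivalent to the existence of $\bm{r}_1, \bm{r}_2 \in R$ with
\[
  \bm{r}_1 \circ w = \bm{r}_2 \circ w =: v \quad\text{but}\quad (\bm{r}_1 \cdot w) \circ x \neq (\bm{r}_2 \cdot w) \circ x \text{.}
\]
This equivalence is the hinge of both directions.

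For the \enquote{only if} direction I would take such $\bm{r}_1, \bm{r}_2$ and set $\bm{p}_i = \bm{r}_i \cdot w$. The equalities $v = \bm{r}_i \circ w$ and $\bm{p}_i = \bm{r}_i \cdot w$ are precisely the defining data of the runs \lrun{v}{\bm{r}_1}{\bm{p}_1}{w} and \lrun{v}{\bm{r}_2}{\bm{p}_2}{w} in $\mathcal{T} \circ w$, so choosing $\bm{p}_i' = \bm{p}_i$ (whence $\bm{p}_i \approx \bm{p}_i'$ holds trivially) witnesses $\bm{p}_1 \mathrel{\mathcal{E}_w} \bm{p}_2$; and the displayed inequality is exactly $\bm{p}_1 \circ x \neq \bm{p}_2 \circ x$.

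For the \enquote{if} direction I would start from $\bm{p}_1 \mathrel{\mathcal{E}_w} \bm{p}_2$ with $\bm{p}_1 \circ x \neq \bm{p}_2 \circ x$ and unfold the definition of $\mathcal{E}_w$: there are $\bm{p}_1', \bm{p}_2' \in Q^+$ with $\bm{p}_i \approx \bm{p}_i'$ together with runs \lrun{v}{\bm{r}_1'}{\bm{p}_1'}{w} and \lrun{v}{\bm{r}_2'}{\bm{p}_2'}{w} for a common $v$ and some $\bm{r}_1', \bm{r}_2' \in R$, i.e.\ $\bm{r}_i' \circ w = v$ and $\bm{p}_i' = \bm{r}_i' \cdot w$. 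The crucial point is to move the inequality from $\bm{p}_i$ to $\bm{p}_i'$: by \autoref{fct:approxImpliesEquiv}, $\bm{p}_i \approx \bm{p}_i'$ gives $\bm{p}_i =_{\mathcal{T}} \bm{p}_i'$, so $\bm{p}_i$ and $\bm{p}_i'$ act identically on every word and therefore $\bm{p}_1' \circ x \neq \bm{p}_2' \circ x$, that is, $(\bm{r}_1' \cdot w) \circ x \neq (\bm{r}_2' \cdot w) \circ x$. Feeding this back into the decomposition identity, $\bm{r}_1' \circ (wx)$ and $\bm{r}_2' \circ (wx)$ are two elements of $R \circ wx$ with the same prefix $v = \bm{r}_i' \circ w$ but distinct suffixes, so $\pi$ is non-injective and $x$ expands $w$.

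The routine parts are the decomposition identity and the mechanical unfolding of the run notation in $\mathcal{T} \circ w$. The step I expect to require the most care is the interplay of the two equivalences: the witnesses for $\mathcal{E}_w$ are only controlled up to $\approx$ (equality inside the $S^+$-blocks of the free product), whereas expansion is a statement about the genuine action, so it is essential that \autoref{fct:approxImpliesEquiv} lets us upgrade $\approx$ to $=_{\mathcal{T}}$ and hence to equality of outputs. Alongside this, I would double-check the nonemptiness bookkeeping demanded by $\mathcal{E}_w \subseteq Q^+ \times Q^+$: in the \enquote{only if} direction one must be able to choose the colliding inputs $\bm{r}_1, \bm{r}_2$ in $R \cap Q^+$ so that $\bm{p}_i = \bm{r}_i \cdot w$ is nonempty, which is exactly the place where the structure of $R$ enters and which is the subtle point to treat rather than merely the surjection argument itself.
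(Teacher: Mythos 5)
Your proof is correct and takes essentially the same route as the paper's: the same truncation map $\pi\colon R \circ wx \to R \circ w$, the same decomposition $\bm{r} \circ wx = (\bm{r} \circ w)\left((\bm{r} \cdot w) \circ x\right)$, and the same appeal to \autoref{fct:approxImpliesEquiv} to upgrade $\approx$ to equality of actions --- the only cosmetic difference being that the paper argues the \enquote{if} direction contrapositively (assuming $\pi$ injective) where you argue it directly. The nonemptiness subtlety you flag at the end is genuine but is equally unaddressed in the paper's own proof, which also sets $\bm{p}_i = \bm{r}_i \cdot w$ without excluding $\bm{r}_i = \varepsilon$; it is harmless exactly when the colliding inputs can be taken in $R \cap Q^+$ (automatic if $\varepsilon \notin R$), an assumption the paper makes implicitly.
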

    \begin{proof}
      Consider the surjective map $\pi: R \circ wx \to R \circ w$ given by $\bm{r} \circ wx \mapsto \bm{r} \circ w$. It is injective if and only if $|R \circ wx| = |R \circ w|$ (i.\,e.\ $x$ does not expand $w$).
      
      For the first direction, suppose that it is not injective. Then, there are $\bm{r}_1, \bm{r}_2 \in R$ with $\bm{r}_1 \circ wx \neq \bm{r}_2 \circ wx$ but $\bm{r}_1 \circ w = v = \bm{r}_2 \circ w$. The latter implies that there are runs $\lrun{v}{\bm{r}_1}{\bm{p}_1}{w}$ and $\lrun{v}{\bm{r}_2}{\bm{p}_2}{w}$ in $\mathcal{T} \circ w$ with $\bm{p}_1 = \bm{r}_1 \cdot w$ and $\bm{p}_2 = \bm{r}_2 \cdot w$, which means that we have $\bm{p}_1 \mathrel{\mathcal{E}_w} \bm{p}_2$. The former implies $v (\bm{p}_1 \circ x) = (\bm{r}_1 \circ w) (\bm{r}_1 \cdot w \circ x) = \bm{r}_1 \circ wx \neq \bm{r}_2 \circ wx = (\bm{r}_2 \circ w) (\bm{r}_2 \cdot w \circ x) = v (\bm{p}_2 \circ x)$. Thus, we have $\bm{p}_1 \circ x \neq \bm{p}_2 \circ x$.
      
      For the other direction, suppose that $\pi$ is injective and that we have $\bm{p}_1, \bm{p}_2 \in Q^+$ with $\bm{p}_1 \mathrel{\mathcal{E}_w} \bm{p}_2$. By definition, there are $\bm{p}_1', \bm{p}_2'$ with $\bm{p}_1 \approx \bm{p}_1'$, $\bm{p}_2 \approx \bm{p}_2'$ and runs $\lrun{v}{\bm{r}_1'}{\bm{p}_1'}{w}$ and $\lrun{v}{\bm{r}_2'}{\bm{p}_2'}{w}$ in $\mathcal{T} \circ w$. Thus, we have $\bm{r}_1' \circ w = v = \bm{r}_2' \circ w$ and, since $\pi$ is injective, also $\bm{r}_1' \circ wx = \bm{r}_2' \circ wx$. We obtain $v (\bm{p}_1' \circ x) = (\bm{r}_1' \circ w) (\bm{r}_1' \cdot w \circ x) = \bm{r}_1' \circ wx = \bm{r}_2' \circ wx = (\bm{r}_2' \circ w) (\bm{r}_2' \cdot w \circ x) = v (\bm{p}_2' \circ x)$, which implies $\bm{p}_1 \circ x = \bm{p}_1' \circ x = \bm{p}_2' \circ x = \bm{p}_2 \circ x$ since we have $\bm{p}_1 =_{\mathcal{T}} \bm{p}_1'$ and $\bm{p}_2 =_{\mathcal{T}} \bm{p}_2'$ by \autoref{fct:approxImpliesEquiv}.
    \end{proof}
    
    An immediate consequence of the last result is that expandability only depends on the expansion relation:
    \begin{proposition}\label{prop:OrbitIncreaseDependsOnEw}
      For all $u, v, x \in \Sigma^*$, we have
      \[
        \mathcal{E}_u = \mathcal{E}_v \implies \left( |R \circ u| < |R \circ ux| \iff |R \circ v| < |R \circ vx| \right) \text{.}
      \]
    \end{proposition}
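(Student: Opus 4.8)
The plan is to obtain this as an essentially immediate corollary of \autoref{prop:EwAndExpandability}, which already expresses expandability purely through the relation $\mathcal{E}_w$ together with the $w$-independent action $\bm{p} \circ x$. The key point is that once expandability has been reduced to a property of the pair set $\mathcal{E}_w$, equality of these pair sets at two different roots forces the two expandability statements to coincide.

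Concretely, I would first unwind the definition of expandability: the inequality $|R \circ w| < |R \circ wx|$ is by definition exactly the assertion that $x$ expands $w$. By \autoref{prop:EwAndExpandability}, this holds if and only if there exist $\bm{p}_1, \bm{p}_2 \in Q^+$ with $\bm{p}_1 \mathrel{\mathcal{E}_w} \bm{p}_2$ and $\bm{p}_1 \circ x \neq \bm{p}_2 \circ x$. The crucial observation is that, in this characterization, the word $w$ enters \emph{only} through the set of pairs $\mathcal{E}_w$: the remaining side condition $\bm{p}_1 \circ x \neq \bm{p}_2 \circ x$ refers solely to $x$ and to the fixed automaton $\mathcal{T}$, and not to $w$ at all.

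I would then apply this characterization twice, once with $w = u$ and once with $w = v$. Since $\mathcal{E}_u = \mathcal{E}_v$ by hypothesis, the two existential statements become literally the same predicate in the variables $\bm{p}_1, \bm{p}_2$, so one is satisfiable exactly when the other is. Hence $x$ expands $u$ if and only if $x$ expands $v$, which is precisely the claimed equivalence $|R \circ u| < |R \circ ux| \iff |R \circ v| < |R \circ vx|$.

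I do not expect a genuine obstacle here, as the entire substantive content has been front-loaded into \autoref{prop:EwAndExpandability}; what remains is the bookkeeping of quantifiers. The only point deserving a moment of care is to confirm that the side condition $\bm{p}_1 \circ x \neq \bm{p}_2 \circ x$ really is independent of the root word $w$, so that agreement of the relations $\mathcal{E}_u$ and $\mathcal{E}_v$ genuinely suffices to transfer expandability from $u$ to $v$ and back.
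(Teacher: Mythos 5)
Your proposal is correct and follows essentially the same route as the paper's own proof: both apply \autoref{prop:EwAndExpandability} in each direction, using the hypothesis $\mathcal{E}_u = \mathcal{E}_v$ to transfer the witnessing pair $\bm{p}_1, \bm{p}_2$ (together with the $w$-independent condition $\bm{p}_1 \circ x \neq \bm{p}_2 \circ x$) from one root word to the other. Nothing further is needed.
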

    \begin{proof}
      Let $\mathcal{E}_u = \mathcal{E}_v$ and $|R \circ u| < |R \circ ux|$. By \autoref{prop:EwAndExpandability}, there are $\bm{p}_1, \bm{p}_2 \in Q^+$ with $\bm{p}_1 \mathrel{\mathcal{E}_u} \bm{p}_2$ and $\bm{p}_1 \circ x \neq \bm{p}_2 \circ x$. The former is the same as $\bm{p}_1 \mathrel{\mathcal{E}_v} \bm{p}_2$ and, from the other direction of \autoref{prop:EwAndExpandability}, we obtain $|R \circ v| < |R \circ vx|$.
    \end{proof}

    We have just seen that it suffices to know the expansion relation for a word $w$ to answer the question whether a given $x$ expands $w$. Next, we will see that the expansion relation is compatible with appending a word.
    \begin{proposition}\label{prop:EwCompatible}
      For all $u, v, x \in \Sigma^*$, we have $\mathcal{E}_u = \mathcal{E}_v \implies \mathcal{E}_{ux} = \mathcal{E}_{vx}$.
    \end{proposition}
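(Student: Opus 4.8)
The plan is to prove the two inclusions $\mathcal{E}_{ux} \subseteq \mathcal{E}_{vx}$ and $\mathcal{E}_{vx} \subseteq \mathcal{E}_{ux}$ separately; since the hypothesis $\mathcal{E}_u = \mathcal{E}_v$ is symmetric in $u$ and $v$, it suffices to establish the first one. So I would fix a pair $\bm{p}_1 \mathrel{\mathcal{E}_{ux}} \bm{p}_2$ and aim to manufacture a witness for $\bm{p}_1 \mathrel{\mathcal{E}_{vx}} \bm{p}_2$. Unravelling the definition, there are $\bm{r}_1, \bm{r}_2 \in R$ and $\bm{p}_1', \bm{p}_2' \in Q^+$ with $\bm{p}_i \approx \bm{p}_i'$, a common target $\bm{r}_1 \circ ux = y = \bm{r}_2 \circ ux$, and $\bm{p}_i' = \bm{r}_i \cdot ux$.

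The crucial first move is to split these runs at the boundary between $u$ and $x$. Reading $ux$ amounts to reading $u$ and then $x$, so $\bm{r}_i \circ ux = (\bm{r}_i \circ u)\,((\bm{r}_i \cdot u) \circ x)$; comparing the two halves by length, the equality $\bm{r}_1 \circ ux = \bm{r}_2 \circ ux$ forces both $\bm{r}_1 \circ u = \bm{r}_2 \circ u$ and $(\bm{r}_1 \cdot u) \circ x = (\bm{r}_2 \cdot u) \circ x$. Writing $\bm{q}_i := \bm{r}_i \cdot u \in Q^+$, the first equality means the two runs with inputs $\bm{r}_1, \bm{r}_2 \in R$ land on the common state $\bm{r}_1 \circ u$ of $\mathcal{T} \circ u$ and output $\bm{q}_1, \bm{q}_2$, so $\bm{q}_1 \mathrel{\mathcal{E}_u} \bm{q}_2$; the second equality records $\bm{q}_1 \circ x = \bm{q}_2 \circ x$, which I will need to carry along. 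Since $\cdot$ is a right action of $\Sigma^*$, we also have $\bm{r}_i \cdot ux = (\bm{r}_i \cdot u) \cdot x = \bm{q}_i \cdot x$, whence $\bm{p}_i \approx \bm{q}_i \cdot x$.

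Now I would invoke the hypothesis: from $\bm{q}_1 \mathrel{\mathcal{E}_u} \bm{q}_2$ and $\mathcal{E}_u = \mathcal{E}_v$ we get $\bm{q}_1 \mathrel{\mathcal{E}_v} \bm{q}_2$, so there are $\bm{s}_1, \bm{s}_2 \in R$ and $\bm{q}_1', \bm{q}_2' \in Q^+$ with $\bm{q}_i \approx \bm{q}_i'$, a common target $\bm{s}_1 \circ v = \bm{s}_2 \circ v$, and $\bm{q}_i' = \bm{s}_i \cdot v$. I claim $\bm{s}_1, \bm{s}_2$ already witness $\bm{p}_1 \mathrel{\mathcal{E}_{vx}} \bm{p}_2$. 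On the output side, $\bm{s}_i \cdot vx = (\bm{s}_i \cdot v) \cdot x = \bm{q}_i' \cdot x$, and $\bm{q}_i \approx \bm{q}_i'$ yields $\bm{q}_i \cdot x \approx \bm{q}_i' \cdot x$ by \autoref{fct:approxCdot}, so $\bm{p}_i \approx \bm{q}_i \cdot x \approx \bm{s}_i \cdot vx$, as required. It then only remains to verify the common-target condition $\bm{s}_1 \circ vx = \bm{s}_2 \circ vx$.

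This last verification is where I expect the real content to sit. Splitting once more, $\bm{s}_i \circ vx = (\bm{s}_i \circ v)\,(\bm{q}_i' \circ x)$, and the first factors already coincide, so the task reduces to $\bm{q}_1' \circ x = \bm{q}_2' \circ x$. This is exactly the point at which the two halves of the data must be made to interact: the $\circ v$-part is supplied directly by the fresh $\mathcal{E}_v$-witness, whereas the $\circ x$-part has to be imported from the $\mathcal{E}_u$-side, where I recorded $\bm{q}_1 \circ x = \bm{q}_2 \circ x$. The bridge is that $\bm{q}_i \approx \bm{q}_i'$ implies $\bm{q}_i =_{\mathcal{T}} \bm{q}_i'$ by \autoref{fct:approxImpliesEquiv}, hence $\bm{q}_i' \circ x = \bm{q}_i \circ x$; combining this with $\bm{q}_1 \circ x = \bm{q}_2 \circ x$ gives $\bm{q}_1' \circ x = \bm{q}_2' \circ x$ and closes the argument. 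The only genuinely delicate steps are keeping the length-based splitting of the outputs honest and applying the $\approx$-passages in the correct direction (via \autoref{fct:approxCdot} on the output sequences and via \autoref{fct:approxImpliesEquiv} to recover equality of the actions on $x$); everything else is routine bookkeeping with the run notation and the right-action identity.
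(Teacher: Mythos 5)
Your proof is correct, but it is organized differently from the paper's. The paper first reduces to the case of a single letter $x = a \in \Sigma$ and then handles general $x$ by induction; in the single-letter case, the splitting of a run on $ua$ into a run on $u$ plus one final transition is read off directly from the construction of the orbital transducer $\mathcal{T} \circ ua$. You instead treat arbitrary $x$ in one shot by invoking the general identities $\bm{r} \circ ux = (\bm{r} \circ u)\,\bigl((\bm{r} \cdot u) \circ x\bigr)$ and $\bm{r} \cdot ux = (\bm{r} \cdot u) \cdot x$ together with length-preservation to split the runs at the $u$/$x$ boundary, which eliminates the induction entirely. After that point the two arguments coincide step for step: both transfer the resulting $\mathcal{E}_u$-pair to an $\mathcal{E}_v$-pair via the hypothesis, both use \autoref{fct:approxImpliesEquiv} to ensure the fresh witnesses act on the appended suffix exactly as the old ones did (so the common-target condition survives), and both use \autoref{fct:approxCdot} to verify that the new outputs are $\approx$-equivalent to $\bm{p}_1, \bm{p}_2$. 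What the paper's version buys is that every step stays local to the transducer construction (only a one-letter extension must be understood); what your version buys is brevity and the avoidance of the induction, at the cost of the splitting identities --- which the paper uses elsewhere anyway (e.g.\ in the proof of \autoref{prop:EwAndExpandability}), so nothing new is really being assumed. One small point worth making explicit in your write-up: the witnesses $\bm{s}_i \cdot vx$ you produce must lie in $Q^+$, which holds because the $\mathcal{E}_v$-witnesses $\bm{q}_i' = \bm{s}_i \cdot v$ are non-empty by definition, forcing $\bm{s}_i \in Q^+$.
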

    \begin{proof}
      We will only show the statement for $x = a \in \Sigma$ since the general case follows from this using an induction. Assume that we have $\mathcal{E}_u = \mathcal{E}_v$ for some $u, v \in \Sigma^*$. If we have $\bm{q}_1 \mathrel{\mathcal{E}_{ua}} \bm{q}_2$ for some $\bm{q}_1, \bm{q}_2 \in Q^+$, then there are $\bm{q}_1', \bm{q}_2' \in Q^+$ and $\bm{r}'_1, \bm{r}'_2 \in R$ with $\bm{q}_1 \approx \bm{q}_1'$, $\bm{q}_2 \approx \bm{q}_2'$ and runs $\lrun{u'a'}{\bm{r}'_1}{\bm{q}_1'}{ua}$ and $\lrun{u'a'}{\bm{r}'_2}{\bm{q}_2'}{ua}$ in $\mathcal{T} \circ ua$ for some $u' \in \Sigma^*$ and $a' \in \Sigma$. By the construction of $\mathcal{T} \circ ua$ this is only possible if there are $\bm{p}_1', \bm{p}_2' \in Q^+$ with $\bm{p}_1' \cdot a = \bm{q}_1'$, $\bm{p}_2' \cdot a = \bm{q}_2'$, $\bm{p}_1' \circ a = a' = \bm{p}_2' \circ a$ and runs $\lrun{u'}{\bm{r}'_1}{\bm{p}_1'}{u}$ and $\lrun{u'}{\bm{r}'_2}{\bm{p}_2'}{u}$ in $\mathcal{T} \circ u$, which implies that we have $\bm{p}'_1 \mathrel{\mathcal{E}_u} \bm{p}'_2$ and, thus, $\bm{p}'_1 \mathrel{\mathcal{E}_v} \bm{p}'_2$.

      This, in turn, implies that there are $\bm{p}_1'', \bm{p}_2'' \in Q^+$ and $\bm{r}''_1, \bm{r}''_2 \in R$ with $\bm{p}_1'' \approx \bm{p}_1'$, $\bm{p}_2'' \approx \bm{p}_2'$ and runs $\lrun{v'}{\bm{r}''_1}{\bm{p}''_1}{v}$ and $\lrun{v'}{\bm{r}''_2}{\bm{p}_2''}{v}$ in $\mathcal{T} \circ v$ for some $v' \in \Sigma^*$. From this follows $\bm{p}_1' =_{\mathcal{T}} \bm{p}_1''$ and $\bm{p}_2' =_{\mathcal{T}} \bm{p}_2''$ by \autoref{fct:approxImpliesEquiv}, which, in particular, means $a' = \bm{p}_1' \circ a = \bm{p}_1'' \circ a$ and $a' = \bm{p}_2' \circ a = \bm{p}_2'' \circ a$. Thus, we have runs $\lrun{v'a'}{\bm{r}_1''}{\bm{q}_1''}{va}$ and $\lrun{v'a'}{\bm{r}_2''}{\bm{q}_2''}{va}$ in $\mathcal{T} \circ va$ for $\bm{q}_1'' = \bm{p}_1'' \cdot a$ and $\bm{q}_2'' = \bm{p}_2'' \cdot a$. Finally, \autoref{fct:approxCdot} yields $\bm{q}_1'' = \bm{p}_1'' \cdot a \approx \bm{p}_1' \cdot a = \bm{q}_1' \approx \bm{q}_1$ and $\bm{q}_2'' = \bm{p}_2'' \cdot a \approx \bm{p}_2' \cdot a = \bm{q}_2' \approx \bm{q}_2$, which shows $\bm{q}_1 \mathrel{\mathcal{E}_{va}} \bm{q}_2$. The other direction is symmetric.
    \end{proof}
  \end{section}
  \begin{section}{Bounded Activity}
    In this section, we will use the previous results to show that the $\omega$-words with infinite $R$-orbits form a deterministic $\omega$-regular language if the automaton $\mathcal{T}$ is of bounded $S$-activity and $R$ is regular. Therefore, we assume the latter from now on and that there is some constant $K$ bounding all $\alpha(n)$ with $n \in \mathbb{N}$.

    \begin{subsection}{A Finite Acceptor for the Expansion Relation}\label{ssec:NFRAs}
      The basic idea of our approach is that we will store the expansion relation using a finite acceptor whose size is uniformly bounded. While we could use nondeterministic finite-state transducers for this, we will instead introduce a different acceptor model that more closely mimics a minimization of the orbital transducer or, in fact, the product $\mathcal{T} \circ w \times R$.
      
      \begin{definition}
        A nondeterministic finite relation acceptor (NFRA) $\mathcal{A}$ is a tuple $(Z, \Gamma, \tau,\allowbreak z_0, \mathcal{F})$ where $Z$ is a set of \emph{states}, $\Gamma$ is an alphabet, $\tau \subseteq Z \times \Gamma \times Z$ is the set of \emph{transitions}, $z_0 \in Z$ is the \emph{initial state} and $\mathcal{F} \subseteq Z \times Z$ is the \emph{acceptance relation}. For a transition $(y, a, z)$, we also use the graphical notations $\transa{y}{a}{z}$ and $\ltransa{z}{a}{y}$.
        
        A \emph{run} of the NFRA $\mathcal{A} = (Z, \Gamma, \tau, z_0, \mathcal{F})$ on a word $a_1 \dots a_n$ with $a_1, \dots, a_n \in \Gamma$ is a sequence
        \vspace*{-\baselineskip}
        \begin{center}
          \begin{tikzpicture}[auto, swap, baseline=(y0.base)]
            \node (y0) {$y_0$};
            \node[left=of y0] (y1) {$y_1$};
            \node[left=of y1] (dots) {$\dots$};
            \node[left=of dots] (yn) {$y_n$};
            
            \path[->] (y0) edge node {$a_1$} (y1)
                      (y1) edge node {$a_2$} (dots)
                      (dots) edge node {$a_n$} (yn)
            ;
          \end{tikzpicture}
        \end{center}
        with $\transa{y_{i - 1}}{a_i}{y_i} \in \tau$ for all $1 \leq i \leq n$.
        It \emph{starts} in $y_0$ and \emph{ends} in $y_n$.
        It is \emph{initial} if $y_0 = z_0$.
        In short-hand notation, we also denote this run by $\lruna{y_n}{a_n \dots a_1}{y_0}$.
        
        An NFRA $\mathcal{A} = (Z, \Gamma, \tau, z_0, \mathcal{F})$ \emph{accepts} a pair $(u, v) \in \Gamma^* \times \Gamma^*$ if it admits an initial run ending in $z$ on $u$ and an initial run ending in $y$ on $v$ with $z \mathrel{\mathcal{F}} y$. The \emph{recognized relation} for $\mathcal{A}$ is $\mathscr{R}(\mathcal{A}) = \{ (u, v) \in \Gamma^* \times \Gamma^* \mid \mathcal{A} \text{ accepts } (u, v) \}$.
      \end{definition}
      \begin{remark}
        For readers familiar with rational subsets of monoids (see e.\,g.\ \cite{edam}), we want to point out that $\mathscr{R}(\mathcal{A})$ is indeed a rational relation (in the sense that it is a rational subset of the monoid $\Gamma^* \times \Gamma^*$). This may be seen from observing that any NFRA $\mathcal{A} = (Z, \Gamma, \tau, z_0, \mathcal{F})$ may be turned into a (more traditional) non-deterministic, finite-state (spelling) $\Gamma^* \times \Gamma^*$-acceptor (again, see e.\,g.\ \cite{edam} for more information on $M$-automata or, as we call them here, $M$-acceptors). The idea is to take $Z \times Z$ as the state set, $(z_0, z_0)$ as the initial state, $\mathcal{F}$ as the set of final states and add the transitions
        \begin{align*}
          &\{ \transa{(x, y)}{(a, \varepsilon)}{(x', y)} \mid \transa{x}{a}{x'} \in \tau, y \in Z \}\\
          {}\cup{}&
          \{ \transa{(x, y)}{(\varepsilon, a)}{(x, y')} \mid \transa{y}{a}{y'} \in \tau, x \in Z \} \text{.}
        \end{align*}
        It is now a routine matter to check that the NFRA $\mathcal{A}$ accepts a pair $(u, v)$ if and only if there is an initial run ending in a final state in the thus constructed acceptor where the concatenation of the first components of the transitions on that run is exactly $u$ and the concatenation of the second components is exactly $v$.
      \end{remark}
      
      We will describe an NFRA that recognizes the expansion relation.
      However, before we do this in full technical detail, we first describe the general idea.

      \paragraph*{Idea of the construction.}
        To explain the construction, it is helpful to first consider the special case where $R = Q^*$ and $S=\{ e \}$ for an identity state $e$.
        For this choice, there is only one class of $R$ and we de not need to keep track of this information.
        In order to simply things further, we also re-define $\bm{p} \approx \bm{q}$ (for $\bm{p}, \bm{q} \in Q^*$) for this special case for now: we let $\bm{p} \approx \bm{q}$ if $\bm{p}$ and $\bm{q}$ are the same word up to removing all occurrences of the identity state $e$.\footnote{The difference to the previous definition is that there a block of $e$ states was shortened into a single $e$ but not completely dropped.}
        
        Suppose we want to encode the expansion relation $\mathcal{E}_w$ for some $w \in \Sigma^*$.
        We will keep track of whether there is a path $\lrun{v}{}{\bm{q}}{w}$ in the orbital graph $\mathcal{T} \circ w$ for some $v \in \Sigma^*$ and $\bm{q} \in Q^*$.
        The main idea is that we may factorize $\bm{q} = \bm{r}_{\ell} q_\ell \cdots \bm{r}_1 q_1 \bm{r}_0$ for $\bm{r}_0, \dots, \bm{r}_\ell \in \{ e \}^*$ and $q_1, \dots, q_\ell \in Q$.
        Consider a run
        \begin{center}
          \begin{tikzpicture}[auto, swap, baseline=(y0.base)]
            \node (w0) {$w_0 = w$};
            \node[left=of w0] (v0) {$v_0$};
            \node[left=of v0] (w1) {$w_1$};
            \node[left=of w1] (dots) {$\dots$};
            \node[left=of dots] (wl) {$w_\ell$};
            \node[left=of wl] (vl) {$v_\ell$};
            
            \path[->] (w0) edge[path] node {$\scriptsize/\bm{r}_0$} (v0)
                      (v0) edge node {$\scriptsize/q_1$} (w1)
                      (w1) edge[path] node {$\scriptsize/\bm{r}_1$} (dots)
                      (dots) edge node {$\scriptsize/q_\ell$} (wl)
                      (wl) edge[path] node {$\scriptsize/\bm{r}_\ell$} (vl)
            ;
          \end{tikzpicture}
        \end{center}
        in $\mathcal{T} \circ w$. By definition, the word $w_i$ is active for the state $q_i$ (for $1 \leq i \leq \ell$) and the number of such active words is bounded by a constant $K$.
        The main idea is now that, since the $\bm{r}_i \in \{ e \}^*$ will not contribute to expanding the orbit, we may compress this run into
        \begin{center}
          \begin{tikzpicture}[auto, swap, baseline=(w0.base)]
            \node (w0) {$w_0 = w$};
            \node[left=of w0] (w1) {$w_1$};
            \node[left=of w1] (dots) {$\dots$};
            \node[left=of dots] (wl) {$w_\ell$};
            
            \path[->]
              (w0) edge node {$\scriptsize/q_1$} (w1)
              (w1) edge node {$\scriptsize/q_2$} (dots)
              (dots) edge node {$\scriptsize/q_\ell$} (wl)
            ;
          \end{tikzpicture}.
        \end{center}
        Formally, we use the active words $u$ of length $|w|$ and $w$ as the states of the NFRA and define
        \[
          \mathscr{W}(u) = \{ v \mid \lrun{v}{}{\bm{r}}{u} \text{ for some } \bm{r} \in \{ e \}^* \} \text{.}
        \]
        We then have a transition $\ltransa{u'}{q}{u}$ (for $q \in Q$) if there is some $v \in \mathscr{W}(u)$ with $\ltrans{u'}{}{q}{v}$.
        By construction, we now have a run $\lruna{v}{\bm{q}}{w}$ in this NFRA if and only if there is a run from $w$ to $v$ in $\mathcal{T} \circ w$ whose output is $\approx$-equivalent to $\bm{q}$.
        We can extend this run to any $w' \in \mathscr{W}(v)$ without changing the class of the output (with respect to $\approx$).
        Thus, we have $\bm{p} \mathrel\mathcal{E}_w \bm{q}$ if and only if there are runs $\lruna{v}{\bm{p}}{w}$ and $\lruna{v'}{\bm{q}}{w}$ and some $w' \in \mathscr{W}(v) \cap \mathscr{W}(v')$. We, therefore, use $v \mathrel\mathcal{F} v'$ if and only if $\mathscr{W}(v) \cap \mathscr{W}(v') \neq \emptyset$ as the acceptance relation of the NFRA.
        
        This way, we have defined an NFRA with at most $K$ states for the expansion relation $\mathcal{E}_w$.
        For more general $R$, we also have to keep track of the input but only up to its class (with respect to $R$) and, for more general $S$, we cannot simply remove all $e$ states but need to keep track of the image in $S$.
        This makes the actual construction more technical.
      
      \paragraph*{Back to the General Case.}
      We now return to the general case and proceed by describing the NFRA $\mathcal{A}_w$ that recognizes $\mathcal{E}_w$ for some word $w \in \Sigma^*$, which we fix for the rest of this subsection to simplify our notation.
      Additionally, we write $A = A(|w|)$ for the set of $S$-active words of length $|w|$.
      Recall that we have $|A| \leq K$ and write $P$ for $P = Q \setminus S$.
      \begin{definition}\label{def:Aw}
        Let
        \begin{align*}
          Z ={}&\{ (u, \varepsilon, C, C) \mid u \in A \cup \{ w \}, C \text{ class of } R \} \cup{}\\
          & \{ (u, s, C, D) \mid u \in A \cup \{ w \}, s \in S, C \text{ and } D \text{ classes of } R \} \text{.}
        \end{align*}
        For every $\bm{t} \in S^+$, $u \in A \cup \{ w \}$ and every pair $C, D$ of classes of $R$, there is exactly one element $(u, s, C, D)$ in $Z$ with $\bm{t} =_{\mathcal{T}} s$. For convenience, we write $(u, \bm{t}, C, D)$ for this unique element.
        
        Define\footnote{Recall that we use $\mathcal{P}(X)$ to denote the powerset of $X$.}
        \begin{align*}
          \mathscr{W}: Z &\to \mathcal{P}(Q^* \circ w) \\
          (u, \varepsilon, C, C) &\mapsto \{ u \}\\
          (u, s, C, D) &\mapsto \{ u' \mid \exists \bm{t} \in S^+: s =_{\mathcal{T}} \bm{t}, \lrun{(u', D)}{}{\bm{t}}{(u, C)} \text{ in } \mathcal{T} \circ w \times R \} \text{ for } s \in S \text{.}
        \end{align*}
        and let $\mathcal{A}_w$ denote the NFRA with states $Z$, alphabet $Q$, transitions
        \begin{align*}
          & \{ \transa{(u, \varepsilon, C, C)}{s}{(u, s, C, D)} \mid
            \begin{aligned}[t]
              &u \in A \cup \{ w \}, C, D \text{ classes of } R, s \in S \}
            \end{aligned}\\
          {}\cup{}& \{ \transa{(u, s, C, D)}{t}{(u, st, C, D)} \mid
            \begin{aligned}[t]
              &u \in A \cup \{ w \}, C, D \text{ classes of } R, s, t \in S \}
            \end{aligned}\\
          {}\cup{}& \{ \transa{(u, \varepsilon, C, C)}{p}{(v, \varepsilon, E, E)} \mid \begin{aligned}[t]
            &u \in A \cup \{ w \}, C \text{ class of } R,\\
            &\ltrans{(v, E)}{}{p}{(u, C)} \text{ in } \mathcal{T} \circ w \times R\\
            &\text{for } p \in P \text{ and some class $E$ of $R$} \} 
          \end{aligned}\\
          {}\cup{}& \{ \transa{(u, s, C, D)}{p}{(v, \varepsilon, E, E)} \mid \begin{aligned}[t]
            &u, v \in A \cup \{ w \}, C, D \text{ classes of } R, s \in S,\\
            &\exists u' \in \mathscr{W}(u, s, C, D) \text{ with }\\
            &\ltrans{(v, E)}{}{p}{(u', D)} \text{ in } \mathcal{T} \circ w \times R\\
            &\text{for } p \in P \text{ and some class $E$ of $R$} \} \text{,}
          \end{aligned}
        \end{align*}
        initial state $(w, \varepsilon, \mathscr{C}(\varepsilon), \mathscr{C}(\varepsilon))$ and the accepting relation $\mathcal{F} \subseteq Z \times Z$ given by
        \[
          (u, s, C, D) \mathrel{\mathcal{F}} (v, t, E, F) \iff D, F \subseteq R, \mathscr{W}(u, s, C, D) \cap \mathscr{W}(v, t, E, F) \neq \emptyset \text{.}
        \]
      \end{definition}
      
      \begin{remark}
        We may define this acceptor for any language $R \subseteq Q^+$ but it only remains finite if $R$ has finitely many classes (i.\,e.\ if $R$ is regular).
        The size of the state set $Z$ of $\mathcal{A}_w$ is then $(|A \cup \{ w \}|)(1 \cdot N + |S| \cdot N^2) \leq (K + 1)N(|S|N + 1)$ where $N$ is the number of classes of $R$. Thus, the size of $\mathcal{A}_w$ is not only finite but uniformly bounded.
        Furthermore, note that $\mathcal{A}_w$ can clearly be computed from $w$, $\mathcal{T}$, $R$ and $S$.
      \end{remark}
      
      \begin{example}[NFRA for the Expansion Relation of the Adding Machine]\label{ex:NFRAaddingMachine}
        The NFRA $\mathcal{A}_w$ we obtain for $w = 0^\ell$ (where $\ell > 0$) for the adding machine from \autoref{ex:addingMachine} (with $R = Q^*$ and $S = \{ e \}$) is very simple.
        We have $A = \{ 0^\ell \}$ and, since we chose $R = Q^*$, only a single class $C$ of $R$.
        It, therefore, makes sense to simply write $(u)$ for the state $(u, \varepsilon, C, C)$ and $(u, e)$ for the state $(u, e, C, C)$;
        in fact, we only have such states for $u = w = 0^\ell$.
        We may depict $\mathcal{A}_w$ in the usual way of depicting automata as:
        \begin{center}
          \begin{tikzpicture}[auto, shorten >=1pt, >=latex, initial text=]
            \node[state, initial] (0) {$0^\ell$};
            \node[state, right=of 0] (0e) {$0^\ell, e$};
            
            \draw[->]
              (0) edge[bend left] node {$e$} (0e)
              (0e) edge[bend left] node {$q$} (0)
              (0e) edge[loop right] node {$e$} (0e)
            ;
          \end{tikzpicture}
        \end{center}
        We have $\mathscr{W}(0^\ell) = \{ 0^\ell \}$ and $\mathscr{W}(0^\ell, e) = \{ 0, 1 \}^\ell$ (since we may reach any word $v \in \{ 0, 1 \}^\ell$ by a path $\lrun{v}{}{e^i}{0^\ell}$ for some $i$; compare to \autoref{fig:orbTransAddingMachine}).
        Since $\mathcal{W}(0^\ell) \cap \mathcal{W}(0^\ell, e) = \{ 0^\ell \} \neq \emptyset$, we have that all pairs of states $\{ 0^\ell, (0^\ell, e) \}^2$ are in $\mathcal{F}$.
        We will show in \autoref{prop:AwRecEw} that $\mathcal{A}_w$ recognizes $\mathcal{E}_w$ and, thus, obtain that $\bm{p}_1 \mathrel{\mathcal{E}_w} \bm{p}_2$ whenever $\bm{p}_1$ and $\bm{p}_2$ can both be read from $0^\ell$ in the above NFRA.
        This is the case if and only if $\bm{p}_1, \bm{p}_2 \in \left( e^* q e^+ \right)^*$ (as claimed in \autoref{ex:expansionAddingMachine}).
      \end{example}
      
      \begin{example}[NFRA for the Running Example]
        Recall our running example from \autoref{ex:runningAut} and (continue to) let $S = \{ e \}$, $R = Q^*$ and $w = 22$.
        Since we again only have a single class $C$ of $R$, we continue to simply write $u$ for the state $(u, \varepsilon, C, C)$ and $(u, e)$ for $(u, e, C, C)$.
        The active words are already marked in the orbital transducer $\mathcal{T} \circ 22$ from \autoref{fig:orbTransRunning}.
        We have $\{ w \} \cup A = \{ 02, 20, 21, 22 \}$ and, thus,
        \[
          Z = \{ 02, (02, e), 20, (20, e), 21, (21, e), 22, (22, e) \} \text{.}
        \]
        By definition, we have $\mathscr{W}(u) = \{ u \}$ for $u \in \{ 02, 20, 21, 22 \}$.
        For the states of the form $(u, e)$, $\mathscr{W}(u, e)$ contains those words $v$ that are reachable from $u$ by a run $\lrun{v}{}{e^i}{u}$ for some $i > 0$ (but observe that we always have an $e/e$-self-loop at every state in $\mathcal{T} \circ 22$).
        This yields:
        \begin{align*}
          \mathscr{W}(02, e) &= \{ 02, 11, 12, 21 \}, &
          \mathscr{W}(20, e) &= \{ 10, 11, 20, 21 \}, \\
          \mathscr{W}(21, e) &= \{ 11, 21 \} \qquad \text{and} &
          \mathscr{W}(22, e) &= \{ 11, 12, 21, 22 \}
        \end{align*}
        Observe that $11$ is contained in all sets $\mathscr{W}(u, e)$.
        Thus, we have $(u, e) \mathrel{\mathcal{F}} (v, e)$ for all $u, v \in \{ w \} \cup A$.
        We also always have $\mathscr{W}(u) \cap \mathscr{W}(u, e) = \{ u \} \neq \emptyset$ (because of the $e/e$-self-loops) and, thus, $u \mathrel{\mathcal{F}} (u, e)$ for all $u \in \{ w \} \cup A$.
        We also have $21 \in \mathscr{W}(u, e)$ for all such $u$ and, thus, $21, (21, e) \mathrel{\mathcal{F}} (u, e)$.
        
        For the transitions, we obtain $\transa{u}{e}{(u, e)}$ and $\transa{(u, e)}{e}{(u, e)}$ for all $u \in \{ w \} \cup A$ from the first and second line of the definition.
        Observe that no state $u \in \{ w \} \cup A$ has an out-going transition with an output different to $e$ in $\mathcal{T} \circ 22$ (although such transitions may exist in general!) and that, thus, we do not create any transition using the third line of the definition.
        The last line does create transitions, however.
        For example, consider the transition $\trans{12}{p}{r}{02}$ in $\mathcal{T} \circ 22$.
        We have $12 \in \mathscr{W}(02, e), \mathscr{W}(22, e)$ and, therefore, obtain the transitions $\transa{(02, e)}{r}{02}$ and $\transa{(22, e)}{r}{02}$ in $\mathcal{A}_w$.
        
        In the same way, all the (reachable) transitions with output different to $e$ create transitions in $\mathcal{A}_w$ and, in the end, we obtain:
        \begin{center}
          \begin{tikzpicture}[auto, shorten >=1pt, >=latex, node distance=1cm and 5cm]
            \node[state] (22) {$22$};
            \node[state, right=of 22, ellipse] (22e) {$22, e$};
            \node[state, below=of 22] (02) {$02$};
            \node[state, right=of 02, ellipse] (02e) {$02, e$};
            \node[state, below=of 02] (20) {$20$};
            \node[state, right=of 20, ellipse] (20e) {$20, e$};
            \node[state, below=of 20] (21) {$21$};
            \node[state, right=of 21, ellipse] (21e) {$21, e$};
            
            \draw[->, gray]
              (22) edge node {$e$} (22e)
              (22e) edge[loop right] node {$e$} (22e)
              (02) edge node {$e$} (02e)
              (02e) edge[loop right] node {$e$} (02e)
              (20) edge node {$e$} (20e)
              (20e) edge[loop right] node {$e$} (20e)
              (21) edge node {$e$} (21e)
              (21e) edge[loop right] node {$e$} (21e)
            ;
            
            \draw[->, pos=0.1]
              (22e) edge node[sloped, above, pos=0.25] {$p, r$} (02)
                    edge node {$q$} (20)
              (02e) edge[bend right] node[sloped, below] {$p, r$} (02)
                    edge node {$q$} (20)
              (20e) edge node[sloped, below, pos=0.25] {$p$} (02)
                    edge[bend right] node[sloped, above] {$q$} (20)
                    edge node {$r$} (21)
              (21e) edge node[sloped, above] {$p$} (02)
                    edge node[sloped, below, pos=0.25] {$q$} (20)
            ;
          \end{tikzpicture}
        \end{center}
        Note that, while this automaton is deterministic, this is not the case in general!
      \end{example}

      The states of the NFRA represent parts of the orbital transducer $\mathcal{T} \circ w$ or -- more precisely -- the product $\mathcal{T} \circ w \times R$. This correspondence is given by $\mathscr{W}$: states of the form $(u, \varepsilon, C, C)$ represent the (constantly many) words from $A \cup \{ w \}$ with an in-going transition whose output is not from $S$ and the other states of the form $(u, s, C, D)$ represent the words reachable with output equal to $s$ in $\mathscr{S}(\mathcal{T})$ from those words in $A \cup \{ w \}$. We have basically attached a copy of the (finite) Cayley graph of $S^+ / {=_{\mathcal{T}}}$ to every $u \in A \cup \{ w \}$. Additionally, we continue to have those transitions whose output is not from $S$.
      
      While we are mostly interested in the output of the transitions in $\mathcal{T} \circ w$, we use the last two components of a state to track the class of $R$ for the input. The first one is used to store the class of the input up to when we entered the last word from $A \cup \{ w \}$ and, in the second one, we nondeterministically guess the class of the input when we leave the local copy of the Cayley graph. Of course, it is possible that the nondeterministic choice is impossible to realize; in this case, we have that $\mathscr{W}$ of a state is the empty set.
      
      \begin{proposition}\label{prop:AwRecEw}
        The recognized relation of $\mathcal{A}_w$ is $\mathcal{E}_w$.
      \end{proposition}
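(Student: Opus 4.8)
The plan is to reduce the claimed identity $\mathscr{R}(\mathcal{A}_w) = \mathcal{E}_w$ to a single \emph{correspondence lemma} that matches initial runs of $\mathcal{A}_w$ with runs in $\mathcal{T} \circ w$ together with their input class (equivalently, runs in the product $\mathcal{T} \circ w \times R$), and then to read off the acceptance relation $\mathcal{F}$. Concretely, I would prove that for every $\bm{p} \in Q^+$, every class $D$ of $R$ and every orbit word $v \in Q^* \circ w$, there is an initial run of $\mathcal{A}_w$ on $\bm{p}$ ending in a state $z = (u, s, C, D)$ with $v \in \mathscr{W}(z)$ if and only if there are $\bm{p}' \approx \bm{p}$ and $\bm{r} \in Q^*$ with $\mathscr{C}(\bm{r}) = D$ admitting a run $\lrun{v}{\bm{r}}{\bm{p}'}{w}$ in $\mathcal{T} \circ w$. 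Granting this, the proposition is immediate: $\mathcal{A}_w$ accepts $(\bm{p}_1, \bm{p}_2)$ exactly when there are initial runs ending in $z_1, z_2$ with $z_1 \mathrel{\mathcal{F}} z_2$, i.e.\ with last components $D_1, D_2 \subseteq R$ and a common $v \in \mathscr{W}(z_1) \cap \mathscr{W}(z_2)$; applying the lemma to this common $v$ yields $\bm{p}_i' \approx \bm{p}_i$ and $\bm{r}_i$ with $\mathscr{C}(\bm{r}_i) = D_i \subseteq R$ (hence $\bm{r}_i \in R$) and runs $\lrun{v}{\bm{r}_1}{\bm{p}_1'}{w}$ and $\lrun{v}{\bm{r}_2}{\bm{p}_2'}{w}$ to the same $v$, which is precisely the definition of $\bm{p}_1 \mathrel{\mathcal{E}_w} \bm{p}_2$, and conversely.

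I would prove the correspondence lemma by induction, peeling off the last processed chunk of $\bm{p}$ according to the free-product structure of $Q^+/{\approx}$. The base case is $\bm{p} \in S^+$: the only initial runs stay at the checkpoint $w$, using one transition of type $1$ to enter the local copy of the Cayley graph of $S^+/{=_{\mathcal{T}}}$ and guess the exit class $D$, then several transitions of type $2$ to read the remaining $S$-letters, ending in the state $(w, s, \mathscr{C}(\varepsilon), D)$ with $s =_{\mathcal{T}} \bm{p}$. Here $\mathscr{W}(w, s, \mathscr{C}(\varepsilon), D)$ is, by definition, exactly the set of $v$ reachable from $w$ by an $S$-output path with output $=_{\mathcal{T}} s$ and class $\mathscr{C}(\varepsilon) \to D$, which matches the orbital runs $\lrun{v}{\bm{r}}{\bm{p}'}{w}$ with $\bm{p}' \approx \bm{p} \in S^+$ and $\mathscr{C}(\bm{r}) = D$. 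For the inductive step I distinguish whether $\bm{p}$ ends in an $S$-block or a $P$-letter. If $\bm{p} = \bm{p}'' \bm{t}$ with $\bm{t} \in S^+$, a run on $\bm{p}$ factors as a run on $\bm{p}''$ reaching a checkpoint state $(u, \varepsilon, C, C)$ followed by one type-$1$ and several type-$2$ transitions; the inductive hypothesis supplies the orbital run to $u$ at class $C$, and the defining clause of $\mathscr{W}(u, s, C, D)$ prolongs it by the $S$-output path to $v$. If $\bm{p} = \bm{p}'' p$ with $p \in P$, the run factors as a run on $\bm{p}''$ reaching some $(u, s, C, D)$ followed by a type-$3$ or type-$4$ transition to $(v, \varepsilon, E, E)$; the side condition of that transition is exactly the existence of a witness $u' \in \mathscr{W}(u, s, C, D)$ with a $P$-labelled product transition $\ltrans{(v, E)}{}{p}{(u', D)}$, so the inductive hypothesis applied with this $u'$ prolongs the orbital run by one step to $v$ at input class $E$, matching $\bm{p}''' p \approx \bm{p}$. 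Both directions run in parallel, since every clause invoked is an equivalence.

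I expect the main obstacle to be the bookkeeping at the boundary between $S$-blocks and $P$-letters, and in particular justifying that the one-time nondeterministic guess of the exit class $D$ in a type-$1$ transition is both sound and complete: while reading an $S$-block the acceptor fixes $C$ and $D$ and only accumulates the $S$-product, so I must show that $\mathscr{W}(u, s, C, D) \neq \emptyset$ holds precisely when there is an actual orbital $S$-output path from $u$ (at class $C$) realizing both the product $s$ and the class jump to $D$; states resting on a wrong guess have $\mathscr{W} = \emptyset$ and can never meet the acceptance relation. Relatedly, I must keep the reversal conventions of the orbital transducer straight, checking that the product's class update $\mathscr{C}(\bm{r}) \mapsto \mathscr{C}(\bm{r}r)$ is consistent with the left-to-right reading order of $\mathcal{A}_w$, and that replacing literal $S$-words by their unique $=_{\mathcal{T}}$-representatives is harmless because $\approx$ collapses each $S^+$-block to exactly that representative (so that $\bm{t} \approx \bm{t}'$ for $\bm{t}, \bm{t}' \in S^+$ holds iff $\bm{t} =_{\mathcal{T}} \bm{t}'$). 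This last point is what makes the finite local Cayley graph faithfully encode all $S$-block outputs up to $\approx$, and hence keeps the induction aligned with the $\approx$-tolerant definition of $\mathcal{E}_w$.
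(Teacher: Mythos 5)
Your proposal is correct and takes essentially the same route as the paper: the paper proves exactly your correspondence lemma, split into two one-directional claims (every run $\lrun{v}{\bm{r}}{\bm{q}}{w}$ in $\mathcal{T} \circ w$ yields an initial run of $\mathcal{A}_w$ on any $\bm{q}' \approx \bm{q}$ ending in a state $z$ with $v \in \mathscr{W}(z)$ and last component $\mathscr{C}(\bm{r})$, and conversely), each by the same induction on the $P$-letters and $S^+$-blocks, and then reads off acceptance through the relation $\mathcal{F}$ exactly as you do. The only difference is cosmetic: you package the two claims as a single biconditional proved in parallel, whereas the paper runs the two inductions separately.
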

      \begin{proof}
        First, we show the following (which we will refer to as Claim I): if we have a run $\lrun{v}{\bm{r}}{\bm{q}}{w}$ for some $\bm{r}, \bm{q} \in Q^*$ and $v \in \Sigma^*$ in $\mathcal{T} \circ w$, then, for all $\bm{q}' \in Q^*$ with $\bm{q}' \approx \bm{q}$, we have a run
        \[
          \lruna{(u, s, C, \mathscr{C}(\bm{r}))}{\bm{q}'}{ (w, \varepsilon, \mathscr{C}(\varepsilon), \mathscr{C}(\varepsilon)) }
        \]
        in $\mathcal{A}_w$ for some $u \in A \cup \{ w \}$, $s \in S \uplus \{ \varepsilon \}$ and class $C$ of $R$ with $v \in \mathscr{W}(u, s, C, \mathscr{C}(\bm{r}))$. From the claim follows that $\bm{p} \mathrel{\mathcal{E}}_w \bm{q}$ implies that $\mathcal{A}_w$ accepts $(\bm{p}, \bm{q})$ for all $\bm{p}, \bm{q} \in Q^+$.
        
        In order to show Claim I, we first observe that we have a run $\lrun{v}{\bm{r}}{\bm{q}}{w}$ in $\mathcal{T} \circ w$ if and only if we have a run $\lrun{(v, \mathscr{C}(\bm{r}))}{\bm{r}}{\bm{q}}{(w, \mathscr{C}(\varepsilon))}$ in $\mathcal{T} \circ w \times R$. We use this fact and an induction on the letters from $P$ and the $S^+$ blocks of $\bm{q}$ to show the claim. For the empty state sequence, there is nothing to show as we have $\mathscr{W}(w, \varepsilon, \mathscr{C}(\varepsilon), \mathscr{C}(\varepsilon)) = \{ w \}$.
        
        Next, we consider the state sequences $\bm{s} \bm{q}$ with $\bm{s} \in S^+$ and $\bm{q} \in \{ \varepsilon \} \cup PQ^*$ and suppose that there is a run
        \begin{tikzpicture}[auto, shorten >=1pt, >=latex, baseline=(v.base), inner sep=0pt, outer xsep=0.3333em]
          \node (v) {$(u', D)$};%
          \setlength{\edgelength}{\widthof{\scriptsize$/\bm{s}$}+0.5cm}%
          \node[base right=\edgelength of v] (u) {$(u, C)$};%
          \path[<-] ([yshift=-0.25mm] v.mid east) edge[path] node[inner sep=0pt] {\scriptsize$/\bm{s}$} ([yshift=-0.25mm] u.mid west);%
          
          \setlength{\edgelength}{\widthof{\scriptsize$/\bm{q}$}+0.5cm}%
          \node[base right=\edgelength of u] (w) {$(w, \mathscr{C}(\varepsilon))$};%
          \path[<-] ([yshift=-0.25mm] u.mid east) edge[path] node[inner sep=0pt] {\scriptsize$/\bm{q}$} ([yshift=-0.25mm] w.mid west);%
        \end{tikzpicture}
        in $\mathcal{T} \circ w \times R$ for some $u, u' \in \Sigma^*$. For every $\bm{q}' \in Q^*$ with $\bm{q}' \approx \bm{q}$, we have a run $\lruna{(u, \varepsilon, C, C)}{\bm{q}'}{(w, \varepsilon, \mathscr{C}(\varepsilon), \mathscr{C}(\varepsilon))}$ in $\mathcal{A}_w$ by induction (or because $\bm{q}$ is empty). Note here that only states of the form $(u, \varepsilon, C, C)$ have ingoing transitions labeled with elements from $P$ (and that the initial state is of this form as well). To conclude this case we show that we have a run $\lruna{(u, \bm{s}, C, D)}{\bm{t}}{(u, \varepsilon, C, C)}$ for all $\bm{t} \in S^+$ with $\bm{t} \approx \bm{s}$. This is sufficient because we have $u' \in \mathscr{W}(u, \bm{s}, C, D)$ due to the run $\lrun{(u', D)}{}{\bm{s}}{(u, C)}$ in $\mathcal{T} \circ w \times R$. We write $\bm{t} = t_\ell \dots t_2 t_1$ for $t_1, \dots, t_\ell \in S$ and observe that we have a run
        \begin{center}
          \begin{tikzpicture}[auto, shorten >=1pt, >=latex, baseline=(v.base), inner sep=0pt, outer xsep=0.3333em]
            \node (u) {$(u, \varepsilon, C, C)$};%
            \setlength{\edgelength}{\widthof{\scriptsize$t_1$}+0.5cm}%
            \node[base left=\edgelength of u] (t1) {$(u, t_1, C, D)$};%
            \path[<-] (t1.mid east) edge node[inner sep=0pt] {\scriptsize$t_1$} (u.mid west);%
            
            \setlength{\edgelength}{\widthof{\scriptsize$t_2$}+0.5cm}%
            \node[base left=\edgelength of t1] (t2t1) {$(u, t_2 t_1, C, D)$};%
            \path[<-] (t2t1.mid east) edge node[inner sep=0pt] {\scriptsize$t_2$} (t1.mid west);%
            
            \setlength{\edgelength}{\widthof{\scriptsize$t_3$}+0.5cm}%
            \node[base left=\edgelength of t2t1] (dots) {$\dots$};%
            \path[<-] (dots.mid east) edge node[inner sep=0pt] {\scriptsize$t_3$} (t2t1.mid west);%
            
            \setlength{\edgelength}{\widthof{\scriptsize$t_\ell$}+0.5cm}%
            \node[base left=\edgelength of dots] (t) {$(u, \bm{t}, C, D)$};%
            \path[<-] (t.mid east) edge node[inner sep=0pt] {\scriptsize$t_\ell$} (dots.mid west);%
          \end{tikzpicture}
        \end{center}
        in $\mathcal{A}_w$ by construction.
        
        For the remaining inductive case, consider a state sequence $p \bm{q}$ with $p \in P$ and $\bm{q} \in Q^*$ such that there is a run
        \begin{tikzpicture}[auto, shorten >=1pt, >=latex, baseline=(v.base), inner sep=0pt, outer xsep=0.3333em]
          \node (v) {$(v, D)$};%
          \setlength{\edgelength}{\widthof{\scriptsize$/p$}+0.5cm}%
          \node[base right=\edgelength of v] (u) {$(u', C)$};%
          \path[<-] ([yshift=-0.25mm] v.mid east) edge node[inner sep=0pt] {\scriptsize$/p$} ([yshift=-0.25mm] u.mid west);%
          
          \setlength{\edgelength}{\widthof{\scriptsize$/\bm{q}$}+0.5cm}%
          \node[base right=\edgelength of u] (w) {$(w, \mathscr{C}(\varepsilon))$};%
          \path[<-] ([yshift=-0.25mm] u.mid east) edge[path] node[inner sep=0pt] {\scriptsize$/\bm{q}$} ([yshift=-0.25mm] w.mid west);%
        \end{tikzpicture}
        in $\mathcal{T} \circ w \times R$. Let $\bm{q}' \in Q^*$ be arbitrary with $\bm{q}' \approx \bm{q}$. By induction (or because $\bm{q}$ is empty), we have a run $\lruna{(u, s, E, C)}{\bm{q}'}{(w, \varepsilon, \mathscr{C}(\varepsilon), \mathscr{C}(\varepsilon))}$ in $\mathcal{A}_w$ for some $u \in A \cup \{ w \}$, $s \in S \uplus \{ \varepsilon \}$ and class $E$ of $R$ with $u' \in \mathscr{W}(u, s, E, C)$. We also have a transition $\ltransa{(v, \varepsilon, D, D)}{p}{(u, s, E, C)}$ in $\mathcal{A}_w$ by construction because we have the transition $\ltrans{(v, D)}{}{p}{(u', C)}$ in $\mathcal{T} \circ w \times R$. This concludes our proof of Claim I and the first direction as we have $\mathscr{W}(v, \varepsilon, D, D) = \{ v \}$.
        
        For the other direction, we show the similar Claim II: if we have a run 
        \[
          \lruna{(u, s, C, D)}{\bm{q}}{(w, \varepsilon, \mathscr{C}(\varepsilon), \mathscr{C}(\varepsilon))}
        \]
        in $\mathcal{A}_w$ for some $\bm{q} \in Q^*$, $u \in A \cup \{ w \}$, $s \in S \uplus \{ \varepsilon \}$ and classes $C, D$ of $R$, then, for every $u' \in \mathscr{W}(u, s, C, D)$, there is some $\bm{q}' \in Q^*$ with $\bm{q}' \approx \bm{q}$, some $\bm{r}' \in D$ and a run $\lrun{u'}{\bm{r}'}{\bm{q}'}{w}$ in $\mathcal{T} \circ w$. The claim implies that $\bm{p} \mathrel{\mathcal{E}_w} \bm{q}$ holds if $\mathcal{A}_w$ accepts some pair $(\bm{p}, \bm{q}) \in Q^+ \times Q^+$.
        
        Recall that we have a run $\lrun{u'}{\bm{r}'}{\bm{q}'}{w}$ in $\mathcal{T} \circ w$ if and only if we have the run $\lrun{(u', \mathscr{C}(\bm{r}'))}{\bm{r}'}{\bm{q}'}{(w, \mathscr{C}(\varepsilon))}$ in $\mathcal{T} \circ w \times R$. We use this and another induction on the letters from $P$ and the $S^+$ blocks of $\bm{q}$ to show Claim II. For the empty state sequence, there is nothing to show as we have $\mathscr{W}(w, \varepsilon, \mathscr{C}(\varepsilon), \mathscr{C}(\varepsilon)) = \{ w \}$.
        
        Therefore, we first consider a state sequence $\bm{s} \bm{q}$ with $\bm{s} \in S^+$ and $\bm{q} \in \{ \varepsilon \} \cup PQ^*$ such that we have a run
        \begin{tikzpicture}[auto, shorten >=1pt, >=latex, baseline=(V.base), inner sep=0pt, outer xsep=0.3333em]
          \node (V) {$(u, \bm{s}, C, D)$};%
          \setlength{\edgelength}{\widthof{\scriptsize$\bm{s}$}+0.5cm}%
          \node[base right=\edgelength of V] (U) {$(u, \varepsilon, C, C)$};%
          \path[<-] ([yshift=-0.25mm] V.mid east) edge[path] node[inner sep=0pt] {\scriptsize$\bm{s}$} ([yshift=-0.25mm] U.mid west);%
          
          \setlength{\edgelength}{\widthof{\scriptsize$\bm{q}$}+0.5cm}%
          \node[base right=\edgelength of U] (w) {$(w, \varepsilon, \mathscr{C}(\varepsilon), \mathscr{C}(\varepsilon))$};%
          \path[<-] ([yshift=-0.25mm] U.mid east) edge[path] node[inner sep=0pt] {\scriptsize$\bm{q}$} ([yshift=-0.25mm] w.mid west);%
        \end{tikzpicture}
        in $\mathcal{A}_w$ for some $u \in A \cup \{ w \}$ and classes $C, D$ of $R$. Note that the states on the left have to be of this form due to the construction of $\mathcal{A}_w$: only states of the form $(u, \varepsilon, C, C)$ have ingoing transitions labeled with an element from $P$ (and the initial state is of this form) and, from them, we can only reach states of the form $(u, \bm{s}, C, D)$ by a run labeled with $\bm{s}$. Since we have $\mathscr{W}(u, \varepsilon, C, C) = \{ u \}$, there is some $\bm{q}' \in Q^*$ with $\bm{q}' \approx \bm{q}$ and a run $\lrun{(u, C)}{}{\bm{q}'}{(w, \mathscr{C}(\varepsilon))}$ in $\mathcal{T} \circ w \times R$ by induction. We are done with this inductive case if we show that, for every $u' \in \mathscr{W}(u, \bm{s}, C, D)$, there is a run $\lrun{(u', D)}{}{\bm{t}}{(u, C)}$ in $\mathcal{T} \circ w \times R$ for some $\bm{t} \in S^+$ with $\bm{t} \approx \bm{s}$. Such a run exists, however, by the definition of $\mathscr{W}(u, \bm{s}, C, D)$.
        
        It remains to consider the state sequences $p\bm{q}$ with $p \in P$ and $\bm{q} \in Q^*$ such that there is a run
        \begin{tikzpicture}[auto, shorten >=1pt, >=latex, baseline=(V.base), inner sep=0pt, outer xsep=0.3333em]
          \node (V) {$(v, \varepsilon, E, E)$};%
          \setlength{\edgelength}{\widthof{\scriptsize$p$}+0.5cm}%
          \node[base right=\edgelength of V] (U) {$(u, s, C, D)$};%
          \path[<-] ([yshift=-0.25mm] V.mid east) edge node[inner sep=0pt] {\scriptsize$p$} ([yshift=-0.25mm] U.mid west);%
          
          \setlength{\edgelength}{\widthof{\scriptsize$\bm{q}$}+0.5cm}%
          \node[base right=\edgelength of U] (w) {$(w, \varepsilon, \mathscr{C}(\varepsilon), \mathscr{C}(\varepsilon))$};%
          \path[<-] ([yshift=-0.25mm] U.mid east) edge[path] node[inner sep=0pt] {\scriptsize$\bm{q}$} ([yshift=-0.25mm] w.mid west);%
        \end{tikzpicture}
        in $\mathcal{A}_w$ for some $u, v \in A \cup \{ w \}$, $s \in S \uplus \{ \varepsilon \}$ and classes $C, D, E$ of $R$. Again, the state on the left has to be of this form because only such states have ingoing transitions with an element from $P$ as the label. Also by the construction of $\mathcal{A}_w$, we obtain that there is some $u' \in \mathscr{W}(u, s, C, D)$ with a run $\lrun{(v, E)}{}{p}{(u', D)}$ in $\mathcal{T} \circ w \times R$ from the transition $\ltransa{(v, \varepsilon, E, E)}{p}{(u, s, C, D)}$ in $\mathcal{A}_w$. By induction, we also obtain that there is some $\bm{q}' \in Q^*$ with $\bm{q}' \approx \bm{q}$ and a run $\lrun{(u', D)}{}{\bm{q}'}{(w, \mathscr{C}(\varepsilon))}$ in $\mathcal{T} \circ w \times R$, which concludes our proof because we have $\mathscr{W}(v, \varepsilon, E, E) = \{ v \}$.
      \end{proof}
    \end{subsection}
    \begin{subsection}{\texorpdfstring{$\omega$-Regular}{ω-Regular}}
      We have seen (in \autoref{prop:OrbitIncreaseDependsOnEw}) that the growth behavior of the orbits only depends on the expansion relation and that those can be encoded in NFRAs of uniformly bounded size (\autoref{ssec:NFRAs}).
      Together, this allows us to encode the orbit growth behavior in a finite acceptor. More precisely, we may characterize the infinite words with infinite orbits as a deterministic Büchi language.
      \begin{theorem}\label{thm:InfiniteOrbitIsRegular}
        For a regular language $R \subseteq Q^*$, the language
        \[
          \{ \alpha \in \Sigma^\omega \mid |R \circ \alpha| = \infty \}
        \]
        is recognized by a deterministic Büchi acceptor. Furthermore, this acceptor can be computed from $\mathcal{T}$, $R$ and $S$.
      \end{theorem}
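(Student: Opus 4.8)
The plan is to reduce the infinitude of the orbit $R \circ \alpha$ to a condition on finite prefixes and then recognize that condition with a deterministic Büchi acceptor whose finite state space is governed by the expansion relation. First I would record the analytic observation that, writing $\alpha_n$ for the length-$n$ prefix of $\alpha$, one has $|R \circ \alpha| = \infty$ if and only if the prefix-orbit sizes $|R \circ \alpha_n|$ are unbounded. The point is that $\bm{r}_1 \circ \alpha \neq \bm{r}_2 \circ \alpha$ holds precisely when the two outputs already differ on some finite prefix, so $|R \circ \alpha| = \sup_n |R \circ \alpha_n|$; and since the truncation map $R \circ \alpha_{n+1} \to R \circ \alpha_n$ is surjective (as in the proof of \autoref{prop:EwAndExpandability}), the sequence $|R \circ \alpha_n|$ is non-decreasing, hence unbounded exactly when it strictly increases infinitely often. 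Writing $G$ for the language of those $w \in \Sigma^+$ whose last letter expands its own prefix (i.e.\ $|R \circ w_{<}| < |R \circ w|$ for $w = w_{<}a$), this yields
\[
  |R \circ \alpha| = \infty \iff \alpha \text{ has infinitely many prefixes in } G \text{.}
\]

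Second, I would show that $G$ is regular and that a deterministic finite acceptor for it is computable. Consider the relation $u \sim v \iff \mathcal{E}_u = \mathcal{E}_v$ on $\Sigma^*$. By \autoref{prop:EwCompatible} this is a right-congruence, and by \autoref{prop:OrbitIncreaseDependsOnEw} (applied with the one-letter suffix $x = a$) membership of $wa$ in $G$ depends only on the class of $w$ and the letter $a$. Crucially, $\sim$ has finite index: by \autoref{prop:AwRecEw} each $\mathcal{E}_w$ is the relation recognized by the NFRA $\mathcal{A}_w$, whose size is bounded by $(K + 1)N(|S|N + 1)$ uniformly in $w$ (where $N$ is the number of classes of $R$), so only finitely many distinct relations $\mathcal{E}_w$ occur. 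Hence $G$ is regular. For effectivity I would note that membership $w \in G$ is directly decidable, since the finite sets $R \circ w$ and $R \circ wa$ are computable from the reachable part of the finite transducer $\mathcal{T} \circ w \times R$, and that the number of $\sim$-classes, and thus the Myhill--Nerode index of $G$, is at most a computable constant. With a decidable membership test and an a priori bound on the index, the minimal deterministic acceptor $D$ for $G$ is computable by the standard Myhill--Nerode procedure (distinguishing candidate states by suffixes up to the index bound).

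Finally, I would turn $D$ into the desired deterministic Büchi acceptor: run $D$ on the input $\omega$-word and mark a transition as accepting exactly when it enters a final state of $D$. An initial run then uses an accepting transition infinitely often if and only if $\alpha$ has infinitely many prefixes in $G$, which by the first step is equivalent to $|R \circ \alpha| = \infty$. Since $D$ is deterministic, so is the resulting acceptor, and it is computable from $\mathcal{T}$, $R$ and $S$ as required.

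I expect the main obstacle to be the effectivity of the second step rather than the correctness of the reduction. The finiteness of the state space is conceptually clean once one invokes the uniformly bounded NFRA $\mathcal{A}_w$, but making this algorithmic requires care: one must either decide equality of the (a priori infinite) expansion relations $\mathcal{E}_w$ directly, or circumvent this — as proposed above — by coupling a decidable membership test for $G$ with an explicit computable bound on the index. Pinning down a concrete such bound (in terms of the activity constant $K$, of $|S|$, and of the number of classes of $R$) and checking that the standard minimization terminates within it is the technically delicate point; the remaining arguments are routine given \autoref{prop:EwAndExpandability}, \autoref{prop:OrbitIncreaseDependsOnEw}, \autoref{prop:EwCompatible} and \autoref{prop:AwRecEw}.
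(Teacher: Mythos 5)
Your argument is, in its mathematical core, the same as the paper's: it rests on exactly the same pillars (\autoref{prop:EwAndExpandability}, \autoref{prop:OrbitIncreaseDependsOnEw}, \autoref{prop:EwCompatible}, \autoref{prop:AwRecEw}), on the same finite-index right congruence $u \sim v \iff \mathcal{E}_u = \mathcal{E}_v$ (finite because only finitely many relations are recognizable by NFRAs of uniformly bounded size), and on the same characterization of infinite $R$-orbits via infinitely many expanding prefixes, which the paper leaves implicit. Where you genuinely diverge is the endgame, and that is where there is a gap: your effectivity argument requires an \emph{explicit computable bound} on the Myhill--Nerode index of $G$ in order to run minimization-by-distinguishing-suffixes, and your bound is phrased \enquote{in terms of the activity constant $K$}. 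But $K$ is not part of the input --- the theorem asks for computability from $\mathcal{T}$, $R$ and $S$ alone --- and nothing in your proposal shows how to compute $K$ (or any upper bound for it) from these data. You flag this as the \enquote{technically delicate point} but leave it unresolved, so the \enquote{Furthermore} half of the theorem is not proved as written. (A second, minor slip: the index of $G$ is not bounded by the number of $\sim$-classes, since $u \sim v$ does not determine whether $u$ itself lies in $G$ --- the Myhill--Nerode relation includes the empty suffix; you need roughly twice as many classes. This affects only the constant, not finiteness.)

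The gap is repairable in two ways. One is to prove that $K$ is in fact computable: the set of $S$-active words is effectively regular (the NFA construction in the preliminaries is algorithmic), and for a regular language one can decide whether its growth is bounded and, if so, compute the bound; but this needs to be said and argued. The other --- which is what the paper does --- is to avoid any a priori bound by a saturation construction: take the NFRAs $\mathcal{A}_w$ of \autoref{def:Aw} themselves as the states of the Büchi acceptor, start with $\mathcal{A}_\varepsilon$, compute $\mathcal{A}_{wa}$ for each missing letter $a$, and identify it with a previously constructed state whenever the two NFRAs coincide as abstract edge-labelled graphs with initial state and acceptance relation; a transition is marked accepting by directly computing whether $|R \circ w| < |R \circ wa|$ from the finite product $\mathcal{T} \circ wa \times R$. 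Termination of this procedure is \emph{guaranteed} by the uniform size bound, but the algorithm never needs to know the bound, which is precisely what lets the paper conclude computability from $\mathcal{T}$, $R$ and $S$ without ever touching $K$.
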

      \begin{proof}
        Using \autoref{def:Aw} and \autoref{prop:AwRecEw}, we can compute an NFRA $\mathcal{A}_w$ recognizing $\mathcal{E}_w$ with at most $(K + 1)N(|S|N + 1)$ states where $N$ is the number of classes of $R$ from $\mathcal{T} \circ w$.
        In particular, the set $Y = \{ \mathcal{A}_w \mid w \in \Sigma^* \}$ is finite and we use it as the state set of the sought deterministic Büchi acceptor. To compute $Y$ and the transitions of the BA, we use an iterative process:
        
        We start with $\mathcal{A}_\varepsilon$ as the initial state.
        As long as a state $\mathcal{A}_w$ has no outgoing transitions, we compute $\mathcal{A}_{wa}$ from $\mathcal{T} \circ wa$ for all $a \in \Sigma$.
        For each $a \in \Sigma$, there are two possible situations:
        The NFRA $\mathcal{A}_{wa}$ can be isomorphic to an existing state $\mathcal{A}_{w'}$ or not (where isomorphism is to be understood as that of abstract edge-labeled graphs with a marked initial state and a binary (acceptance) relation over the nodes).
        If it is not isomorphic to some $\mathcal{A}_{w'}$ (among the states constructed so far), we add $\mathcal{A}_{wa}$ as a new state together with the transition $\transa{\mathcal{A}_w}{a}{\mathcal{A}_{wa}}$.
        If it is isomorphic to a pre-existing state $\mathcal{A}_{w'}$, we add the transition $\transa{\mathcal{A}_w}{a}{\mathcal{A}_{w'}}$.
        In both cases, we mark the new transition as accepting if we have $|R \circ w| < |R \circ wa|$.
        
        This process of adding new states and transitions has to terminate because there are only finitely many possible NFRAs $\mathcal{A}_w$ and the resulting BA is deterministic.
        
        We do not necessarily end up in $\mathcal{A}_w$ after reading $w \in \Sigma^*$ from the initial state $\mathcal{A}_\varepsilon$.
        However, we may show (using an induction on the length of $w$) that the NFRA $\mathcal{A}_{w'}$ that we reach by reading $w$ from $\mathcal{A}_\varepsilon$ recognizes the expansion relation $\mathcal{E}_w$:
        
        This is clear for $w = \varepsilon$.
        For the inductive step, let $w = ua$ with $a \in \Sigma$.
        Let $\mathcal{A}_{u'}$ be the state we reach after reading $u$ from $\mathcal{A}_\varepsilon$.
        By its definition, $\mathcal{A}_{u'}$ recognizes the relation $\mathcal{E}_{u'}$ but, by induction, it must also recognize $\mathcal{E}_u$.
        Thus, we must have $\mathcal{E}_{u'} = \mathcal{E}_{u}$ and, by \autoref{prop:EwCompatible}, also $\mathcal{E}_{u'a} = \mathcal{E}_{ua}$, which is recognized by $\mathcal{A}_{u'a}$.
        When adding the out-going transitions for the state $\mathcal{A}_{u'}$, we distinguished whether $\mathcal{A}_{u'a}$ was isomorphic to a pre-existing state or not.
        If it was not, we added $\mathcal{A}_{u'a}$ as a new state together with the transition $\transa{\mathcal{A}_{u'}}{a}{\mathcal{A}_{u'a}}$.
        Thus, in this case, the state we reach after reading $ua$ from $\mathcal{A}_\varepsilon$ is $\mathcal{A}_{u'a}$, which indeed recognizes $\mathcal{E}_{ua}$ (as stated above).
        If $\mathcal{A}_{u'a}$ was isomorphic to some pre-existing state $\mathcal{A}_{w'}$, we added the transition $\transa{\mathcal{A}_{u'}}{a}{\mathcal{A}_{w'}}$ instead.
        Since $\mathcal{A}_{w'}$ is isomorphic to $\mathcal{A}_{u'a}$, it recognizes the same relation and we indeed have $\mathcal{E}_{ua} = \mathcal{E}_{u'a} = \mathcal{E}_{w'}$.
        
        Recall that the transition $\transa{\mathcal{A}_{u'}}{a}{\mathcal{A}_{u'a}}$ or $\transa{\mathcal{A}_{u'}}{a}{\mathcal{A}_{w'}}$, respectively, was accepting if $|R \circ u'| < |R \circ u'a|$.
        Since we have $\mathcal{E}_{u'} = \mathcal{E}_u$ (as just shown), this is the case if and only if $|R \circ u| < |R \circ ua|$ by \autoref{prop:OrbitIncreaseDependsOnEw}.
        Thus, whenever we pass an accepting transition in the Büchi acceptor, the orbit of the word we are reading has indeed increased.
        This means that its recognized $\omega$-language consists exactly of the $\omega$-words with an infinite $R$-orbit.
      \end{proof}
      \begin{remark}
        Using a straight-forward guess and check approach, one can see that the equivalence problem for NFRAs
        \problem{
          NFRAs $\mathcal{A}_1 = (Z_1, \Gamma, \tau_1, z_{0, 1}, \mathcal{F}_1)$ and $\mathcal{A}_2 = (Z_2, \Gamma, \tau_2, z_{0, 2}, \mathcal{F}_2)$
        }{
          is $\mathscr{R}(\mathcal{A}_1) = \mathscr{R}(\mathcal{A}_2)$?
        }\noindent
        is certainly in $\ComplexityClass{PSpace}$ (see e.\,g.\ \cite{papadimitriou97computational} or \cite{waechter2024word} for more information on complexity theory).
        Therefore, we can replace the check whether an isomorphic NFRA already exists as a state by a check whether an equivalent NFRA already exists in the construction of the Büchi accepter above.
        This yields a potentially smaller acceptor.
      \end{remark}
      
      For a finite word $v \in \Sigma^+$, we write $v^\omega = v v \dots$ for the $\omega$-word arising from $v$ by concatenating it with itself infinitely often.
      An $\omega$-word is \emph{ultimately periodic} if it is of the form $uv^\omega$ for $u \in \Sigma^*$ and $v \in \Sigma^+$.
      Any non-empty $\omega$-regular language contains an ultimately periodic word (see e.\,g.\ \cite{perrin2004infinite}).
      We immediately obtain from \autoref{thm:InfiniteOrbitIsRegular} the following specialization of \cite[Corollary~3.3]{orbitsPart} for automaton semigroups of bounded $S$-activity.
      The more general result \cite[Corollary~3.3]{orbitsPart} states that a general automaton semigroup is infinite if and only if it admits a(n infinite) word with an infinite orbit.
      \begin{corollary}
        The semigroup $\mathscr{S}(\mathcal{T})$ generated by the automaton $\mathcal{T} = (Q, \Sigma, \delta)$ of bounded $S$-activity is infinite if and only if there are $u \in \Sigma^*$ and $v \in \Sigma^+$ with $Q^* \circ uv^\omega$ (i.\,e.\ if it admits an ultimately periodic word with an infinite orbit).
      \end{corollary}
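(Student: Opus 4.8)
The plan is to instantiate \autoref{thm:InfiniteOrbitIsRegular} with the regular language $R = Q^*$ (which has a single Myhill--Nerode class and is thus trivially regular). This immediately yields that the language
\[
  L = \{ \alpha \in \Sigma^\omega \mid |Q^* \circ \alpha| = \infty \}
\]
of $\omega$-words with an infinite orbit is recognized by a deterministic Büchi acceptor and, in particular, is $\omega$-regular. With this in hand, the statement reduces to combining the $\omega$-regularity of $L$ with the two facts recalled just before the corollary, namely that a general automaton semigroup is infinite if and only if it admits an $\omega$-word with an infinite orbit (\cite[Corollary~3.3]{orbitsPart}) and that any non-empty $\omega$-regular language contains an ultimately periodic word.

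First I would handle the forward direction. If $\mathscr{S}(\mathcal{T})$ is infinite, then by \cite[Corollary~3.3]{orbitsPart} it admits some $\omega$-word with an infinite orbit, i.\,e.\ $L \neq \emptyset$. Since $L$ is $\omega$-regular and non-empty, it contains an ultimately periodic word, i.\,e.\ a word of the form $uv^\omega$ with $u \in \Sigma^*$ and $v \in \Sigma^+$. By the definition of $L$, this word satisfies $|Q^* \circ uv^\omega| = \infty$, which is exactly the desired witness.

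For the converse direction, suppose there are $u \in \Sigma^*$ and $v \in \Sigma^+$ with $|Q^* \circ uv^\omega| = \infty$. Then $uv^\omega$ is in particular an $\omega$-word with an infinite orbit, so \cite[Corollary~3.3]{orbitsPart} again gives that $\mathscr{S}(\mathcal{T})$ is infinite.

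I do not expect a genuine obstacle here, since the entire content has been front-loaded into \autoref{thm:InfiniteOrbitIsRegular}, whose proof rests on the expansion relation and the uniformly bounded NFRAs $\mathcal{A}_w$. The only point worth flagging is that the argument crucially uses the \emph{bounded} $S$-activity hypothesis: this is precisely what supplies the constant $K$ bounding the sizes of the acceptors in \autoref{thm:InfiniteOrbitIsRegular}, and hence the $\omega$-regularity of $L$. For general (say, exponential) activity, neither the theorem nor the passage from a non-empty orbit language to an ultimately periodic witness would be available.
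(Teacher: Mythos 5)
Your proposal is correct and follows essentially the same route as the paper, which states this corollary without a separate proof precisely because it is the immediate combination of \autoref{thm:InfiniteOrbitIsRegular} instantiated with the regular language $R = Q^*$, the cited equivalence from \cite[Corollary~3.3]{orbitsPart} between infiniteness of the semigroup and the existence of an $\omega$-word with infinite orbit, and the standard fact that every non-empty $\omega$-regular language contains an ultimately periodic word. Your additional remark that bounded $S$-activity is what makes \autoref{thm:InfiniteOrbitIsRegular} (and hence the ultimately periodic witness) available is accurate and consistent with the paper's framing.
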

    \end{subsection}
    \begin{subsection}{The Finiteness Problem for Bounded Automaton Semigroups}
      Using \autoref{thm:InfiniteOrbitIsRegular}, we obtain the main result of this paper:
      \begin{theorem}
        The problem
        \problem{
          a complete \SAut $\mathcal{T} = (Q, \Sigma, \delta)$,\newline
          a regular, suffix-closed language $R \subseteq Q^*$ and\newline
          a closed subset $S \subseteq Q$
          such that $S$ generates a finite subsemigroup\footnote{Recall that requiring $S \subseteq Q$ is not a restriction by \autoref{rem:SIsSubsetOfStates}.} and $\mathcal{T}$ is of bounded $S$-activity
        }{
          is the image of $R$ in $\mathscr{S}(\mathcal{T})$ finite?
        }\noindent
        is decidable.
      \end{theorem}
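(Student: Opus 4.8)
The plan is to reduce the finiteness question to a nonemptiness test for the deterministic Büchi acceptor supplied by \autoref{thm:InfiniteOrbitIsRegular}. Concretely, since $R$ is regular, $S$ generates a finite subsemigroup and $\mathcal{T}$ is of bounded $S$-activity, I would first invoke \autoref{thm:InfiniteOrbitIsRegular} to compute a deterministic Büchi acceptor $\mathcal{B}$ recognizing $L = \{ \alpha \in \Sigma^\omega \mid |R \circ \alpha| = \infty \}$. Emptiness of an $\omega$-regular language is decidable (one searches the finite acceptor for a cycle through an accepting transition that is reachable from the initial state), so I can decide whether $L = \emptyset$. The algorithm then outputs \enquote{finite} precisely when $L = \emptyset$. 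The entire correctness of this procedure rests on the single equivalence
\[
  \text{the image of } R \text{ in } \mathscr{S}(\mathcal{T}) \text{ is finite} \iff L = \emptyset \text{,}
\]
and the remaining work is to establish this equivalence.

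One implication is routine. If $L \neq \emptyset$, pick $\alpha \in L$; then there are infinitely many distinct words among $\{ \bm{r} \circ \alpha \mid \bm{r} \in R \}$, and since $\bm{r}_1 =_{\mathcal{T}} \bm{r}_2$ forces $\bm{r}_1 \circ \alpha = \bm{r}_2 \circ \alpha$, these correspond to infinitely many distinct classes in $\mathscr{S}(\mathcal{T})$, so the image of $R$ is infinite. For the converse it is convenient to first record a fact about $\mathcal{B}$ itself: if the image of $R$ is finite, say of size $M$, then $|R \circ w| \leq M$ for every finite $w \in \Sigma^*$ (distinct values $\bm{r} \circ w$ again witness distinct classes), so the orbit sizes are uniformly bounded over all finite words. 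Because $\mathcal{B}$ is a \emph{finite} acceptor whose accepting transitions are exactly the expanding steps (those with $|R \circ w| < |R \circ wa|$), a uniform bound on $|R \circ w|$ is equivalent to the absence of a reachable accepting transition on a cycle, i.e.\ to $L = \emptyset$. Hence the image being finite already implies $L = \emptyset$.

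The main obstacle is the opposite direction: $L = \emptyset \implies$ the image of $R$ is finite, equivalently, \emph{if the image of $R$ is infinite then some $\omega$-word has an infinite $R$-orbit}. This is exactly the suffix-closed, sub-orbit generalization of the correspondence between finiteness of an automaton semigroup and the existence of an infinite orbit established by Francoeur and the current authors \cite{orbitsPart} (the corollary preceding this theorem is the special case $R = Q^*$). The difficulty is genuinely non-trivial: purely set-theoretically, infinitely many distinct elements of the image could be separated only by pairwise incomparable input words, with every single branch $\alpha$ yielding a bounded orbit $|R \circ \alpha|$, so that no intersection-of-finite-index-equivalences argument suffices. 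This is where the hypotheses are used. Suffix-closedness of $R$ guarantees that every suffix of a witnessing state sequence $\bm{r} \in R$ again lies in $R$, which lets one realign the growing separations along a single branch, while bounded $S$-activity prevents the separations from \enquote{escaping} into incomparable cones of the tree; a König's-lemma/compactness argument carried out on the finite product automaton $\mathcal{T} \circ w \times R$ then extracts a single $\alpha \in \Sigma^\omega$ along which the orbit grows without bound, i.e.\ $\alpha \in L$. Granting this implication, the displayed equivalence holds and the procedure is correct, which proves decidability. Finally, specializing to $R = P^+$ for a finite $P \subseteq Q$ (which is regular and suffix-closed, with image the subsemigroup generated by $P$) yields decidability of the finiteness problem for finitely generated subsemigroups of $\mathscr{S}(\mathcal{T})$.
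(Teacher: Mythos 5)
Your proposal is correct and takes essentially the same approach as the paper: compute the deterministic Büchi acceptor of \autoref{thm:InfiniteOrbitIsRegular}, decide its emptiness, and justify correctness via the equivalence \enquote{the image of $R$ in $\mathscr{S}(\mathcal{T})$ is infinite $\iff$ some $\omega$-word has an infinite $R$-orbit}. The only difference is that the paper disposes of your \enquote{main obstacle} by directly citing \cite[Theorem~3.2]{orbitsPart}, which is precisely this equivalence for suffix-closed $R$ and holds for \emph{all} complete automaton semigroups, so your hedged König's-lemma sketch is unnecessary, and its suggestion that bounded $S$-activity is what makes that implication work is inaccurate --- bounded activity is needed only for the Büchi construction itself.
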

      \begin{proof}
        Since $R$ is suffix-closed, the subset of $\mathscr{S}(\mathcal{T})$ induced by projecting $R$ into the semigroup is infinite if and only if there is an $\omega$-word with an infinite $R$-orbit \cite[Theorem 3.2]{orbitsPart}.\footnote{This is precisely the statement of the referenced theorem.} Thus, the stated decision problem is equivalent to testing whether such an $\omega$-word exists. Since the language of such words is effectively deterministic $\omega$-regular by \autoref{thm:InfiniteOrbitIsRegular}, this problem is decidable (as the emptiness problem for (deterministic) Büchi acceptors is decidable, see e.\,g.\ \cite[Proposition 10.11]{perrin2004infinite}).
      \end{proof}
      
      An immediate consequence of this result is that the finiteness problem for complete automaton semigroups of bounded $S$-activity is decidable.
      \begin{corollary}
        The finiteness problem for \SAuta of bounded activity
        \problem{
          a complete \SAut $\mathcal{T} = (Q, \Sigma, \delta)$ and\newline
          a closed subset $S \subseteq Q$
          such that $S$ generates a finite subsemigroup and $\mathcal{T}$ is of bounded $S$-activity
        }{
          is $\mathscr{S}(\mathcal{T})$ finite?
        }\noindent
        is decidable.
      \end{corollary}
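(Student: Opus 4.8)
The plan is to obtain this corollary as a direct specialization of the preceding theorem by instantiating the regular, suffix-closed language $R$ appropriately. Concretely, I would take $R = Q^*$. This choice is regular, since it is recognized by a single-state acceptor carrying a self-loop for every letter of $Q$ (equivalently, it has exactly one Myhill--Nerode class), and it is suffix-closed, since every suffix of a word over $Q$ is again a word over $Q$. Hence $R = Q^*$ is an admissible input for the theorem.

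Next, I would verify that the image of $R = Q^*$ in $\mathscr{S}(\mathcal{T})$ coincides with the entire semigroup. By construction, $\mathscr{S}(\mathcal{T}) = Q^+/{=_{\mathcal{T}}}$, so every element of $\mathscr{S}(\mathcal{T})$ is represented by some non-empty word over $Q$, and $Q^+ \subseteq Q^*$. Consequently, the image of $Q^*$ in $\mathscr{S}(\mathcal{T})$ is all of $\mathscr{S}(\mathcal{T})$ (the additional empty word contributes no further semigroup element). In particular, the image of $R$ in $\mathscr{S}(\mathcal{T})$ is finite if and only if $\mathscr{S}(\mathcal{T})$ itself is finite.

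Finally, since the instance consisting of $\mathcal{T}$, the language $R = Q^*$, and the subset $S$ satisfies all hypotheses of the theorem --- the conditions on $S$ (being closed, generating a finite subsemigroup, and $\mathcal{T}$ having bounded $S$-activity) are exactly those assumed in the corollary, while $R = Q^*$ is regular and suffix-closed by the first paragraph --- I may invoke the theorem to decide whether the image of $R$ in $\mathscr{S}(\mathcal{T})$ is finite. By the second paragraph, this decision is precisely the decision whether $\mathscr{S}(\mathcal{T})$ is finite, which yields the claim. There is essentially no obstacle here: the entire content is the elementary observation that $Q^*$ is regular and suffix-closed and that its image exhausts $\mathscr{S}(\mathcal{T})$, so the corollary reduces to a single application of the theorem.
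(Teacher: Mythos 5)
Your proposal is correct and is precisely the argument the paper intends: the corollary is stated as an immediate consequence of the preceding theorem, obtained by instantiating $R = Q^*$ (regular, suffix-closed, with image all of $\mathscr{S}(\mathcal{T})$). You have simply spelled out the specialization that the paper leaves implicit, including the harmless point about the empty word.
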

      
      Additionally, we also get that the finiteness problem for finitely generated subsemigroups of complete automaton semigroups of bounded $S$-activity is decidable. This in particular includes the (uniform) order problem for complete automaton semigroups of bounded $S$-activity (compare to \cite{bartholdi2020hierarchy}).
      \begin{corollary}
        The subsemigroup finiteness problem for \SAuta of bounded activity
        \problem{
          a complete \SAut $\mathcal{T} = (Q, \Sigma, \delta)$,\newline
          a finite set $\bm{R} \subseteq Q^*$ and\newline
          a closed subset $S \subseteq Q$
          such that $S$ generates a finite subsemigroup and $\mathcal{T}$ is of bounded $S$-activity
        }{
          is the subsemigroup of $\mathscr{S}(\mathcal{T})$ generated by $\bm{R}$ finite?
          }\noindent
        is decidable.
      \end{corollary}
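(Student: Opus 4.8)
The plan is to reduce this problem to the preceding theorem (decidability of finiteness of the image of a regular, \emph{suffix-closed} language in $\mathscr{S}(\mathcal{T})$), whose only gap from our present situation is that the finite set $\bm{R}$ generates the subsemigroup through arbitrary products, so the natural candidate language $\bm{R}^+$ need not be suffix-closed. The key idea bridging this gap is to force every generator in $\bm{R}$ to act as a single state of the automaton: once the generators are single letters, the language they generate under concatenation becomes suffix-closed automatically.

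Concretely, I would first write $\bm{R} = \{ \bm{r}_1, \dots, \bm{r}_m \}$ with $\bm{r}_i \in Q^{\ell_i}$ and form the (computable) union automaton $\mathcal{T}' = \mathcal{T} \cup \bigcup_{i=1}^m \mathcal{T}^{\ell_i}$. By the properties of power and union automata recorded in the preliminaries, $\mathscr{S}(\mathcal{T}') = \mathscr{S}(\mathcal{T})$, and each $\bm{r}_i$ occurs as a single state $\widehat{r}_i$ of $\mathcal{T}'$ (namely as the state $\bm{r}_i \in Q^{\ell_i}$ of $\mathcal{T}^{\ell_i}$) acting exactly as the state sequence $\bm{r}_i$. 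Writing $\widehat{\bm{R}} = \{ \widehat{r}_1, \dots, \widehat{r}_m \}$ for the set of these single states, the image of the language $R' = \widehat{\bm{R}}^*$ in $\mathscr{S}(\mathcal{T}') = \mathscr{S}(\mathcal{T})$ is precisely the subsemigroup generated by $\bm{R}$. Hence the subsemigroup is finite if and only if this image is finite.

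Next I would verify that the triple $(\mathcal{T}', R', S)$ satisfies all the hypotheses of the preceding theorem. The language $R' = \widehat{\bm{R}}^*$ is regular (a star of finitely many single letters) and suffix-closed, since every suffix of a word over the alphabet $\widehat{\bm{R}}$ is again such a word. The subset $S \subseteq Q \subseteq Q'$ remains closed in $\mathcal{T}'$: its elements are unchanged single states of $Q$, so their dual action in $\mathcal{T}'$ coincides with that in $\mathcal{T}$ and therefore stays inside $S$; likewise $S$ still generates the same finite subsemigroup. Finally, $\mathcal{T}'$ has bounded $S$-activity: every state of $\mathcal{T}'$ is, as a single state of a power, a state sequence over $Q$, and by \autoref{fct:subadditive} the $S$-activity of a state sequence of length $\ell$ is at most $\ell$ times the constant $K$ bounding the $S$-activity of $\mathcal{T}$; since $\mathcal{T}'$ has only finitely many states, each of bounded length, its $S$-activity is bounded. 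Applying the preceding theorem to $(\mathcal{T}', R', S)$ then settles decidability.

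The only genuinely delicate point is this first step: the suffix-closedness required by the reference theorem fails for the obvious language $\bm{R}^+$, because a suffix of a concatenation $\bm{r}_{i_1} \cdots \bm{r}_{i_k}$ may cut into the interior of some block $\bm{r}_{i_j}$ and thereby leave $\bm{R}^+$. Collapsing each generator to a single state is exactly what makes suffix boundaries coincide with generator boundaries, and the remaining verifications---that this collapse preserves $\mathscr{S}(\mathcal{T})$, the closedness of $S$, the finiteness of the subsemigroup it generates, and bounded $S$-activity---are routine consequences of the facts already established.
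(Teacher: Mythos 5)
Your proof is correct and follows essentially the route the paper intends: the corollary is stated as a direct consequence of the preceding theorem, and the paper's preliminaries (computability of union and power automata, allowing any fixed state sequence to act as a single state) are set up precisely so that the generators in $\bm{R}$ can be collapsed to single states, making the generated language suffix-closed and the bounded $S$-activity and closedness of $S$ carry over via \autoref{fct:subadditive}. Your explicit verification of these hypotheses is exactly the routine argument the paper leaves implicit.
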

      
      We also obtain some algorithmic consequences for the semigroup generated by the dual automaton. The \emph{dual} of the complete \SAut $\mathcal{T} = (Q, \Sigma, \delta)$ is the complete \SAut $\partial\mathcal{T} = (\Sigma, Q, \partial\delta)$ with
      \[
        \partial\delta = \{ \trans{a}{p}{q}{b} \mid \trans{p}{a}{b}{q} \in \delta \} \text{.}
      \]
      Concerning the dual, we have the following results regarding the existence of elements with (or without) torsion.
      \begin{corollary}
        The problems
        \problem{
          a complete \SAut $\mathcal{T} = (Q, \Sigma, \delta)$ and\newline
          a closed subset $S \subseteq Q$
          such that $S$ generates a finite subsemigroup and $\mathcal{T}$ is of bounded $S$-activity
        }{
          does $\mathscr{S}(\partial \mathcal{T})$ contain an element without torsion?
        }\noindent
        and
        \problem{
          a complete \SAut $\mathcal{T} = (Q, \Sigma, \delta)$ and\newline
          a closed subset $S \subseteq Q$
          such that $S$ generates a finite subsemigroup and $\mathcal{T}$ is of bounded $S$-activity
        }{
          is $\mathscr{S}(\partial \mathcal{T})$ torsion-free?
        }\noindent
        are decidable.
      \end{corollary}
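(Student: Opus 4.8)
The plan is to reduce both decision problems to the (decidable) question of which purely periodic $\omega$-words have an infinite orbit, exploiting the duality between $\mathscr{S}(\partial\mathcal{T})$ and the dual action of $\mathcal{T}$. Recall that a state sequence $w \in \Sigma^+$ of $\partial\mathcal{T}$ acts on $Q^*$ and that, by the defining correspondence $\partial\delta \leftrightarrow \delta$, reading an input $\bm{p} \in Q^*$ from the state $w$ of $\partial\mathcal{T}$ yields (up to reversing the involved words) the dual action $\bm{p} \cdot \widetilde{w}$. Consequently, for $w \in \Sigma^+$ we have $w^i =_{\partial\mathcal{T}} w^j$ if and only if $\bm{p} \cdot \widetilde{w}^{\,i} = \bm{p} \cdot \widetilde{w}^{\,j}$ for all $\bm{p} \in Q^*$; that is, the order of the image of $w$ in $\mathscr{S}(\partial\mathcal{T})$ is governed entirely by the iterated dual action of $\widetilde{w}$.

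The central step is the following equivalence, which I would isolate as a lemma: for $w \in \Sigma^+$, the image of $w$ in $\mathscr{S}(\partial\mathcal{T})$ is \emph{without} torsion if and only if $|Q^* \circ \widetilde{w}^{\,\omega}| = \infty$. For one direction, suppose the image has finite order, so $\bm{p} \cdot \widetilde{w}^{\,i} = \bm{p} \cdot \widetilde{w}^{\,j}$ for all $\bm{p}$ and some $i < j$; writing $v = \widetilde{w}$ and $\bm{p}' = \bm{p} \cdot v^i$ we get $\bm{p}' \cdot v^{j-i} = \bm{p}'$, whence $\bm{p} \circ v^\omega = (\bm{p} \circ v^i)(\bm{p}' \circ v^{j-i})^\omega$ is built from two blocks of \emph{fixed} length, so only finitely many such words occur and $Q^* \circ v^\omega$ is finite. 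For the converse, assume $|Q^* \circ v^\omega| = N < \infty$. Since equal orbit points have equal prefixes, $|Q^* \circ v^k| \le N$ for every $k$, so each rooted orbital transducer $\mathcal{T} \circ v^k$ (whose output function on input $\bm{p}$ is precisely $\bm{p} \cdot v^k$) has at most $N$ states over the fixed alphabet $Q$. As there are only finitely many isomorphism types of such rooted transducers, two of them, say for $v^{k_1}$ and $v^{k_2}$ with $k_1 < k_2$, coincide; transporting runs along the isomorphism gives $\bm{p} \cdot v^{k_1} = \bm{p} \cdot v^{k_2}$ for all $\bm{p}$, i.e.\ the image of $w$ has finite order. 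Note that no boundedness assumption enters here; it is only needed later to make the orbit question decidable.

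With the lemma in hand the two problems translate directly. Quantifying over all $w \in \Sigma^+$ the reversal is immaterial, so $\mathscr{S}(\partial\mathcal{T})$ \emph{contains} an element without torsion if and only if there is some $w \in \Sigma^+$ with $|Q^* \circ w^\omega| = \infty$, and $\mathscr{S}(\partial\mathcal{T})$ is torsion-free if and only if $|Q^* \circ w^\omega| = \infty$ for \emph{every} $w \in \Sigma^+$. By \autoref{thm:InfiniteOrbitIsRegular} (applied with the regular, suffix-closed language $R = Q^*$ and using that $\mathcal{T}$ has bounded $S$-activity) the language $L = \{ \alpha \in \Sigma^\omega \mid |Q^* \circ \alpha| = \infty \}$ is recognized by an effectively computable deterministic Büchi acceptor. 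The first problem thus asks whether $L$ contains some purely periodic word $w^\omega$, and the second whether $L$ contains \emph{all} purely periodic words, equivalently whether the (effectively $\omega$-regular) complement $\overline{L}$ contains no purely periodic word.

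It remains to decide, for an effectively given $\omega$-regular language, whether it contains some, respectively every, word of the form $w^\omega$. This is standard: whether $w^\omega$ is accepted depends only on the action of $w$ on the finitely many states of a deterministic acceptor together with the acceptance status of the cycle it eventually enters, i.e.\ on a single element of an enriched transition monoid; since this monoid is finite and computable, one checks all its reachable elements to settle the existential and the universal variant. The main obstacle is the key lemma, where the duality between the order of an element of $\mathscr{S}(\partial\mathcal{T})$ and the dual action of $\mathcal{T}$ on $Q^*$ must be set up carefully and the converse implication argued via the finiteness of the rooted orbital transducers $\mathcal{T} \circ v^k$; the reduction and the subsequent automata-theoretic decision steps are then comparatively routine.
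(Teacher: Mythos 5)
Your proposal is correct, and it reaches the same reduction target as the paper --- testing whether the language $L = \{ \alpha \in \Sigma^\omega \mid |Q^* \circ \alpha| = \infty \}$, which is effectively deterministic Büchi by \autoref{thm:InfiniteOrbitIsRegular} (applied with $R = Q^*$), contains some, respectively every, purely periodic word --- but it differs in how it justifies the bridge between torsion in $\mathscr{S}(\partial\mathcal{T})$ and orbits of periodic words. The paper simply cites \cite[Theorem~3.12]{orbitsPart} for the two existential equivalences (``some torsion element exists iff some $u^\omega$ has finite orbit; some torsion-free element exists iff some $u^\omega$ has infinite orbit'') and then asserts that the periodic-word tests are decidable. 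You instead prove a stronger, per-element statement from scratch: the image of $w \in \Sigma^+$ in $\mathscr{S}(\partial\mathcal{T})$ is torsion-free if and only if $|Q^* \circ \widetilde{w}^\omega| = \infty$. Both directions of your lemma are sound: from finite order you correctly extract, via the duality $w^i =_{\partial\mathcal{T}} w^j \iff \forall \bm{p} \in Q^*\colon \bm{p} \cdot \widetilde{w}^i = \bm{p} \cdot \widetilde{w}^j$ (the reversals are forced by the paper's left/right action conventions and become immaterial once one quantifies over all $w$), the factorization $\bm{p} \circ v^\omega = (\bm{p} \circ v^i)(\bm{p}' \circ v^{j-i})^\omega$ into blocks of fixed length, giving a finite orbit; conversely, finiteness of $|Q^* \circ v^\omega|$ bounds every $|Q^* \circ v^k|$ via the prefix map, and your pigeonhole over the finitely many isomorphism types of rooted orbital transducers yields $k_1 < k_2$ with $\bm{p} \cdot v^{k_1} = \bm{p} \cdot v^{k_2}$ for all $\bm{p}$, i.e.\ torsion. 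You also make explicit what the paper leaves as a bare assertion: acceptance of $w^\omega$ by a Büchi acceptor depends only on the image of $w$ in the finite, computable transition monoid enriched with an acceptance bit, so both the existential and the universal periodic-word questions are decidable. What your route buys is self-containedness (no reliance on the companion paper \cite{orbitsPart} for this corollary) together with a sharper, element-wise correspondence; what the paper's route buys is brevity, delegating exactly the dual-action analysis you redo to an already established result.
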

      \begin{proof}
        \cite[Theorem~3.12]{orbitsPart} yields that $\mathscr{S}(\partial \mathcal{T})$ contains an element of torsion if and only if there is a periodic word $u^\omega$ (with $u \in \Sigma^+$) whose orbit $Q^* \circ u^\omega$ is finite. Similarly, it also yields that $\mathscr{S}(\partial \mathcal{T})$ contains an element without torsion if and only if $Q^* \circ u^\omega$ is infinite for some $u \in \Sigma^+$. Therefore, the two stated problems boil down to testing whether an $\omega$-regular language (or its complement, which is also $\omega$-regular \cite{perrin2004infinite}) contains a periodic word. This, however, is decidable (using standard automaton theory techniques).
      \end{proof}
    \end{subsection}
  \end{section}
  
  \bibliographystyle{plain}
  \bibliography{references}

\begin{thebibliography}{10}

\bibitem{bartholdi2020hierarchy}
Laurent Bartholdi, {\relax Th}ibault Godin, Ines Klimann, Camille No\^{u}s, and
  Matthieu Picantin.
\newblock A new hierarchy for automaton semigroups.
\newblock {\em International Journal of Foundations of Computer Science},
  31(08):1069--1089, 2020.

\bibitem{bartholdi2017word}
Laurent Bartholdi and Ivan Mitrofanov.
\newblock The word and order problems for self-similar and automata groups.
\newblock {\em Groups, Geometry, and Dynamics}, 14:705--728, 2020.

\bibitem{bartholdi2021groups}
Laurent Bartholdi and Pedro Silva.
\newblock Groups defined by automata.
\newblock In Jean-Éric Pin, editor, {\em Handbook of Automata Theory},
  volume~II, chapter~24, pages 871--911. European Mathematical Society, 09
  2021.

\bibitem{bondarenko21orbits}
Ievgen Bondarenko and Jan~Philipp W\"{a}chter.
\newblock On orbits and the finiteness of bounded automaton groups.
\newblock {\em International Journal of Algebra and Computation},
  31(06):1177--1190, 2021.

\bibitem{bondarenko2013conjugacy}
Ievgen~V. Bondarenko, Natalia~V. Bondarenko, Said~N. Sidki, and Flavia~R.
  Zapata.
\newblock On the conjugacy problem for finite-state automorphisms of regular
  rooted trees.
\newblock {\em Groups, Geometry, and Dynamics}, 7:232--355, 2013.

\bibitem{bondarenko2003postCritically}
Ievgen~V. Bondarenko and Volodymyr~V. Nekrashevych.
\newblock Post-critically finite self-similar groups.
\newblock {\em Algebra and Discrete Mathematics}, 2(4):21--32, 2003.

\bibitem{brough2015automaton}
Tara Brough and Alan~J. Cain.
\newblock Automaton semigroup constructions.
\newblock {\em Semigroup Forum}, 90(3):763--774, 2015.

\bibitem{brough2017automaton}
Tara Brough and Alan~J. Cain.
\newblock Automaton semigroups: New constructions results and examples of
  non-automaton semigroups.
\newblock {\em Theoretical Computer Science}, 674:1--15, 2017.

\bibitem{cain2009automaton}
Alan~J. Cain.
\newblock Automaton semigroups.
\newblock {\em Theoretical Computer Science}, 410(47):5022 -- 5038, 2009.

\bibitem{orbitsPart}
Daniele D'Angeli, Dominik Francoeur, Emanuele Rodaro, and Jan~Philipp Wächter.
\newblock Infinite automaton semigroups and groups have infinite orbits.
\newblock {\em Journal of Algebra}, 553:119 -- 137, 2020.

\bibitem{decidabilityPart}
Daniele D'Angeli, Emanuele Rodaro, and Jan~Philipp W{\"a}chter.
\newblock Automaton semigroups and groups: On the undecidability of problems
  related to freeness and finiteness.
\newblock {\em Israel Journal of Mathematics}, 237:15--52, 2020.

\bibitem{expandabilityPart}
Daniele D'Angeli, Emanuele Rodaro, and Jan~Philipp W{\"a}chter.
\newblock Orbit expandability of automaton semigroups and groups.
\newblock {\em Theoretical Computer Science}, 809:418 -- 429, 2020.

\bibitem{dangeli2024freeness}
Daniele D'Angeli, Emanuele Rodaro, and Jan~Philipp W\"{a}chter.
\newblock The freeness problem for automaton semigroups.
\newblock In Rastislav Kr\'{a}lovi\v{c} and Anton{\'\i}n Ku\v{c}era, editors,
  {\em 49th International Symposium on Mathematical Foundations of Computer
  Science (MFCS 2024)}, volume 306 of {\em Leibniz International Proceedings in
  Informatics (LIPIcs)}, pages 44:1--44:18, Dagstuhl, Germany, 2024. Schloss
  Dagstuhl -- Leibniz-Zentrum f{\"u}r Informatik.

\bibitem{dangeli2017complexity}
Daniele D'Angeli, Emanuele Rodaro, and Jan~Philipp Wächter.
\newblock On the complexity of the word problem for automaton semigroups and
  automaton groups.
\newblock {\em Advances in Applied Mathematics}, 90:160--187, 2017.

\bibitem{structurePart}
Daniele D'Angeli, Emanuele Rodaro, and Jan~Philipp Wächter.
\newblock On the structure theory of partial automaton semigroups.
\newblock {\em Semigroup Forum}, pages 51 -- 76, 2020.

\bibitem{edam}
Volker Diekert, Manfred Kuf\-leitner, Gerhard Rosenberger, and Ulrich
  Hertrampf.
\newblock {\em Discrete Algebraic Methods}.
\newblock De Gruyter, 2016.

\bibitem{gillibert2014finiteness}
Pierre Gillibert.
\newblock The finiteness problem for automaton semigroups is undecidable.
\newblock {\em International Journal of Algebra and Computation}, 24(01):1--9,
  2014.

\bibitem{gillibert2018automaton}
Pierre Gillibert.
\newblock An automaton group with undecidable order and {{E}}ngel problems.
\newblock {\em Journal of Algebra}, 497:363 -- 392, 2018.

\bibitem{grigorchuk1999system}
Rostislav~I. Grigorchuk.
\newblock {\em Groups St Andrews 1997 in Bath: Volume 1}, volume 260 of {\em
  London Mathematical Society Lecture Note Series}, chapter On the system of
  defining relations and the {S}chur multiplier of periodic groups generated by
  finite automata, pages 290--317.
\newblock Cambridge University Press, 1999.

\bibitem{grigorchuk2000automata}
Rostislav~I. Grigorchuk, Volodymyr~V. Nekrashevych, and Vitaly~I.
  Sush\-chansk\u{\i}i.
\newblock Automata, dynamical systems, and groups.
\newblock {\em Proceedings of the Steklov Institute of Mathematics},
  231:128--203, 2000.

\bibitem{grigorchuk2008groups}
Rostislav~I. Grigorchuk and Igor Pak.
\newblock Groups of intermediate growth: an introduction.
\newblock {\em L’Enseignement Mathématique}, 54:251--272, 2008.

\bibitem{hopcroft1979introduction}
John~E. Hopcroft and Jeffrey~D. Ullman.
\newblock {\em Introduction to Automata Theory, Languages and Computation}.
\newblock Addison-Wesley, 1979.

\bibitem{howie1995fundamentals}
John~M. Howie.
\newblock {\em Fundamentals of Semigroup Theory}.
\newblock London Mathematical Society Monographs. Clarendon Press, 1995.

\bibitem{juschenko2022amenability}
Kate Juschenko.
\newblock {\em Amenability of discrete groups by examples}.
\newblock American Mathematical Society, 2022.

\bibitem{klimann2016automaton}
Ines Klimann.
\newblock Automaton semigroups: The two-state case.
\newblock {\em Theory of Computing Systems}, 58:664--680, 2016.

\bibitem{klimann2012implementing}
Ines Klimann, Jean Mairesse, and Matthieu Picantin.
\newblock Implementing computations in automaton (semi)groups.
\newblock In Nelma Moreira and Rog{\'e}rio Reis, editors, {\em Implementation
  and Application of Automata}, pages 240--252, Berlin, Heidelberg, 2012.
  Springer Berlin Heidelberg.

\bibitem{kotowsky2023word}
Maximilian Kotowsky and Jan~Philipp W{\"a}chter.
\newblock The word problem for finitary automaton groups.
\newblock In Henning Bordihn, Nicholas Tran, and Gy{\"o}rgy Vaszil, editors,
  {\em Descriptional Complexity of Formal Systems}, pages 94--108, Cham, 2023.
  Springer Nature Switzerland.

\bibitem{linz2016introduction}
Peter Linz.
\newblock {\em An introduction to formal languages and automata, 6th Edition}.
\newblock Jones and Bartlett Publishers, 2016.

\bibitem{brough2023automaton}
Tara {Macalister Brough}, Jan~Philipp Wächter, and Janette Welker.
\newblock Preserving self-similarity in free products of semigroups.
\newblock {\em International Journal of Algebra and Computation},
  35(08):1091--1121, 2025.

\bibitem{nekrashevych2005self}
Volodymyr~V. Nekrashevych.
\newblock {\em Self-similar groups}, volume 117 of {\em Mathematical Surveys
  and Monographs}.
\newblock American Mathematical Society, Providence, RI, 2005.

\bibitem{papadimitriou97computational}
Christos~M. Papadimitriou.
\newblock {\em Computational Complexity}.
\newblock Addison-Wesley, 1994.

\bibitem{perrin2004infinite}
Dominique Perrin and Jean-{\'E}ric Pin.
\newblock {\em Infinite words}, volume 141 of {\em Pure and Applied
  Mathematics}.
\newblock Elsevier, Amsterdam, 2004.

\bibitem{picantin2019automatic}
Matthieu Picantin.
\newblock {Automatic Semigroups vs Automaton Semigroups}.
\newblock In Christel Baier, Ioannis Chatzigiannakis, Paola Flocchini, and
  Stefano Leonardi, editors, {\em 46th International Colloquium on Automata,
  Languages, and Programming (ICALP 2019)}, volume 132 of {\em Leibniz
  International Proceedings in Informatics (LIPIcs)}, pages 124:1--124:15,
  Dagstuhl, Germany, 2019. Schloss Dagstuhl -- Leibniz-Zentrum f{\"u}r
  Informatik.

\bibitem{sidki2000automorphisms}
Said~N. Sidki.
\newblock Automorphisms of one-rooted trees: growth, circuit structure, and
  acyclicity.
\newblock {\em Journal of Mathematical Sciences}, 100(1):1925--1943, 2000.

\bibitem{sunic2012conjugacy}
Zoran {\v{S}}uni{\'c} and Enric Ventura.
\newblock The conjugacy problem in automaton groups is not solvable.
\newblock {\em Journal of Algebra}, 364:148--154, 2012.

\bibitem{waechter2023automaton}
Jan~Philipp W{\"{a}}chter and Armin Wei{\ss}.
\newblock An automaton group with {{\ComplexityClass{{P}{Space}}}}-complete
  word problem.
\newblock {\em Theory of Computing Systems}, 67(1):178--218, 2023.

\bibitem{waechter2024word}
Jan~Philipp Wächter and Armin Weiß.
\newblock {\em Languages and Automata: GAGTA book 3}, chapter \enquote{The Word
  Problem for Automaton Groups}, pages 265--396.
\newblock DeGruyter, 2024.

\end{thebibliography}
\end{document}